\documentclass[a4paper,UKenglish,cleveref, autoref, thm-restate]{lipics-v2021}

\nolinenumbers
\hypersetup{colorlinks=true}

\title{Connected Dominating Set on Semi-Ladder-Free Graphs} 

\author{Sobyasachi Chatterjee}{The Institute of Mathematical Sciences, HBNI, Chennai, India}{sobyasachic@imsc.res.in}{https://orcid.org/0009-0002-5878-9975}{}
\author{Sushmita Gupta}{The Institute of Mathematical Sciences, HBNI, Chennai, India}{sushmitagupta@imsc.res.in}{https://orcid.org/0000-0003-1255-8266}{}
\author{Saket Saurabh}{The Institute of Mathematical Sciences, HBNI, Chennai, India \\ University of Bergen, Norway}{saket@imsc.res.in}{https://orcid.org/0000-0001-7847-6402}{}
\author{Sanjay Seetharaman}{The Institute of Mathematical Sciences, HBNI, Chennai, India}{sanjays@imsc.res.in}{https://orcid.org/0009-0001-1483-6138}{}
\author{Anannya Upasana}{The Institute of Mathematical Sciences, HBNI, Chennai, India}{anannyaupas@imsc.res.in}{https://orcid.org/0009-0002-6283-0846}{}

\authorrunning{S. Chatterjee et al.} 

\Copyright{Sobyasachi Chatterjee, Sushmita Gupta, Saket Saurabh, Sanjay Seetharaman, and Anannya Upasana} 

\ccsdesc[100]{Theory of computation~Parameterized complexity and exact algorithms} 

\keywords{d-semi-ladder-free graphs, Lossy kernel, Connected dominating set, Fixed parameter tractable} 

\category{} 

\relatedversion{} 

\funding{Sushmita Gupta acknowledges support from Anusandhan National Research Foundation (ANRF), grant number ANRF/ARGM/2025/003188/MTR.}

\EventEditors{John Q. Open and Joan R. Access}
\EventNoEds{2}
\EventLongTitle{42nd Conference on Very Important Topics (CVIT 2016)}
\EventShortTitle{CVIT 2016}
\EventAcronym{CVIT}
\EventYear{2016}
\EventDate{December 24--27, 2016}
\EventLocation{Little Whinging, United Kingdom}
\EventLogo{}
\SeriesVolume{42}
\ArticleNo{23}
\usepackage{verbatim}
\usepackage{setspace}
\usepackage[noend]{algpseudocode}
\usepackage{algorithm}
\usepackage[inline]{enumitem}

\newcommand{\app}{$\dagger$}

\def\version{compressed} 

\def\ShowComments{false}

\usepackage{amssymb,amstext,amsmath}

\usepackage[normalem]{ulem}
\usepackage{mathrsfs} 
\usepackage{xspace}
\usepackage{diagbox}
\usepackage{amsfonts}
\usepackage{pdfpages}
\usepackage{graphicx}

\usepackage{enumerate}
\usepackage{enumitem}
\usepackage{nicefrac}

\usepackage{comment}
\excludecomment{dontshow}

\usepackage[textsize=small]{todonotes}

\usepackage{mathtools}
\usepackage{ifthen} 
\usepackage{tabularx}
\usepackage{multirow}
\usepackage{booktabs}
\usepackage{array}
\usepackage{tcolorbox}

\usepackage{rotating}

\usepackage{multicol}
\usepackage{array}

\usepackage{colortbl}
\usepackage{tabulary}
\newcolumntype{K}[1]{>{\centering\arraybackslash}p{#1}}
\usepackage[dvipsnames, svgnames, x11names]{xcolor}
\usepackage{mdframed}
\usepackage{subcaption}
\usepackage{thm-restate}

\newcommand{\fpt}{{\sf FPT}\xspace}

\newcommand{\cF}{{\mathcal{F}}}
\newcommand{\cU}{\mathcal{U}}
\newcommand{\Oh}{\ensuremath{\mathcal{O}}\xspace}

\newcommand{\Co}[1]{\ensuremath{\mathcal{#1}}\xspace}
\newcommand{\X}[1]{\ensuremath{\mathscr{#1}}\xspace}

\newcommand{\yes}{{\sf YES}\xspace}
\newcommand{\no}{{\sf NO}\xspace}

\newcommand{\sse}{\subseteq}

\ifthenelse{\equal{\ShowComments}{true}}{
    \newcommand{\il}[1]{\todo[inline, color=yellow!40]{#1}}
    \newcommand{\ma}[1]{\todo[color=SpringGreen]{\small #1}}

    \newcommand{\Ma}[1]{\textcolor{magenta}{#1}}

}{
    \presetkeys{todonotes}{disable}{}

    \newcommand{\il}[1]{\hide{#1}}
    \newcommand{\ma}[1]{\hide{#1}}

    \newcommand{\Ma}[1]{#1}

}

\newcommand{\hide}[1]{}

\newcommand{\calG}{{\mathcal{G}}\xspace}

\newcommand{\defparprob}[4]{
\begin{tcolorbox}[colback=yellow!5!white,colframe=gray!75!black,left=2pt,            
    right=2pt,           
    top=2pt,             
    bottom=2pt           
    ]
  \begin{tabular*}{\textwidth}{@{\extracolsep{\fill}}lr} #1   \\ \end{tabular*}
  {\bf{Input:}} #2  \\
  {\bf{Parameter:}} #3  \\  
  {\bf{Question:}} #4
  \end{tcolorbox}
}

\newcommand{\W}{\textsf{W}}

\newcommand{\D}{\ensuremath{\mathscr{D}}}
\newcommand{\T}{\ensuremath{\mathbb{T}}}

\newcommand{\family}[1]{{$(#1_1,\ldots,#1_{\ell})$}}
\newcommand{\callfamily}[1]{{$(#1_1,\ldots,#1_{\ell},\U,\F)$}}
\newcommand{\fami}[1]{{(#1_1,\ldots,#1_{\ell})}}

\newcommand{\solve}{\textsc{Solve}}
\newcommand{\intmath}{\operatorname{int}}

\newcommand\stree{\textsc{Steiner Tree}\xspace}
\newcommand{\Csc}{\textsc{Conn-SC}\xspace}
\newcommand\Sc{\textsc{Set cover}\xspace}

\newcommand\dslfree{$d$-semi-ladder-free\xspace}
\newcommand\deslfree{$(d+2)$-semi-ladder-free\xspace}

\newcommand\cds{\textsc{Conn-DS}\xspace}

\newcommand\ds{\textsc{Dominating Set}\xspace}

\newcommand{\gst}{\textsc{Group Steiner Tree}\xspace}

\newcommand\U{\mathcal{U}}
\newcommand\F {\mathcal{F}}

\newcommand{\semiladderfree}{semi-ladder-free\xspace}

\newcommand{\contains}{encapsulates\xspace}

\newcommand{\containing}{encapsulating\xspace}
\newcommand{\contain}{encapsulate\xspace}

\newcommand{\C}{\mathscr{C}}
\newcommand{\G}{\mathscr{G}}
\newcommand{\hG}{{\hat{G}}}
\newcommand{\hg}{{\hat{g}}}

\newcommand{\blocks}{demi-cores}

\newcommand{\NP}{\mathsf{NP}}
\newcommand{\coNP}{\mathsf{coNP}}
\newcommand{\poly}{\mathrm{poly}}

\begin{document}

\maketitle
\begin{abstract}
We study \textsc{Connected Dominating Set} on graphs whose closed-neighborhood set systems are $d$-semi-ladder-free. This structural condition strictly generalizes the biclique-free setting and provides a natural regime for connectivity-constrained domination. We obtain both a fixed-parameter algorithm and an approximate kernelization framework for the problem on this class.

Our algorithmic result is based on a new compact representation theorem for inclusion-wise minimal set covers in $d$-semi-ladder-free set systems. Although the number of minimal set covers of size at most $k$ may be as large as $n^{\Omega(k)}$, we show that all such set covers can nevertheless be encoded by a family of at most $k^{kd+1}$ tuples, and that this family can be enumerated in time $\Oh(k^{kd+2}\cdot nm)$. Combining this representation with a \textsc{Group Steiner Tree} subroutine, we obtain an algorithm for \textsc{Connected Set Cover}, which in turn yields an algorithm for \textsc{Connected Dominating Set} running in time $k^{kd+2}\cdot 2^k \cdot n^{\Oh(1)}$ and polynomial space.

For the preprocessing result, we introduce grouped domination cores and dominator cores, and prove polynomial upper bounds on their sizes in $d$-semi-ladder-free graphs. Using these structures, we obtain, for every fixed $d$ and $\varepsilon>0$, a polynomial-time $(1+\varepsilon)$-lossy compression for \textsc{Connected Dominating Set} to an equivalent reduced instance of size $k^{\Oh(d^2/\varepsilon)}$. The reduced instance is a \textsc{Connected Dominating Set} instance on a $(d+2)$-semi-ladder-free graph.


\end{abstract}

\newpage

\section{Introduction}
\textsc{Dominating Set} is a central $\NP$-hard problem in graph algorithms, and its connected
variant, \textsc{Connected Dominating Set} (\cds), arises in numerous applications
\cite{DBLP:conf/esa/GuhaK96, DBLP:conf/dialm/WuL99, DBLP:journals/monet/WanAF04, DBLP:conf/esa/NiklanovitsSVZ25}.
In parameterized complexity, \textsc{Dominating Set} and many of its variants are
$\W[2]$-hard when parameterized by the solution size $k$. Nevertheless, domination problems have
played a central role in the development of algorithmic techniques for parameterized algorithms
and polynomial-time preprocessing on structurally restricted graph classes. Our goal in this paper is to study \cds on broader structurally restricted graph classes than those considered so far.

From the preprocessing viewpoint, however, \cds is substantially harder than its unconstrained
counterpart: in particular, it does not admit a polynomial kernel even on sparse graph classes
such as $2$-degenerate or bounded-expansion graphs, unless $\NP \subseteq \coNP/\poly$
\cite{DBLP:journals/siamdm/EibenKMPS19}. This makes \emph{lossy kernelization}, or approximate
kernelization, a natural framework for the problem. In this setting, one allows a controlled
approximation factor $\alpha > 1$ in exchange for polynomial-size compression
\cite{DBLP:conf/stoc/LokshtanovPRS17}. Recent work has shown that this relaxation can be powerful:
for every $\alpha > 1$, \cds admits an $\alpha$-approximate polynomial kernel on sparse graph
classes such as biclique-free and bounded-expansion graphs \cite{DBLP:journals/siamdm/EibenKMPS19}.
More broadly, lossy kernelization has also proved effective in implicit covering settings, where
even implicitly defined \textsc{Hitting Set} instances admit efficient approximate compressions
\cite{DBLP:conf/esa/FominLL0TZ23}.

Motivated by this perspective, we study covering and domination problems under structural
restrictions, focusing on \textsc{Set Cover} and \textsc{Dominating Set} on
\emph{semi-ladder-free} set systems and graphs. This notion was introduced by
\cite{DBLP:conf/stacs/FabianskiPST19}.
\Ma{Our notion of semi-ladder-free graphs is equivalent to that in \cite{DBLP:conf/stacs/FabianskiPST19}.
For a \textsc{Set Cover} instance, $k$ is the size of the solution, $n$ denotes the size of the universe, and $m$ denotes the number of sets.}
Informally, a set system is $d$-semi-ladder-free if its
incidence structure avoids a certain combinatorial pattern of size $d$. This captures a broad
range of sparse settings, including nowhere-dense graph classes when neighborhoods are viewed as
set systems. Building on this framework, Guillemot~\cite{DBLP:journals/toct/Guillemot25} showed
that \textsc{Set Cover} is fixed-parameter tractable on semi-ladder-free hypergraphs, with running
time $k^{\Oh(dk)} \cdot n^{\Oh(1)}$, and additionally admits polynomial kernelization. These
results suggest that semi-ladder-freeness provides strong structural leverage for covering
problems. Our goal is to understand whether this leverage extends to
\emph{connectivity-constrained} variants.

A recurring idea in parameterized algorithms is that large families of minimal solutions may admit
succinct descriptions. The idea of succinctly encoding large families of minimal solutions has appeared in several forms in parameterized algorithms.
Guo et al.~\cite{DBLP:journals/siamdm/GuoNS11} introduced compact representations in the context of structured enumeration of minimal solutions, and Misra et al.~\cite{DBLP:journals/jco/MisraPRSS12} later obtained improved bounds in the setting of connectivity-constrained problems.
Related approaches based on representative sets show that exponentially many solutions can sometimes be compressed into polynomial-sized structures~\cite{DBLP:conf/soda/FominLS14}.
We build on these ideas in the setting of semi-ladder-free set systems.


\begin{tcolorbox}[colback=red!2,colframe=red!35,boxrule=0.5pt,arc=2mm,left=1mm,right=1mm,top=0.8mm,bottom=0.8mm]
\textbf{Our contribution.} 
The preceding developments suggest a natural question: can one combine the
structural power of semi-ladder-freeness with the algorithmic ideas of compact
representations and lossy kernelization to obtain efficient algorithms for
connectivity-constrained problems? In this work, we answer this question in the
affirmative.
\end{tcolorbox}


For \textsc{Connected Dominating Set}, FPT algorithms are known for several sparse graph classes. In particular, Telle and Villanger~\cite{TelleV19} showed that \textsc{Connected Dominating Set} is FPT on \(K_{t,t}\)-free graphs, while Eiben et al.~\cite{DBLP:journals/siamdm/EibenKMPS19} studied lossy kernelization for \textsc{Connected Dominating Set} on nowhere dense and bounded expansion classes. The enumeration of minimal dominating sets has also been studied extensively, with output-polynomial algorithms known for bounded-treewidth, bounded-clique-width, interval, chordal, planar, degenerate, split, permutation, line, and chordal bipartite graphs~\cite{CouturierHKKK13,KanteLMN14,GolovachHKKV16}. Our approach differs in that we do not enumerate all minimal dominating sets; instead, we exploit \(d\)-semi-ladder-freeness to obtain a compact representation containing only a bounded-size family of representatives sufficient for the subsequent connected-domination algorithm.

Our starting point is a structural theorem for \textsc{Set Cover}. Recall that a
\textsc{Set Cover} instance consists of a universe \(U\) and a family \(\mathcal{F}\subseteq 2^U\),
and the goal is to find a subfamily whose union is \(U\). Such a set cover is
\emph{inclusion-wise minimal} if none of its proper subfamilies is again a set cover. A direct
enumeration approach is hopeless in general: even on highly restricted instances, such as planar
instances with $K_{2,2}$-free incidence graphs, the number of minimal set covers of size at most
$k$ can be as large as $m^k$. We show, however, that this apparent combinatorial explosion can be
circumvented. More precisely, we organize all such solutions into a bounded family of tuples of
pairwise disjoint subfamilies, where each tuple compactly encodes many minimal set covers: one
obtains a solution by selecting exactly one set from each subfamily
(\Cref{def:compact_representations}). 

For $d$-semi-ladder-free set systems, this yields a compact
representation of size $k^{\Oh(kd)}$, and such a representation can be enumerated within the same
bound. 


\begin{restatable}[\app]{theorem}{findcompactrepresentations}
    \label{theorem:find_compact_representations}
    Let $(\cU,\cF)$ be a \dslfree set system, and let $k \in \mathbb{Z}_{>0}$. 
    Then, there exists a family $\mathscr{D}$ of tuples, each of arity at most $k$,
    that compactly represents all inclusion-wise minimal set covers of size at most $k$ in $(\cU,\cF)$ such that $
        |\mathscr{D}| \le k^{kd+1}$.
    Moreover, the family $\mathscr{D}$ can be enumerated in $k^{kd+2} \cdot \Oh(nm)$ time, where $n \coloneq |\cU| $ and $ m \coloneq |\cF|$.
\end{restatable}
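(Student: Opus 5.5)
We will prove the statement with a bounded-depth branching procedure. Its leaves are the tuples of $\mathscr{D}$, and the bound $|\mathscr{D}|\le k^{kd+1}$ will come from counting leaves. We take the definition of $d$-semi-ladder-freeness in the form used by Fabia\'nski et al.: there are no elements $u_1,\dots,u_d$ and sets $F_1,\dots,F_d$ with $u_i\notin F_i$ and $u_j\in F_i$ for all $j<i$ (possibly after reindexing).

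\emph{The procedure.} A node of the branching tree carries a tuple $(\mathcal{A}_1,\dots,\mathcal{A}_t)$ of pairwise disjoint subfamilies, with $t\le k$. Each $\mathcal{A}_i$ is described implicitly by a set $P_i\subseteq\cU$ of elements it must contain, a set $N_i\subseteq \cU$ of elements it must avoid, and a designated private element $u_i$. Each slot $i$ is meant to hold the set of a minimal cover that covers $u_i$ privately.

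We start from the empty tuple. While some element $u$ is not in $\bigcap_{F\in\mathcal{A}_i}F$ for any slot $i$, we branch on how a minimal cover $\mathcal{S}$ consistent with the node covers $u$, that is, which slot's set covers $u$. There are at most $k$ choices.
\begin{itemize}
\item If an existing slot $i$ covers $u$, we add $u$ to $P_i$.
\item If a new slot is opened, we set $u_{t+1}=u$, and $u$ is added to $N_j$ for every $j\le t$.
\end{itemize}
Refining $P_i$ so that it contains $u$ shrinks $\mathcal{A}_i$ to the sets containing $u$. When every element is forced into some slot, the tuple is output, after discarding or splitting slots so that the subfamilies are disjoint and every choice of one set per slot is a cover. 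Every choice then covers $\cU$ by construction, and every minimal cover of size at most $k$ follows exactly one root-to-leaf path, namely the one given by its private-element assignment. This proves compact representation.

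\emph{Bounding the depth.} We will charge each branching step to a single slot $i$. The key claim is that slot $i$ can be refined, meaning $u$ is added to $P_i$ while $u$ was not yet forced, at most $d$ times along any path. Suppose the elements successively added to $P_i$ are $w_1,w_2,\dots$. We only branch on $w_j$ when some candidate set $F_j$ still in $\mathcal{A}_i$ contains $w_1,\dots,w_{j-1}$ but not $w_j$. Otherwise $w_j$ would already be in the intersection, so there would be no branch. The pairs $(w_j,F_j)$ therefore form exactly a semi-ladder, so the chain has length less than $d$. Together with the $k$ slot openings, the total depth is at most $kd+1$ with fan-out at most $k$, which gives $k^{kd+1}$ leaves. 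Each node is processed by recomputing the $\mathcal{A}_i$ and their intersections in $\Oh(nm)$ time. The number of nodes is at most $k$ times the number of leaves, so the enumeration takes $k^{kd+2}\cdot\Oh(nm)$ time.

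\emph{Main obstacle.} The delicate step is the charging argument. When $u$ is chosen as an element not covered by any intersection, we must guarantee that the branch where $u$ goes to slot $i$ is a genuine semi-ladder step, not merely a trivial one. We must also ensure that the final slots are disjoint and that every selection is actually a cover, not only the selections coming from minimal covers. We will first try the fix of always picking $u$ in $\bigcup_i(\bigcup\mathcal{A}_i\setminus\bigcap\mathcal{A}_i)$ or uncovered, and stopping once every element lies in some $\bigcap\mathcal{A}_i$. If disjointness fails, we will break ties by the lowest slot index of the private element, which makes the slot assignment unique.
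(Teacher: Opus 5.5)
Your proof is correct, but it is organized differently from the paper's.

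\textbf{The paper's route.} The paper calls \textsc{Solve} once for each arity $\ell\in[k]$, starting from $\ell$ slots all equal to $\emptyset$. Each slot is a single set $I_j\in\intmath(\cF)$, and a refinement replaces it by the closure $\bigcap\{S\in\cF : S\supseteq I_j\cup\{e\}\}$. Progress is measured by the potential $\ell d-\sum_i \ell(I_i)$, using Guillemot's chain-length bound $\ell(I)\le d$. Disjointness is imposed only at the leaves, by deleting sets shared by two $F_i$'s; this deletion is justified by the minimality of the encapsulated covers.

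\textbf{Your route.} You grow a single tree in which slots are opened on demand, and your ordered private elements make disjointness structural. For $i<j$, every set of $\mathcal{A}_j$ contains $u_j$, while every set of $\mathcal{A}_i$ avoids $u_j$, because $u_j$ was put into $N_i$ when slot $j$ was opened. Your per-slot semi-ladder argument re-derives the chain-length bound directly. It is unaffected by the avoidance sets, since it only uses that $F_j\in\mathcal{A}_i$ implies $F_j\supseteq P_i$.

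\textbf{What each buys.} Your version gives one enumeration of depth at most $k+k(d-1)=kd$, hence at most $k^{kd}$ leaves. It needs no summation over arities and no post-processing. The paper's version buys a direct reuse of Guillemot's intersection-closure machinery.

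\textbf{Points to tighten.}
\begin{itemize}
\item Because disjointness is automatic, drop the ``discarding or splitting'' step and the tie-breaking fallback. Splitting a slot would multiply the number of output tuples and break your count.
\item Your worry about trivial refinements is already settled by the branching rule, since $u\notin\bigcap\mathcal{A}_i$ for every $i$.
\item A minimal cover follows \emph{at least} one path, not exactly one, because $u$ may lie in several already-assigned sets.
\item The completeness condition requires arity exactly $|\mathcal{S}|$. To get it, note that at a leaf the assigned sets already cover $\cU$, so by minimality they are all of $\mathcal{S}$.
\item Branches in which some $\mathcal{A}_i$ becomes empty can be pruned.
\item For the running time, every internal node other than the root has fan-out at least $2$ when $k\ge 2$. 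So the number of nodes is $\Oh(\text{number of leaves})$, rather than ``$k$ times'' it, which fails for $k=1$.
\end{itemize}
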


We also prove lower bounds showing
that this bound is tight up to polynomial factors in the exponent, and additionally establish an unconditional
impossibility result ruling out \fpt-sized compact representations for general set systems
(\Cref{subsubsec:tighteness and impossibility,thm:tightness-of-compact-rep-in-general-set-systems}). 

This structural result directly leads to algorithms for connectivity-constrained covering problems.
Using the compact representation together with a \textsc{Group Steiner Tree} subroutine, we obtain
a fixed-parameter algorithm for \Csc on $d$-semi-ladder-free set
systems with running time $k^{\Oh(kd)} \cdot 2^k \cdot n^{\Oh(1)}$ and polynomial space.
\begin{restatable}[\app]{theorem}{cscalgo}
    \label{thm:connected-set-cover}
    There exists an algorithm for \Csc on \dslfree set systems that runs in $k^{kd+2}\cdot 2^k \cdot n^{\Oh(1)}$ time and uses polynomial space.
\end{restatable}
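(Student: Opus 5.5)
The plan is to reduce \Csc to at most $k^{kd+1}$ instances of \gst, one per tuple in the compact representation of \Cref{theorem:find_compact_representations}, and solve each in $2^k\cdot n^{\Oh(1)}$ time and polynomial space. Here \Csc asks for a set cover $\mathcal{S}\subseteq\cF$ of size at most $k$ that is connected in the given graph $H$ on $\cF$.

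First we relate solutions to minimal covers. Suppose a connected set cover $\mathcal{S}$ of size at most $k$ exists. Then $\mathcal{S}$ contains an inclusion-wise minimal set cover $\mathcal{S}'\subseteq\mathcal{S}$ with $|\mathcal{S}'|\le k$. By \Cref{def:compact_representations}, some tuple $(\mathcal{F}_1,\ldots,\mathcal{F}_\ell)\in\mathscr{D}$ has pairwise disjoint subfamilies, $\ell\le k$, and $\mathcal{S}'$ picks exactly one set from each $\mathcal{F}_i$. I assume the definition also guarantees the converse: every choice of one set from each $\mathcal{F}_i$ is a set cover. If the definition only guarantees that the choices are exactly the minimal covers, that is still enough.

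For each tuple we then build a \gst instance. The graph is $H$, the groups are $\mathcal{F}_1,\ldots,\mathcal{F}_\ell$, and we ask for a connected subgraph (a tree) with at most $k$ vertices that meets every group.

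\textbf{Soundness.} Any such tree $T$ contains one set from each group. By the converse property, these sets already cover $\cU$, so the vertex set of $T$ is a connected set cover of size at most $k$.

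\textbf{Completeness.} The solution $\mathcal{S}$ is connected, has size at most $k$, and meets every group through $\mathcal{S}'$. Taking a spanning tree of $H[\mathcal{S}]$ gives a valid group Steiner tree.

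Hence the answer is yes if and only if some tuple yields a yes-instance of \gst.

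The remaining step, and the main technical point, is to solve \gst with $\ell\le k$ groups and at most $k$ vertices in $2^k n^{\Oh(1)}$ time and polynomial space. The standard Dreyfus--Wagner-style dynamic program over subsets of groups runs in $3^\ell$ time and exponential space. Instead, I would use the inclusion–exclusion approach of Nederlof for Steiner tree, adapted to groups. Count branching walks of length $k-1$ that visit every group. For each subset $X\subseteq[\ell]$, count the branching walks in $H$ that avoid all vertices of $\bigcup_{i\in X}\mathcal{F}_i$; this is a polynomial-time, polynomial-space dynamic program over walk length and root. Summing with signs $(-1)^{|X|}$ counts the walks hitting every group. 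A positive count certifies a connected subgraph with at most $k$ vertices meeting all groups. To avoid cancellation modulo arithmetic we work over the integers, which have polynomial bit-length.

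Alternatively, I could cite a known \gst result with $2^{\ell}n^{\Oh(1)}$ time and polynomial space, presumably the subroutine the paper refers to. This costs $2^k n^{\Oh(1)}$ per tuple.

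The total running time is the enumeration cost $k^{kd+2}\cdot\Oh(nm)$ plus $k^{kd+1}\cdot 2^k n^{\Oh(1)}$, which gives $k^{kd+2}\cdot 2^k\cdot n^{\Oh(1)}$. For polynomial space, I would not store $\mathscr{D}$. Instead, each tuple is processed as soon as the enumeration of \Cref{theorem:find_compact_representations} outputs it. This requires that the enumeration is a bounded-depth branching procedure using polynomial space, which I would verify or argue from its construction.
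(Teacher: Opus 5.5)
Your proposal is correct and follows the paper's proof: run the \gst algorithm on each tuple of the compact representation with the subfamilies as groups, get soundness from the covering condition and completeness from a minimal subcover encapsulated by some tuple, and keep the space polynomial by processing tuples on the fly. The paper cites the $2^{\ell}\cdot n^{\Oh(1)}$-time, polynomial-space \gst result (your second option) instead of re-deriving it by inclusion--exclusion, so your Nederlof-style sketch is optional.
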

Consequently, we also obtain an analogous fixed-parameter
algorithm for \cds on $d$-semi-ladder-free graphs (\Cref{corrolary:conn-dominating-set}).

Our second contribution concerns preprocessing. To exploit the structure of semi-ladder-free
graphs, we use the notion of \emph{dominator cores}, which has been considered earlier in related
settings, and introduce a new notion of \emph{grouped domination cores}
(\Cref{def : Dominator Core,def : dominating set and families,def : Grouped Domination Core}).
These notions capture redundancy in domination constraints. Roughly speaking, a dominator core
partitions the relevant dominators into classes such that every inclusion-wise minimal dominating set of size at most $k$ contains at most one
vertex from each class, making the vertices of a class interchangeable. A grouped domination core
strengthens this idea by requiring each group to be dominated as a whole, thereby aggregating many
domination constraints into a single requirement. We show that every $d$-semi-ladder-free graph
admits a grouped domination core of size at most $k^d$ and a dominator core of size
$\Oh(k^{d^2})$, both computable in polynomial time. 

\begin{restatable}[\app]{theorem}{existencecores}
    \label{thm:existence-domination-dominator-core}
    Let $G$ be a $d$-semi-ladder-free graph and let $k\in\mathbb N$. One can compute, in polynomial time, a grouped domination core of size at most $k^d$ and a dominator core of size at most $\Oh(k^{d^2})$.
\end{restatable}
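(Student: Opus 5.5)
The plan is to build both cores greedily, using the $d$-semi-ladder-free condition to stop the greedy construction after polynomially many steps. I read the definitions as follows. A grouped domination core is a family of vertex groups such that a set $D$ of size at most $k$ dominates $G$ iff it dominates every group. "Dominates a group" means it dominates each vertex of the group, with the size of the family counted as the number of groups. A dominator core is a partition of the relevant dominators into classes, so that every inclusion-wise minimal dominating set of size at most $k$ uses at most one vertex per class, and vertices inside a class are interchangeable.

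\textbf{Grouped domination core.} I would build it by an iterative refinement of depth $d$ that mirrors Guillemot's argument for \textsc{Set Cover}.
\begin{enumerate}
\item Start with the single group $V(G)$.
\item Repeatedly test whether a group $X$ is \emph{essential}: whether there is a set $D$, $|D|\le k$, that dominates all vertices outside $X$ but fails to dominate some vertex of $X$. This is checked via the semi-ladder structure rather than by brute force.
\item If $X$ is not essential, keep it.
\item Otherwise, split $X$ according to which of $k$ witnessing closed neighborhoods fail to cover it. This gives at most $k$ children.
\end{enumerate}
Each descent along a branch extends a semi-ladder: the witness vertices $b_1,\dots,b_i$, together with undominated vertices $a_1,\dots,a_i$, satisfy $a_j\notin N[b_j]$ and $a_j\in N[b_{j'}]$ for $j'<j$ (or the symmetric pattern the definition uses). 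So the depth is at most $d-1$, and the tree has at most $k^d$ leaves, which gives the bound $k^d$. Each step needs only the computation of a witness, found greedily from a partial candidate of size $\le k$, so everything runs in polynomial time. Correctness (that the leaves form a grouped domination core) follows because every non-dominated vertex lies in some leaf group, and a leaf group is dominated as a whole precisely when its members are.

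\textbf{Dominator core.} Take the grouped domination core $\mathcal{X}$ with $|\mathcal{X}|\le k^d$. A vertex $v$ is relevant only through its \emph{type}: the set of groups it helps dominate, refined by how it interacts with the at most $k$ witnesses used in the construction. Then I would apply the same ladder-bounded refinement again inside each group. Two dominators with the same refined type are interchangeable in every minimal solution of size $\le k$. Minimality also forbids using two of them, since one would make the other redundant for every group it touches. Each group is thus subdivided along a branching tree of depth $\le d$ and fan-out $\le k^{d}$, which yields $(k^{d})^{d}=k^{d^2}$ classes overall, i.e., $\Oh(k^{d^2})$.

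\textbf{Main obstacle.} The hard step is proving that "dominating each group as a whole" is equivalent to domination for all solutions of size $\le k$. Concretely, this means showing that the witnesses chosen along a branch really form a semi-ladder of increasing length, rather than just a sequence of sets. I would first try the ordered-choice argument: pick each new witness so that it covers all previously chosen undominated vertices, and argue via minimality of the current group that such a choice exists. If that fails, I would fall back on a potential-function argument over the at most $k^d$ branches.
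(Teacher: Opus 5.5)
\textbf{There is a genuine gap, and it starts with the definition.} In the paper, $D$ dominates a batch $B$ only if a \emph{single} vertex of $D$ dominates every vertex of $B$. A grouped domination core must satisfy two properties:
\begin{itemize}
\item collective domination: every inclusion-wise minimal dominating set of size at most $k$ contains, for each batch, one vertex dominating all of it;
\item preservation: any set dominating every batch in this sense dominates $G$.
\end{itemize}
Under your reading, $D$ only has to dominate each member of a group. Then the single group $V(G)$ is already a core of size $1$, so your refinement tree proves nothing. Your correctness sentence (``a leaf group is dominated as a whole precisely when its members are'') checks only this trivial property.

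Under the correct reading, nothing in your construction forces one solution vertex to dominate an entire leaf group. The root $V(G)$ already fails collective domination whenever $G$ has a minimal dominating set of size between $2$ and $k$, and splitting ``according to witnesses'' gives no such guarantee. The essentiality test (is there $D$, $|D|\le k$, dominating $V(G)\setminus X$ but not all of $X$) is itself a domination problem. For general $X$ it is NP-hard already on planar graphs, so a greedy witness search does not decide it in polynomial time. Finally, as you concede, the claim that witnesses along a branch form a semi-ladder is unproven, so neither the depth bound nor the $k^d$ leaf bound is established.

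\textbf{The missing idea is the paper's bottom-up merging argument, adapted from Guillemot.} It runs in the opposite direction to yours: start from singleton batches, where collective domination is trivial, and merge only after proving the merge is safe. Concretely:
\begin{itemize}
\item Take $\mathcal{U}=V(G)$ and $\mathcal{F}=\{N[v]\}$. While $|\mathcal{U}|>k^d$, the set $\mathcal{U}\in\operatorname{int}(\mathcal{F})$ has chain length at most $d$, so some $S\in\operatorname{int}(\mathcal{F})$ has $|S|>k^{\ell(S)}$. Take one with $\ell(S)$ minimal.
\item For every $A\in\mathcal{F}$ with $S\not\subseteq A$, the set $A\cap S\in\operatorname{int}(\mathcal{F})$ has smaller chain length, so minimality gives $|A\cap S|\le k^{\ell(S)-1}$. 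Hence $k$ such sets cover fewer than $|S|$ elements of $S$.
\item Therefore every dominating set of size at most $k$ contains a vertex dominating the whole batch represented by $S$.
\item Merge $S$ into one element via \textsc{Group}, which preserves $d$-semi-ladder-freeness, and repeat.
\end{itemize}
The bound $k^d$ is then just the stopping condition $|\mathcal{U}|\le k^d$, not a branching-tree count.

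\textbf{For the dominator core, you need a different notion of type.} It should be the set of batches a vertex dominates \emph{entirely}. With ``groups it helps dominate'', interchangeability fails. With the correct notion, your at-most-one and replacement arguments do go through, via collective domination and preservation.

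\textbf{The $\mathcal{O}(k^{d^2})$ count is also unsupported.} It requires two facts: the grouped family is still $d$-semi-ladder-free, and Guillemot's $\mathcal{O}(n^d)$ bound on distinct sets in such a family over a universe of size $n\le k^d$. Your depth-$d$, fan-out-$k^d$ tree has no justification. Even taken at face value, applying it inside each of $k^d$ groups would give $k^{d^2+d}$ classes, not $k^{d^2}$.
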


Finally, we return to lossy preprocessing. Since exact kernelization is unlikely for \cds already
on bounded-degeneracy graphs---a subclass of bounded \dslfree graphs---the right target here is an
approximate kernel \cite{DBLP:conf/esa/CyganGH13}. Building on the lossy-kernel framework of
\cite{DBLP:journals/siamdm/EibenKMPS19}, we prove that \cds on $d$-semi-ladder-free graphs admits
a polynomial-size approximate compression into instances on $(d+2)$-semi-ladder-free graphs. %

\begin{restatable}[\app]{theorem}{lossycompression} \label{thm: lossy_compression}
For every fixed $d$ and $\epsilon>0$, \cds on $d$-semi-ladder-free graphs admits a polynomial-time $(1+\epsilon)$-lossy compression of size $k^{\Oh(d^2/\epsilon)}$ to \cds on $(d+2)$-semi-ladder-free graphs.
\end{restatable}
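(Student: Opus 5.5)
We follow the lossy-kernel scheme of Eiben et al.\ for \cds on bounded-expansion and biclique-free graphs, replacing their sparsity arguments with \Cref{thm:existence-domination-dominator-core}. Let $(G,k)$ be the input, where $G$ is $d$-semi-ladder-free, and let $\alpha = 1+\epsilon$. We may assume $G$ is connected and has a connected dominating set of size at most $k$; otherwise we output a trivial instance.

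\textbf{Step 1: reduce to a small dominating structure.} We first apply \Cref{thm:existence-domination-dominator-core} to obtain a grouped domination core $Z$ with at most $k^d$ groups and a dominator core $\mathcal{C}$ with $\Oh(k^{d^2})$ classes. Every minimal dominating set of size at most $k$ uses at most one vertex per class, and the vertices within a class are interchangeable with respect to dominating $Z$. Hence it suffices to dominate $Z$ using one representative per class, and the only remaining constraint is connectivity.

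\textbf{Step 2: handle connectivity lossily.} Following Eiben et al., fix a solution $S$ of size at most $k$. Replacing each $v \in S$ by the representative of its class preserves domination but may break connectivity. We repair connectivity by paths in $G$ whose lengths we bound. Set $t = \lceil 1/\epsilon\rceil$. We build the reduced graph $G'$ on the class representatives $R$ ($|R| = \Oh(k^{d^2})$). For every pair, or more generally every subset of size $\Oh(1/\epsilon)$, of representatives, we record whether $G$ contains a connector of length at most $t$, and we encode the shortest such connector by a path of fresh vertices of the same length. A standard tree-splitting argument (cut a Steiner tree into $\Oh(k/t)$ subtrees of bounded size) then shows that any connected solution of size $k$ in $G$ yields one of size at most $(1+\epsilon)k$ in $G'$. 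Conversely, any solution in $G'$ lifts back to $G$ at no extra cost, because each encoded path corresponds to a real path. Adding a pendant or universal gadget forces $Z$'s groups to be dominated. The number of recorded subsets is $|R|^{\Oh(1/\epsilon)} = k^{\Oh(d^2/\epsilon)}$, which gives the claimed size. Note that the output is a \cds instance with a modified graph rather than a subgraph, so this is a compression and not a kernel.

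\textbf{Step 3: semi-ladder-freeness of the output.} We must show that $G'$ is $(d+2)$-semi-ladder-free. The vertices of $G'$ are representatives, which induce a subgraph of $G$ and so inherit $d$-semi-ladder-freeness (the closed neighborhoods restrict), together with subdivision and gadget vertices of small closed neighborhood. A semi-ladder in $G'$ can use at most a constant number of gadget or path vertices before their tiny neighborhoods force a contradiction. Each such vertex can extend a semi-ladder by at most one, which yields the bound $d+2$.

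I expect this verification to be the main obstacle, since subdivided connector paths attached to many representatives could interact badly with the ladder pattern. If it does not go through, I would first try to make the connectors internally disjoint and attach each only at its endpoints.
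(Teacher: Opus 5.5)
There is a genuine gap. The skeleton is right (compute the two cores, then run the covering-family argument of Eiben et al.), but the output instance you build does not work. First, nothing in $G'$ encodes the domination requirement. Your $G'$ consists of the representatives $R$ and fresh connector vertices. The batches of the grouped domination core, however, are sets of original vertices that need not lie in $R$, and the requirement is that a \emph{single} solution vertex dominates a whole batch. So a connected dominating set of $G'$ need not lift to a dominating set of $G$.

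Neither suggested gadget helps. A universal vertex $u$ makes $\{u\}$ an optimal solution of $G'$, and pendant vertices force particular vertices into the solution. The missing idea is the paper's heavy vertex. For every non-singleton batch $S$, add $v_S$ adjacent exactly to the vertices of $G$ that dominate all of $S$, and keep the singleton batches as original vertices. Dominating $v_S$ then means dominating the batch $S$, so the preservation property applies once heavy vertices are removed. \Cref{lemma:soln-lifting} removes them by swapping $v_S$ for any $b\in S$, which is adjacent to every non-heavy neighbor of $v_S$ and so keeps connectivity.

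Second, your fresh subdivision vertices are themselves vertices of the \cds instance and must be dominated. With $k^{\Oh(d^2/\epsilon)}$ recorded connectors, each internal path vertex forces the solution to meet that path, so $\mathrm{OPT}(G')$ can be far larger than $(1+\epsilon)\mathrm{OPT}(G)$. The paper adds no vertices other than heavy ones. Its output is the induced subgraph $\hat G[M\cup C_{\rm conn}\cup H\cup NH]$, whose non-heavy vertices are original vertices dominated by the marked solution.

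Third, fixing one representative per dominator-core class breaks the $(1+\epsilon)$ accounting. The replacement property only says that swapping inside a class preserves domination. It gives no control over where the new vertex sits relative to the rest of $D^\star$. After the swap, the pieces of the $(D^\star,t)$-covering family no longer certify cheap connectors, and reconnecting the representatives can cost extra vertices per replaced vertex, i.e.\ a constant factor.

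The paper instead runs \textsc{Group Steiner Tree} in $\hat G$, using the \emph{whole} classes of each subset $Q$ of at most $2t$ classes as groups, and marks every resulting tree of size at most $2t$. A piece $T$ of the covering family itself hits all classes in $Q_T$, so the marked tree has at most $|V(T)|$ vertices. After adding connectors via \Cref{prop:approx-cds-by-ds}, the total is $(1+3/t)|D^\star|+3$. The additive $3$ is absorbed by brute-forcing instances with optimum at most $\lceil 8/\epsilon\rceil$, a step your sketch also needs, since your claimed $(1+\epsilon)k$ ignores additive terms.

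Finally, your Step~3 is heuristic. A neighborhood-size count for your gadgets would give at best some $d+\Oh(1)$, not $d+2$. In the paper, $d+2$ comes from \Cref{lemma:additiveone}, which uses two facts: heavy vertices are pairwise non-adjacent, and stripping them from a chain in the intersection closure of $\hat G$ loses at most two levels. Heredity of semi-ladder-freeness under induced subgraphs then transfers the bound to $G'$.
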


This extends the reach of lossy preprocessing beyond the graph
classes considered previously. 
Proofs and additional details for statements marked $(\dagger)$ appear in the appendix.

\hide{\subsection{OLD}
\textsc{Dominating Set} is a central $\NP$-hard problem in graph algorithms, and its connected variant, \textsc{Connected Dominating Set} (\cds), arises in numerous applications \cite{DBLP:conf/esa/GuhaK96, DBLP:conf/dialm/WuL99, DBLP:journals/monet/WanAF04, DBLP:conf/esa/NiklanovitsSVZ25}. 
Despite extensive study, \cds is unlikely to admit efficient preprocessing in general graphs: in particular, it is known not to admit a polynomial kernel even on sparse graph classes such as $2$-degenerate or bounded-expansion graphs, unless $\NP \subseteq \coNP/\poly$ \cite{DBLP:journals/siamdm/EibenKMPS19}. 
To circumvent such limitations, the framework of \emph{lossy kernelization} (or approximate kernelization) was introduced \cite{DBLP:conf/stoc/LokshtanovPRS17}, where one allows a controlled approximation factor $\alpha > 1$ in exchange for polynomial-size compression. 
Recent work has extended lossy kernelization to implicit covering settings, showing that even implicitly defined \textsc{Hitting Set} instances admit efficient approximate compressions~\cite{DBLP:conf/esa/FominLL0TZ23}. 
In this direction, \cite{DBLP:journals/siamdm/EibenKMPS19} showed that for every $\alpha > 1$, \cds admits an $\alpha$-approximate polynomial kernel on sparse graph classes such as biclique-free and bounded-expansion graphs. 
This demonstrates that approximate preprocessing can succeed even when exact kernelization is provably impossible.

Motivated by this perspective, we study covering and domination problems under structural restrictions. 
In particular, we consider \textsc{Set Cover} and \textsc{Dominating Set} on \emph{semi-ladder-free} set systems and graphs, a notion introduced by \cite{DBLP:conf/stacs/FabianskiPST19}. 
A set system is $d$-semi-ladder-free if its incidence structure avoids a certain combinatorial pattern of size $d$, equivalently bounding the length of inclusion chains among intersections. 
This notion captures a broad range of sparse settings, including nowhere-dense graph classes when neighborhoods are viewed as set systems. 
Building on this framework, \cite{DBLP:journals/toct/Guillemot25} showed that \textsc{Set Cover} becomes fixed-parameter tractable on semi-ladder-free hypergraphs, with running time $k^{\Oh(dk)} \cdot n^{\Oh(1)}$, and additionally admits polynomial kernelization. 
These results highlight that semi-ladder-freeness provides strong structural leverage for covering problems.

The idea of succinctly encoding large families of minimal solutions has appeared in several forms in parameterized algorithms. 
Early work by Guo et al.~\cite{DBLP:journals/siamdm/GuoNS11} explored structured enumeration of minimal solutions, while Misra et al.~\cite{DBLP:journals/jco/MisraPRSS12} formalized the notion of compact representations in the context of connectivity-constrained problems. 
Related approaches based on representative sets further enable compression of exponentially many solutions into polynomial-sized structures~\cite{DBLP:conf/soda/FominLS14}.

\medskip
\noindent
\textbf{Our Contribution.}
The preceding developments suggest a natural question: can one combine the structural power of semi-ladder-free systems with the algorithmic ideas of compact representations and lossy kernelization to obtain efficient algorithms for connectivity-constrained problems? 
In this work, we answer this question in the affirmative.

We begin by revisiting the notion of compact representations in the setting of semi-ladder-free set systems. 
While the number of inclusion-wise minimal set covers of size at most $k$ can be $m^k$, even in highly restricted instances such as planar, $K_{2,2}$-free incidence graphs. We show that this apparent combinatorial explosion can in fact be sidestepped by organizing all such solutions into a bounded collection of tuples of pairwise disjoint subfamilies, where each tuple compactly encodes many solutions: a minimal set cover is obtained by selecting exactly one set from each subfamily, ~\Cref{def:compact_representations}.
As a consequence, for $d$-semi-ladder-free set systems, this yields a representation of size $k^{\Oh(kd)}$; and furthermore, this representation can be enumerated within the same bound,~\Cref{theorem:find_compact_representations}.
Moreover, in \Cref{subsubsec:tighteness and impossibility}, we present a 
a lower-bound construction that establishes the asymptotic tightness of \Cref{theorem:find_compact_representations}, as well as an unconditional impossibility on an \fpt-sized compact representation for general set systems,~\Cref{thm:tightness-of-compact-rep-in-general-set-systems}.

This structural insight directly leads us to efficient algorithms for connectivity-constrained covering problems. 
Using the compact representation as a foundation, we design fixed-parameter algorithms for \textsc{Connected Set Cover} (\Csc, in short) on $d$-semi-ladder-free set systems with running time $k^{\Oh(kd)} \cdot 2^{k} \cdot n^{\Oh(1)}$ and polynomial space,~\Cref{thm:connected-set-cover}. 
We also obtain an analogous fixed-parameter tractability result for \cds on $d$-semi-ladder-free graphs,~\Cref{corrolary:conn-dominating-set}, thereby extending the applicability of these techniques to classical graph problems.

Our second contribution concerns preprocessing. To exploit the structure of semi-ladder-free
graphs, we use the notion of \emph{dominator cores}, which has been considered earlier in related
settings, and introduce a new notion of \emph{grouped domination cores}
(\Cref{def : Dominator Core,def : dominating set and families,def : Grouped Domination Core}).
These notions capture redundancy in domination constraints. Roughly speaking, a dominator core
partitions the relevant dominators into classes such that every inclusion-wise minimal dominating set of size at most $k$ contains at most one
vertex from each class, making the vertices of a class interchangeable. A grouped domination core
strengthens this idea by requiring each group to be dominated as a whole, thereby aggregating many
domination constraints into a single requirement. We show that every $d$-semi-ladder-free graph
admits a grouped domination core of size at most $k^d$ and a dominator core of size
$\Oh(k^{d^2})$, both computable in polynomial time
(\Cref{thm:existence-domination-dominator-core}).

Finally, we return to the theme of preprocessing. 
We begin by noting that an exact kernelization is unlikely for \cds, due to the work of~\cite{DBLP:conf/esa/CyganGH13} on bounded degenerate graphs, a subclass of bounded \dslfree graphs. Consequently, the best we can hope is an approximate kernel. Building on the framework of lossy kernelization developed by \cite{DBLP:journals/siamdm/EibenKMPS19}, we show that \cds on $d$-semi-ladder-free graphs admits a polynomial-size approximate compression into instances on $(d+2)$-semi-ladder-free graphs,~\Cref{thm: lossy_compression}.
This extends the reach of lossy preprocessing to a broader structural class.

\label{sec: intro}

Due to space restrictions we had to move large chunks of the formal proofs to the appendix; Statements marked by (\app) are in the appendix.}

\section{Preliminaries}
\label{sec:prelims}


We begin by fixing standard graph-theoretic notation that will be used throughout
the paper. These notions are well known, but we recall them here to keep the
presentation self-contained.

The vertex and edge sets of a graph $G$ are denoted by $V(G)$ and $E(G)$, respectively.
For a subset $S \subseteq V(G)$, $G[S]$ denotes the subgraph of $G$ induced by $S$.
If $G[S]$ is connected, then we say that the set $S$ is \emph{connected}.
We denote the \emph{open neighborhood} of a vertex $v$ in $G$ by $N_G(v)$,
and its \emph{closed neighborhood} by $N_G[v]$.
For a set $S \subseteq V(G)$, we denote by $N_G[S]$ the closed neighborhood of the vertices in $S$.
The degree of a vertex $v$ in the induced subgraph $G[S]$ is denoted by $d_{G[S]}(v)$,
and its degree in $G$ by $d_G(v)$.
We say that a vertex~$u$ \emph{dominates} a vertex~$v$ in~$G$ if~$(u,v)\in E(G)$ or~$u=v$.
With these basic notions in place, we can now formally state the central graph problem
studied in this work.

\defparprob{{\sc Connected Dominating Set} (\cds)}
{An undirected graph~$G$ and an integer~$k$}
{Solution size ($k$)}
{Find a set~$S$ of size at most~$k$ such that~${G}[S]$ is connected and~$N_G[S]=V({G})$.}

Our approach relies on a structural restriction that originates from set systems.
We therefore, use the notion of semi-ladder-freeness for set families as defined in \cite{DBLP:journals/toct/Guillemot25}.

\begin{definition}[\dslfree set family]\label{def:semiladder-set-family}
    We call a set family~$(\mathcal{U},\mathcal{F})$ \dslfree if there do not exist~$d$ sets~$S_1, S_2, \ldots, S_d \in \mathcal{F}$ and~$d$ elements~$x_1, x_2, \ldots, x_d \in \mathcal{U}$ such that for all~$i,j \in [d]$,
$x_i \in S_j$ if~$i < j$ and
$x_i \notin S_i$.
\end{definition}

To apply this concept to graphs, we interpret graphs through their closed neighborhoods,
which naturally give rise to a set family.

\begin{definition}[Semi-ladder-free graphs] \label{def:graph-semiladder}
Let~$G=(V,E)$ be an undirected graph.  Its closed-neighborhood set system is the indexed set system $(\mathcal U,\mathcal F_G)$ with $\mathcal U=V$ and one set $F_v=N_G[v]$ for every vertex $v\in V$.  We allow two different vertices to define the same subset of $V$; whenever we speak about connectivity or solutions, the index $v$ is retained.  We call $G$ \emph{$d$-semi-ladder-free} if the underlying family of closed neighborhoods is $d$-semi-ladder-free.

The \emph{semi-ladder index} of $G$ is the smallest integer $q$ such that $G$ is $(q+1)$-semi-ladder-free.  Throughout the paper, $d$ is treated as a fixed constant (or, equivalently, the algorithms are run with $d$ supplied as a parameter of the graph class).
\end{definition}

If a graph has a dominating set of size at most one, then this set is connected and can be found in polynomial time.  We therefore freely handle this case separately.  The main connectivity primitive used later is the following.
\ifthenelse{\equal{\version}{compressed}}
{
}
{
{\subparagraph*{The \stree Problem.}
In the \stree problem, we are given a graph~$G$, a subset~$R \subseteq V(G)$ of \emph{terminals},
and a weight function~$w : E(G) \rightarrow \mathbb{N}$.
A \emph{Steiner tree} is a subtree~$T$ of~$G$ such that~$R \subseteq V(T)$,
and its \emph{cost} is~$w(T) = \sum_{e \in E(T)} w(e)$.
The task is to find a minimum-cost Steiner tree.
We may assume without loss of generality that~$G$ is complete and that~$w$ satisfies the triangle inequality,
that is, for all~$u,v,w \in V(G)$ we have~$w(uw) \le w(uv) + w(vw)$.
This can be achieved by adding edges~$uv$ with weight equal to the shortest path distance between~$u$ and~$v$,
and keeping only the lightest edge when multiple edges arise.}
}

\subparagraph*{The \gst Problem.}
We use the following standard connectivity subroutine.  In the \gst problem, the input is an undirected graph $G=(V,E)$ and $\ell$ pairwise vertex-disjoint subsets $S_1,\ldots,S_\ell\subseteq V(G)$, called \emph{groups}.  The goal is to compute a minimum-size tree containing at least one vertex from every group.  The parameter is the number $\ell$ of groups.

\begin{proposition}[\cite{DBLP:journals/jco/MisraPRSS12}]
    \label{prop : group steiner tree algo}
    The \gst problem can be solved in time
    $2^{\ell} \cdot n^{\Oh(1)}$ using polynomial space.
\end{proposition}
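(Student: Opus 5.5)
The statement immediately above is the \gst proposition quoted from \cite{DBLP:journals/jco/MisraPRSS12}, so strictly nothing new needs proving. Still, here is how I would reprove it, using inclusion--exclusion over the groups in the style of the polynomial-space Steiner tree algorithms.

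\textbf{Step 1: search for the tree size.} Iterate over $t = 1,2,\ldots,n$, the number of vertices of the tree. For each $t$, decide whether some connected subgraph on at most $t$ vertices meets every group $S_1,\ldots,S_\ell$. The first $t$ that succeeds is the optimum.

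\textbf{Step 2: count branching walks.} For a fixed root $r$ and length $t$, count \emph{branching walks}, i.e.\ homomorphic images of ordered rooted trees on $t$ nodes. We need the number that visit at least one vertex of each group. By inclusion--exclusion this equals
\[
\sum_{X\subseteq[\ell]} (-1)^{|X|}\, b_X(r,t),
\]
where $b_X(r,t)$ counts branching walks rooted at $r$ in $G - \bigcup_{i\in X} S_i$.

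\textbf{Step 3: compute each term by dynamic programming.} Each $b_X(r,t)$ is computed in $n^{\Oh(1)}$ time and space. Let $b(v,j)$ be the number of branching walks of $j$ nodes rooted at $v$. It satisfies the recurrence
\[
b(v,j)=\sum_{u\in N(v)}\sum_{j_1+j_2=j} b(u,j_1)\,b(v,j_2),
\]
obtained by peeling off the first subtree of the root. Summing over all $2^\ell$ subsets $X$ gives running time $2^\ell n^{\Oh(1)}$, and the space stays polynomial because the subsets are processed one at a time.

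\textbf{Step 4: correctness.}
\begin{itemize}
\item If a tree on $t$ vertices hits every group, then it yields a branching walk of length $t$ hitting all groups, so the count is positive.
\item Conversely, the image of any such walk is a connected subgraph on at most $t$ vertices meeting all groups, and any spanning tree of it is a solution.
\end{itemize}

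\textbf{Main obstacle.} A positive count only certifies existence. To output an actual tree, use self-reduction: try deleting vertices one by one and keep each deletion whenever the optimum does not increase. This costs only a polynomial factor.
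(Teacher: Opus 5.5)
Your argument is correct. Note that the paper gives no proof of this proposition at all: it is imported from Misra et al. Your branching-walk inclusion--exclusion is therefore a self-contained substitute, and it is the standard Nederlof-type route to polynomial-space $2^{\ell}n^{\Oh(1)}$ bounds for Steiner-type problems. It does not even need the groups to be disjoint. What your version buys over the paper's bare citation is self-containment; the price is a few details you should state explicitly.

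\begin{itemize}
\item The root $r$ must range over all vertices, or over $S_1$ alone, since every feasible tree meets $S_1$. Otherwise the first bullet of Step~4 fails for trees that avoid the chosen $r$.
\item The recurrence needs the base case $b(v,1)=1$ and the constraint $j_1,j_2\ge 1$.
\item The counts are at most $(4n)^t$. They therefore have $\Oh(t\log n)$ bits, so all arithmetic and storage stay polynomial.
\end{itemize}

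Your self-reduction is also sound, and the precise reason deserves one line. Let $t^\star$ be the optimum. Every vertex $v$ surviving in the final graph lies in every optimal tree of that graph. Indeed, the final graph is a subgraph of the graph at the moment $v$ was tested. An optimal tree of size $t^\star$ avoiding $v$ in the final graph would therefore have shown that deleting $v$ at that moment does not increase the optimum, and the deletion would have been kept. Hence the final graph has exactly $t^\star$ vertices, and any spanning tree of it is an optimal group Steiner tree.
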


Now, we are ready to argue that the class of $6d$-semi-ladder-free graphs is a superclass of $K_{d,d}$-free (i.e., biclique-free) graphs, which have been the focus of previous work on lossy kernels for \cds.
This establishes that our results strictly generalize previous work.
We introduce a semi-induced version of semi-ladders to establish this connection; edges inside the two sides are irrelevant for this definition.

\begin{definition} \label{def:induced-semiladder-free-graph} 
A graph~$G$ is \emph{semi-induced $d$-semi-ladder-free} if there do not exist two disjoint vertex sets $L = \{a_1, a_2, \ldots, a_d\}$ and $R = \{b_1, b_2, \ldots, b_d\}$, such that in the induced subgraph $G[L \cup R]$, for all $i,j \in [d]$ with $i < j$, we have $(a_i,b_j) \in E(G)$, and for all $i \in [d]$, we have $(a_i,b_i) \notin E(G)$.
\end{definition}


\begin{restatable}[\cite{fabiański2018progressivealgorithmsdominationindependence}]{lemma}{semiladderfreegraph}
    \label{lem:semiladder-free-graph}
    Every $K_{d,d}$-free graph is a $3d$-semi-ladder-free graph.
\end{restatable}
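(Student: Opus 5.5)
Suppose $G$ contains a closed-neighborhood semi-ladder of length $3d$: vertices $v_1,\dots,v_{3d}$ (the sets $S_j=N_G[v_j]$) and elements $x_1,\dots,x_{3d}$ with $x_i\in N[v_j]$ for $i<j$ and $x_i\notin N[v_i]$. We will extract a $K_{d,d}$ from this pattern, contradicting $K_{d,d}$-freeness. The closed neighborhoods create two complications: the $x$'s and $v$'s may coincide, and ``$x_i\in N[v_j]$'' may hold because $x_i=v_j$ rather than because of an edge. The plan is to thin the ladder so that both sides become disjoint. Then every relation $x_i\in N[v_j]$ with $i<j$ is a genuine edge, and a complete bipartite subgraph appears.

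\textbf{Step 1: the $x$'s and $v$'s are distinct.} The $x_i$ are pairwise distinct. Indeed, if $x_i=x_{i'}$ with $i<i'$, then $x_i\in N[v_{i'}]$ while $x_{i'}\notin N[v_{i'}]$, a contradiction. Likewise the $v_j$ are pairwise distinct: for $j<j'$ we have $x_j\notin N[v_j]$ but $x_j\in N[v_{j'}]$, so the closed neighborhoods differ. Also $x_i\neq v_i$, since $v_i\in N[v_i]$ but $x_i\notin N[v_i]$.

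\textbf{Step 2: removing coincidences $x_i=v_j$.} If $x_i=v_j$ with $j<i$, this is allowed as a set-system incidence but breaks disjointness. Since both sequences are injective, each $x_i$ coincides with at most one $v_j$, so the coincidences form a partial matching between indices.

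The ladder is reduced by taking index triples. Use $x_{3t-2}$ on one side and $v_{3t}$ on the other, for $t\in[d]$; the middle index is a buffer. For $s<t$ we have $3s-2<3t$, so $x_{3s-2}\in N[v_{3t}]$. We need $x_{3s-2}\neq v_{3t}$ and, more generally, that the selected $x$-set $X$ and $v$-set $V'$ are disjoint.

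This disjointness is the main obstacle. An arbitrary choice of residues need not avoid the matching, so instead I would use a counting/pigeonhole argument. Define a conflict graph on indices $[3d]$ linking $i$ and $j$ whenever $x_i=v_j$; it has maximum degree $2$. Then greedily pick $d$ indices $i_1<\dots<i_d$ for the $x$'s and $d$ indices $j_1<\dots<j_d$ for the $v$'s such that:
\begin{itemize}
\item each selected $x$-index is smaller than the paired selected $v$-index, interleaved as $i_1<j_1\le i_2<j_2\le\cdots$;
\item no conflict edge joins a selected $x$-index to a selected $v$-index.
\end{itemize}
The budget of $3$ indices per rung is meant to absorb the at most one forbidden partner per index.

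\textbf{Step 3: extracting $K_{d,d}$.} After this selection, $X$ and $V'$ are disjoint, and $x_{i_s}$ is adjacent to $v_{j_t}$ for all $s\le t$. This is a half-graph, which alone is not a biclique.

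So the missing step is a Ramsey-type argument combining the half-graph with the non-edges and edges inside each side. The paper's actual route is via the semi-induced notion of \Cref{def:induced-semiladder-free-graph}: first show that a $3d$ closed-neighborhood ladder yields a semi-induced $d$-ladder, then invoke the known fact from \cite{fabiański2018progressivealgorithmsdominationindependence} that such ladders force bicliques. I expect this bookkeeping of index selection, which guarantees disjointness while preserving the ladder pattern, to be the key difficulty.
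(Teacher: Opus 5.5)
There is a genuine gap, and it is exactly the step you leave open. Your interleaved selection $i_1<j_1\le i_2<j_2\le\cdots$ (e.g.\ $x_{3t-2}$ against $v_{3t}$) keeps only $d$ rungs. On these rungs the ladder guarantees the edge $x_{i_s}v_{j_t}$ only for $s\le t$, and nothing is known for $s>t$. From this data you can certify only a biclique with sides of size about $d/2$, and no Ramsey argument can upgrade it to $K_{d,d}$. Nor is one needed: $K_{d,d}$-freeness forbids $K_{d,d}$ as a not necessarily induced subgraph, so edges inside the two sides are irrelevant. The ``known fact'' you would invoke is just that a semi-induced ladder of length $2d$ contains $K_{d,d}$; the paper's argument takes $a_1,\dots,a_d$ against $b_{d+1},\dots,b_{2d}$. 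Routing through a semi-induced $d$-ladder as you describe would only give $K_{d/2,d/2}$. Composing with the $3d$ lemma for semi-induced ladders yields $6d$-semi-ladder-freeness, not $3d$.

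The missing idea is to put all selected $x$-indices below all selected $v$-indices, so that every cross pair is an incidence: the upper-right block of a ladder is already complete. Concretely, take $x_1,\dots,x_{2d}$ and $v_{2d+1},\dots,v_{3d}$. Your Step 1 and the matching observation in Step 2 are correct, and they are what make this work. Each selected $v_j$ equals at most one $x_i$, so at most $d$ of $x_1,\dots,x_{2d}$ coincide with a selected $v$; discard those and keep $d$ others. For every kept $x_i$ and selected $v_j$ we have $i\le 2d<j$, hence $x_i\in N[v_j]$ and $x_i\neq v_j$, i.e.\ $x_iv_j\in E(G)$. The two sides are disjoint sets of distinct vertices, so they form a $K_{d,d}$. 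The factor $3$ is thus spent on absorbing the coincidences $x_i=v_j$, not on buffering an interleaving.
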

\ifthenelse{\equal{\version}{compressed}}
{
}
{
    \begin{proof}
        Suppose, for contradiction, that $G$ contains a semi-induced $2d$-semi-ladder.
        Then there exist disjoint vertex sets
        \[
            L = \{a_1, a_2, \ldots, a_{2d}\}
            \quad\text{and}\quad
            R = \{b_1, b_2, \ldots, b_{2d}\}
        \]
        such that $G[L \cup R]$ is bipartite with bipartition $(L,R)$ and satisfies the semi-ladder condition:
        for all $i,j \in [2d]$ with $i < j$, we have $(a_i,b_j) \in E(G)$, and for all $i \in [2d]$, we have $(a_i,b_i) \notin E(G)$.
       Consider the subsets $
            L' = \{a_1, a_2, \ldots, a_d\}
            \quad\text{and}\quad
            R' = \{b_{d+1}, b_{d+2}, \ldots, b_{2d}\}$.
        Let $a_i \in L'$ and $b_j \in R'$.
        Then $i \le d$ and $j \ge d+1$, hence $i < j$.
        By the semi-ladder condition, $(a_i,b_j) \in E(G)$.
        Therefore, every vertex of $L'$ is adjacent to every vertex of $R'$, and $G[L' \cup R']$ contains a copy of $K_{d,d}$.
        This contradicts the assumption that $G$ is $K_{d,d}$-free.
        Hence, $G$ cannot contain a semi-induced $2d$-semi-ladder.
    \end{proof}
}

Now we are ready to establish the relation between semi-induced semi-ladder-free graphs and semi-ladder-free graphs.

\begin{restatable}[\app]{lemma}{semiladderfreegraphtwo}
\label{lem:semiladder-free-graph-2}
Every semi-induced $d$-semi-ladder-free graph is $3d$-semi-ladder-free.
\end{restatable}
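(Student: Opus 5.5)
We argue by contraposition: if $G$ is not $3d$-semi-ladder-free, we extract a semi-induced $d$-semi-ladder. So let $S_1,\dots,S_{3d}$ with $S_j=N_G[v_j]$ and elements $x_1,\dots,x_{3d}$ witness a $3d$-semi-ladder in the closed-neighborhood system. This means $x_i\in N[v_j]$ for $i<j$ and $x_i\notin N[v_i]$. The last condition gives $x_i\neq v_i$ and $x_iv_i\notin E(G)$. For $i<j$ we have either $x_i=v_j$ or $x_iv_j\in E(G)$. To obtain a semi-induced semi-ladder we need two \emph{disjoint} sets $L=\{a_i\}$ and $R=\{b_i\}$ of distinct vertices with exactly these adjacency/non-adjacency relations. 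Edges inside $L$ or inside $R$ are irrelevant, so the only obstacles are coincidences among the vertices $x_i$ and $v_j$.

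First we clean up the coincidences.
\begin{itemize}
\item The $v_j$ are pairwise distinct. Indeed, if $v_j=v_{j'}$ with $i:=j<j'$, then $x_j\in N[v_{j'}]=N[v_j]$, a contradiction.
\item Similarly the $x_i$ are pairwise distinct. If $x_i=x_{i'}$ with $i<i'$, then $x_i\in N[v_{i'}]$ while $x_{i'}\notin N[v_{i'}]$, a contradiction.
\end{itemize}
The remaining problem is coincidences $x_i=v_j$ across the two sides. Since $x_i\neq v_i$, any such coincidence has $i\neq j$. Define the bipartite "coincidence graph" on indices, joining $i$ and $j$ when $x_i=v_j$. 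Each $x_i$ equals at most one $v_j$, and each $v_j$ equals at most one $x_i$, so this is a partial matching: a collection of disjoint pairs $(i,\sigma(i))$.

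We now select a subset $I\subseteq[3d]$ of $d$ indices such that no $i,j\in I$ have $x_i=v_j$. Then we take $a_t=x_{i_t}$ and $b_t=v_{i_t}$ for $I=\{i_1<\dots<i_d\}$. Consider the auxiliary graph $H$ on $[3d]$ with an edge $\{i,j\}$ whenever $x_i=v_j$. Each index $i$ has at most two incident edges in $H$: one from $x_i$ equalling some $v_j$, and one from $v_i$ equalling some $x_{j'}$. Hence $H$ has maximum degree $\le 2$ and admits an independent set of size at least $3d/3=d$ (greedy, or 3-colorability of paths and cycles). On such an independent set $I$, the sets $L$ and $R$ are disjoint.

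It remains to check the adjacencies. For $s<t$, the condition $x_{i_s}\in N[v_{i_t}]$ with $x_{i_s}\neq v_{i_t}$ gives an edge $a_sb_t$. Also $a_tb_t\notin E(G)$. So this is a semi-induced $d$-semi-ladder, contradicting the hypothesis.

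The main subtlety, and the step to get right, is the independent-set extraction. This is where the factor $3$ comes from: the degree bound of $2$ in $H$ gives the loss of a factor $3$. One must also handle the indexed-family convention, where two vertices may share the same closed neighborhood. Distinctness of the $v_j$ as vertices still follows, because the sets $N[v_j]$ themselves are forced to be distinct by the ladder pattern.
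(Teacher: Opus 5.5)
Your proof is correct and takes the same route as the paper. Your max-degree-$2$ conflict graph $H$ and its greedy independent set of size $d$ are exactly the paper's row-marking process, where each chosen row $i$ marks the at most two rows $j$ with $x_j = v_i$ or $v_j = x_i$. Your explicit proof that the $x_i$ are pairwise distinct, and likewise the $v_j$, is a welcome addition: the paper uses this implicitly to get the bound of two marked rows per step.
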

\ifthenelse{\equal{\version}{compressed}}
{
}
{
    \begin{proof}
    Suppose, for contradiction, that $G$ contains a $3d$-semi-ladder.
    Then there exist vertices and neighborhoods
    \[
    a_1,\dots,a_{3d}
    \quad\text{and}\quad
    N[b_1],\dots,N[b_{3d}]
    \]
    such that
    \[
    a_i \notin N[b_i]
    \quad\text{and}\quad
    a_i \in N[b_j] \text{ for all } i<j.
    \]
    
    Note that if the vertices $a_1,\dots,a_{3d}$ and $b_1,\dots,b_{3d}$ were all distinct, we would have found a $3d$-semi-ladder as an induced subgraph in $G$.
    This would contradict the assumption that $G$ is semi-induced $d$-semi-ladder-free, since the sets $N[b_i]$ can be correspond to $b_i$ themselves while maintaining the neighborhood relation.
    
    Now, consider the following greedy process. We start with $i=1$. By definition of a semi-ladder, $a_i \notin N[b_i]$ but is adjacent to $N[b_j]$ for all $j>i$. We add $a_i$ and $b_i$ to the sets $L$ and $R$, respectively. Note that $a_i \neq b_i$ as $a_i \notin N[b_i]$. We then follow a row marking process. Given $a_i$ and $b_i$, we mark the following rows:
    \begin{itemize}
        \item Row $j$ where $a_j=b_i$,
        \item Row $j$ where $b_j=a_i$.
    \end{itemize}
    Hence, each time we add a vertex to $L$ and $R$, we mark at most two rows. We find a row $i'$ that is not marked and add $a_{i'}$ and $b_{i'}$ to $L$ and $R$, respectively. We repeat this process until we have added $d$ vertices to $L$ and $R$. We can always find such an unmarked row as we start with $3d$ rows and each time we add a vertex to $L$ and $R$, we mark at most two rows.
    
    After this process, we have two sets $L$ and $R$ of size $d$ each such that for all $i,j \in [d]$ with $i < j$, we have $(a_i,b_j) \in E(G)$, and for all $i \in [d]$, we have $(a_i,b_i) \notin E(G)$. This contradicts the assumption that $G$ is semi-induced $d$-semi-ladder-free.
    \end{proof}
}

\begin{definition}
\label{def:approxreduction}
Let~$\alpha \ge 1$ and let~$\Pi$ and~$\Pi'$ be two parameterized optimization problems.  
An \textbf{$\alpha$-approximate polynomial parameter transformation} (\emph{$\alpha$-appt})  
${\cal A}$ from~$\Pi$ to~$\Pi'$ consists of 
a polynomial-time \emph{reduction algorithm}~${\cal R}_{\cal A}$ and  
a \emph{solution lifting algorithm}.

Given an instance~$(I,k)$ of~$\Pi$, the reduction algorithm outputs an instance~$(I',k')$ of~$\Pi'$.  
Given a solution~$s'$ to~$(I',k')$, the lifting algorithm produces a solution~$s$ to~$(I,k)$.  
If~$\Pi$ is a minimization problem, then
\[
\frac{\Pi(I,k,s)}{OPT_{\Pi}(I,k)} \le \alpha \cdot \frac{\Pi'(I',k',s')}{OPT_{\Pi'}(I',k')},
\]
and for maximization problems, the inequality is reversed.  
We denote this by~$\Pi \prec_{\alpha\text{-appt}} \Pi'$.
\end{definition}

\begin{definition}
\label{def:approxcompression}
Let~$\alpha \ge 1$.  
An \textbf{$\alpha$-approximate compression} from~$\Pi$ to~$\Pi'$  
is an $\alpha$-appt~${\cal A}$ such that the size of the reduced instance,  
$\text{size}_{\cal A}(k) = \sup\{|I'|+k' : (I',k') = {\cal R}_{\cal A}(I,k),~I \in \Sigma^*\}$,
is upper bounded by a computable function~$g : \mathbb{N} \to \mathbb{N}$.

Let $\alpha \ge 1$ and let $\Pi$ be a parameterized optimization problem.  
An \textbf{$\alpha$-approximate compression} of $\Pi$ into itself (or another problem $\Pi'$) is also called a \emph{lossy compression}.  
Formally, it is an $\alpha$-approximate compression (as in \Cref{def:approxcompression}) that maps an instance $(I,k)$ of $\Pi$ to an instance $(I',k')$ of $\Pi'$ of size bounded by a function $g(k)$, such that solutions to $(I',k')$ can be lifted to solutions of $(I,k)$ with approximation factor $\alpha$.
We refer to the reduced instance $(I',k')$ as the \emph{compression} of $(I,k)$.
\end{definition}

\newcommand{\cS}{\mathcal{S}}

\section{Connected Set Cover of $d$-semi-ladder-free set systems}
\label{sec:csc-dslfree}
In this section, we first define the \textsc{Connected Set Cover} problem (\Csc) and then present a
fixed-parameter tractable (FPT) algorithm for \Csc on \dslfree set systems.
We subsequently use this result to obtain an FPT algorithm for \cds on
\dslfree graphs. 


\defparprob
{{\sc Connected Set Cover} (\Csc) }
{A graph $\hat{G}$ defined over a set system $(\mathcal{U}, \mathcal{F})$ with
    $V(\hat{G}) = \mathcal{F}$, and an integer $k$}
{Solution size, $k$}
{Find a subfamily $S \subseteq \mathcal{F}$ of size at most $k$ such that the
    graph induced on the vertices corresponding to the sets in $S$ is connected
    and $S$ covers $\mathcal{U}$.}

\hide{\ma{Move the boxed def here?}
Given a set family $(\mathcal{U}, \mathcal{F})$ defined over a graph $\hat{G}$,where $V(\hat{G}) = \mathcal{F}$, the \Csc problem asks for a subfamily$S \subseteq \mathcal{F}$ such that $S$ covers $\mathcal{U}$ and the graph induced by the vertices corresponding to the sets in $S$ is connected.}

Intuitively, \Csc can be seen as a connectivity-constrained variant of
    {\sc Set Cover}, where solutions are required to form a connected subgraph
in the underlying graph $\hat{G}$.

Our motivation for studying \Csc stems from its close relationship with
\cds.
It is straightforward to see that any instance of \cds can be modeled as an instance
of \Csc by considering the closed neighborhoods of the vertices as sets in $\mathcal{F}$. We will formally prove this connection later in this section. Let us proceed to describe our algorithm for \Csc on \dslfree set systems.

In particular, we will exploit the fact that ideas developed for
\Sc on \dslfree set systems can be lifted to the connected
setting. One possible approach would be to first compute a set cover
using the algorithm of~\cite{DBLP:journals/toct/Guillemot25} for \Sc on
\dslfree set systems, and then connect the selected sets using an FPT algorithm for
    {\sc Steiner Tree}.
    For the special case of \cds, this yields a simple $2$-approximation \fpt algorithm.

Instead of following this two-step approach, we take a more direct route.
We generalize the ideas of~\cite{DBLP:journals/toct/Guillemot25} and show that
the structural properties underlying their results can be leveraged to
compute a \emph{compact representation} of all inclusion-wise minimal set
covers of size at most $k$ in time $k^{\Oh(kd)} \cdot n^{\Oh(1)}$.
Using this compact representation, we then design an \fpt algorithm for
\Csc on \dslfree set systems.
We next discuss the high-level idea underlying our algorithm for \Csc on
\dslfree set systems.

\subparagraph*{Overview of the idea.} The main idea of our algorithm is to first introduce a set of tuples of disjoint subfamilies of $\F$ such that each minimal set cover of size at most $k$ can be obtained by picking one set from each subfamily of a tuple. We call such a set of tuples a \emph{compact representation} of all minimal set covers. 

Once we establish that each minimal set cover of size at most $k$ can be obtained from a tuple, we can use the \gst algorithm given by \Cref{prop : group steiner tree algo} to find a connected set cover. The harder part is to ensure that picking one set from each subfamily of a tuple \emph{always} gives a valid set cover; otherwise, we cannot use \Cref{prop : group steiner tree algo} directly.

We formalize the above idea in the next subsection by defining the notion of a compact representation of all minimal set covers and presenting an algorithm to compute such a representation for \dslfree set systems.


\subsection{Compact Representation of all minimal set covers of a \dslfree set system}

\label{sec:compact-representation-dslfree-system}
In many covering and hitting problems, inclusion-wise minimal solutions provide a useful starting point. If one can
(i) prove an \fpt upper bound on the number of inclusion-wise minimal solutions (with respect to the relevant
parameter---such as the solution size), and (ii) enumerate all such minimal solutions in \fpt time, then a wide range
of constrained variants becomes approachable via a common approach.

The key observation is that in several constrained variants, any feasible solution {\em contains}  some inclusion-wise minimal solution for the underlying unconstrained problem. Once an enumeration procedure is available, we may iterate over all minimal
solutions $\mathcal{C}$, and for each $\mathcal{C}$ attempt to \emph{extend} it to a solution satisfying the
additional requirements, using problem-specific algorithmics. For instance, if the added requirement is
connectivity, one may start from $\mathcal{C}$ and then augment it by extra objects to enforce connectivity, in the
spirit of a Steiner-type completion.

However, such an \fpt bound need not exist in general. In the next subsection we show, in the context of {\sc Set Cover},
that the number of inclusion-wise minimal set covers can be as large as $m^{k}$ even on restricted families of
instances.

\subsubsection{There can be XP-many minimal set covers}
\label{subsec:minsclarge}
To this end, we first recall the definition of the {\sc Set Cover} problem.
An instance of {\sc Set Cover} is given by a set system $(\mathcal{U},\mathcal{F})$,
where $\mathcal{U}$ is a finite universe and $\mathcal{F} \subseteq 2^{\mathcal{U}}$
is a family of subsets.
A subfamily $\mathcal{C} \subseteq \mathcal{F}$ is called a \emph{set cover} if
$\bigcup_{S \in \mathcal{C}} S = \mathcal{U}$.
Such a set cover $\mathcal{C}$ is said to be \emph{inclusion-wise minimal} if no
proper subfamily of $\mathcal{C}$ is also a set cover.
In what follows, we give a family of instances that admit $\Omega\left(\left({\frac{|\F|}{k}}\right)^{k}\right)$ inclusion-wise minimal set covers,
even under strong restrictions on the structure of the instance---in particular, even when the bipartite
incidence graph of $(\mathcal{U},\mathcal{F})$ is planar and $K_{2,2}$-free.

\subparagraph{Construction.}
Let $k,m\in\mathbb{N}$. Define the universe
$
\mathcal U \;:=\; \{t\} \cup \{u_i \mid i\in [k]\} \cup \{b_{i,j}\mid i\in[k],\, j\in[m]\}.
$
Let
$
B := \{t\}\cup\{b_{i,j}\mid i\in[k],\, j\in[m]\},
$
and for each $i\in[k], j\in[m]$,
let
$
S_{i,j} := \{u_i,\, b_{i,j}\}
.
$
Define the family $\mathcal F:=\{B\}\cup\{S_{i,j}:i\in[k],j\in[m]\}$.
All sets in $\mathcal F$ are pairwise distinct (each $S_{i,j}$ contains a unique element $b_{i,j}$, and $B$ contains $t$).
An illustration of the incidence graph of $(\mathcal U,\mathcal F)$ is given in Figure~\ref{fig:planar-incidence}.


\begin{restatable}[\app]{theorem}{planarmanyminimal}
\label{thm:planar-many-minimal}
For every $k,m\in\mathbb{N}$ there exists a set system $(\mathcal U,\mathcal F)$ such that:
(i) all sets in $\mathcal F$ are distinct,
(ii) $(\mathcal U,\mathcal F)$ has $\Omega\left(\left({\frac{|\F|}{k}}\right)^{k}\right)$ inclusion-wise minimal set covers, and
(iii) the incidence graph of $(\mathcal U,\mathcal F)$ is planar and $K_{2,2}$-free.
\end{restatable}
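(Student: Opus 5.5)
We use the construction given just above and verify (i)--(iii) in turn. Property (i) was already noted: each $S_{i,j}$ contains its private element $b_{i,j}$, and $B$ is the only set containing $t$.

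For (ii), we characterize all covers. Since $t$ lies only in $B$, every set cover contains $B$. The element $u_i$ lies only in the sets $S_{i,1},\dots,S_{i,m}$, so every cover contains at least one $S_{i,j}$ for each $i\in[k]$. Conversely, for any choice function $\sigma:[k]\to[m]$, the family $\mathcal C_\sigma=\{B\}\cup\{S_{i,\sigma(i)}:i\in[k]\}$ is a cover: $B$ covers $t$ and every $b_{i,j}$, and $S_{i,\sigma(i)}$ covers $u_i$.

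We next show each $\mathcal C_\sigma$ is inclusion-wise minimal. Removing $B$ leaves $t$ uncovered. Removing $S_{i,\sigma(i)}$ leaves $u_i$ uncovered, because no other member of $\mathcal C_\sigma$ contains $u_i$. Distinct $\sigma$ give distinct covers, so there are at least $m^k$ minimal covers. With $|\mathcal F|=km+1$, we have $m\ge(|\mathcal F|-1)/k$, hence $m^k=\Omega\bigl((|\mathcal F|/k)^k\bigr)$. For fixed $k$ this is immediate; in general it follows from $(1-1/|\mathcal F|)^k\ge$ a constant when $|\mathcal F|\ge k$, which holds here. (The $\Omega$ could instead be read as counting with $m$ free; we would state the bound $m^k$ explicitly.) Each such cover has size $k+1$; if size at most $k$ is wanted, we would reindex with $k-1$ groups, which is a harmless shift.

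For (iii), consider the incidence graph. Its vertices are the elements and the sets. The edges are $B$--$t$, $B$--$b_{i,j}$, $S_{i,j}$--$u_i$ and $S_{i,j}$--$b_{i,j}$.

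\emph{$K_{2,2}$-freeness.} A $K_{2,2}$ would require two sets sharing two elements. Any two sets $S_{i,j}$ and $S_{i,j'}$ share only $u_i$, and sets with different first index share nothing. The sets $B$ and $S_{i,j}$ share only $b_{i,j}$. So no two sets share two elements, and no $K_{2,2}$ exists.

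\emph{Planarity.} Each $b_{i,j}$ has degree $2$, so we may suppress it; this preserves planarity in both directions. The result is the star at $B$ (with leaf $t$) joined to each $S_{i,j}$, together with edges $S_{i,j}u_i$. Draw $B$ at the center and $u_i$ on an outer circle. Each group $i$ consists of the paths $B$--$S_{i,j}$--$u_i$ for $j\in[m]$, which are internally disjoint paths between $B$ and $u_i$. These nest in a sector assigned to group $i$, and different sectors meet only at $B$. This is planar, as Figure~\ref{fig:planar-incidence} illustrates.

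The only delicate point is the asymptotic bookkeeping in (ii); the structural claims are routine.
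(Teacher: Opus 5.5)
Your proof is correct and follows essentially the same route as the paper. It uses the same characterization of minimal covers ($B$ is forced, plus exactly one $S_{i,j}$ per block), the same pairwise-intersection argument for $K_{2,2}$-freeness, and the same sector/wedge drawing for planarity; suppressing the degree-$2$ vertices $b_{i,j}$ is only a cosmetic variant. Your explicit conversion $m^k=\bigl((|\mathcal F|-1)/k\bigr)^k\ge e^{-1}(|\mathcal F|/k)^k$ fills in a step the paper leaves implicit, and it holds because $|\mathcal F|=km+1\ge k+1$; the condition $|\mathcal F|\ge k$ you state would not suffice when $k=1$.
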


\ifthenelse{\equal{\version}{compressed}}
{
}
{
\begin{proof}
Applying the following two results together, on the construction, gives us the desired result.
\begin{lemma}\label{lem:many-minimal}
The instance $(\mathcal U,\mathcal F)$ has exactly $m^{k}$ inclusion-wise minimal set covers.
\end{lemma}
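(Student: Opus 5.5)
The plan is to characterize the inclusion-wise minimal set covers of the constructed instance completely; the count $m^k$ then follows by a product argument.

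\emph{Every set cover contains $B$.} The element $t$ lies only in $B$, since each $S_{i,j}=\{u_i,b_{i,j}\}$ avoids $t$. Hence $B$ belongs to every set cover.

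\emph{Every set cover picks one $S_{i,j}$ per row.} For each $i\in[k]$, the element $u_i$ lies only in the sets $S_{i,1},\dots,S_{i,m}$, because $B$ contains no $u_i$. So every set cover contains at least one $S_{i,j}$ for each $i$.

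\emph{The candidate covers are covers.} Conversely, take any function $f\colon[k]\to[m]$ and set $\mathcal C_f=\{B\}\cup\{S_{i,f(i)}: i\in[k]\}$. This family covers $\mathcal U$: $B$ covers $t$ and all the $b_{i,j}$, and $S_{i,f(i)}$ covers $u_i$.

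\emph{The candidate covers are minimal.} Removing $B$ uncovers $t$. Removing $S_{i,f(i)}$ uncovers $u_i$, since it is the only set in $\mathcal C_f$ meeting row $i$.

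\emph{There are no other minimal covers.} Let $\mathcal C$ be any minimal set cover. By the first two steps, $\mathcal C$ contains $B$ and at least one $S_{i,j}$ for each $i$. Choosing one such index $f(i)$ for each $i$ gives $\mathcal C_f\subseteq\mathcal C$. Since $\mathcal C_f$ is already a cover, minimality forces $\mathcal C=\mathcal C_f$.

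\emph{Counting.} Distinct functions $f$ give distinct families, so there are exactly $m^k$ minimal set covers. This proves the lemma.

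\emph{Deriving the theorem.} Here $|\mathcal F|=km+1$, so $m=(|\mathcal F|-1)/k$ and $m^k=\Omega\bigl((|\mathcal F|/k)^k\bigr)$, which gives (ii). Item (i) was already observed in the construction. For (iii), I would check two things:
\begin{enumerate}
\item \emph{$K_{2,2}$-freeness.} Any two distinct sets share at most one element: $S_{i,j}\cap B=\{b_{i,j}\}$, $S_{i,j}\cap S_{i,j'}=\{u_i\}$, and $S_{i,j}\cap S_{i',j'}=\emptyset$ for $i\neq i'$. So the incidence graph has no $4$-cycle.
\item \emph{Planarity.} Draw $B$ at the centre with spokes to $t$ and to every $b_{i,j}$. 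Each $b_{i,j}$ is joined to $S_{i,j}$, and the sets $S_{i,1},\dots,S_{i,m}$ fan into $u_i$ placed outside. The rows are arranged in disjoint angular sectors, so no edges cross.
\end{enumerate}

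The only step needing care is the minimality characterization, namely that no minimal cover has two sets from the same row; the containment argument above handles it. Planarity is routine but should be stated explicitly via the sector drawing.
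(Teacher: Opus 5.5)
Your proof is correct and follows essentially the same argument as the paper. You force $B$ via $t$ and force at least one $S_{i,j}$ per row via $u_i$. You then show that a minimal cover has exactly one set per row; you do this by observing that it contains the cover $\mathcal{C}_f$, while the paper notes that a second set in a row is redundant because $B$ already covers every $b_{i,j}$. Finally you verify the converse and count $m^k$ choices, as the paper does.
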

\begin{proof}
Any set cover $\mathcal C\subseteq\mathcal F$ must contain $B$, since $t$ appears only in $B$.
Once $B$ is chosen, every $b_{i,j}$ is covered, so the only remaining elements are $u_1,\dots,u_k$.
For each $i\in[k]$, the element $u_i$ appears only in $S_{i,1},\dots,S_{i,m}$, hence $\mathcal C$ must contain at least
one set from this block.

If $\mathcal C$ is inclusion-wise minimal, then for each $i$ it contains \emph{exactly one} set from
$\{S_{i,1},\dots,S_{i,m}\}$: if it contained two, one would be redundant because $B$ already covers all $b_{i,j}$'s.
Thus every inclusion-wise minimal set cover has the form
$\mathcal C \;=\; \{B\}\ \cup\ \{S_{1,j_1},S_{2,j_2},\dots,S_{k,j_k}\}$ with $j_i\in[m]\ \text{for all } i\in[k]$. 
Conversely, any such choice covers $\mathcal U$ and is inclusion-wise minimal (removing $B$ uncovers $t$, and removing
$S_{i,j_i}$ uncovers $u_i$). Hence there are exactly $m^{k}$ such covers.
\end{proof}

\begin{lemma}\label{lem:planar-k22}
The incidence graph of $(\mathcal U,\mathcal F)$ is planar and $K_{2,2}$-free (and hence $K_{d,d}$-free for every $d\ge 2$).
\end{lemma}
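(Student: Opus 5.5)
We need to show that the incidence graph $H$ of the construction is planar and $K_{2,2}$-free. Its vertex set is $\mathcal U\cup\mathcal F$, and its edges join each set to its elements. First we list the edges explicitly. The set vertex $B$ is adjacent to $t$ and to every $b_{i,j}$. Each $S_{i,j}$ is adjacent to exactly $u_i$ and $b_{i,j}$. Consequently $t$ has degree $1$, each $b_{i,j}$ has degree $2$ (neighbours $B$ and $S_{i,j}$), and each $u_i$ has neighbours $S_{i,1},\dots,S_{i,m}$.

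\textbf{$K_{2,2}$-freeness.} Since $H$ is bipartite, a $K_{2,2}$ must consist of two distinct sets $F,F'\in\mathcal F$ sharing two distinct common elements. We check every pair.
\begin{itemize}
\item $B$ and $S_{i,j}$: we have $B\cap S_{i,j}=\{b_{i,j}\}$, a single element.
\item $S_{i,j}$ and $S_{i',j'}$ with $i\neq i'$: the intersection is empty.
\item $S_{i,j}$ and $S_{i,j'}$ with $j\neq j'$: the intersection is $\{u_i\}$, a single element.
\end{itemize}
So no two sets share two elements, and $H$ is $K_{2,2}$-free. Any $K_{d,d}$ with $d\ge2$ contains a $K_{2,2}$, so $H$ is also $K_{d,d}$-free.

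\textbf{Planarity.} The plan is to show that $H$ is a subdivision of a tree plus a pendant vertex, hence a tree, and trees are planar.
\begin{itemize}
\item Each $b_{i,j}$ has degree $2$, so suppress it, replacing the path $B - b_{i,j} - S_{i,j}$ by a direct edge $B S_{i,j}$. Suppression and subdivision preserve planarity.
\item In the resulting graph $H'$, the vertex $S_{i,j}$ is adjacent to $B$ and $u_i$, and $u_i$ is adjacent to $S_{i,1},\dots,S_{i,m}$.
\item So for fixed $i$, the vertices $B,S_{i,1},\dots,S_{i,m},u_i$ form a $K_{2,m}$ with sides $\{B,u_i\}$ and $\{S_{i,1},\dots,S_{i,m}\}$. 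This is planar: place $B$ and $u_i$ at two poles and draw the $m$ paths between them as disjoint arcs.
\item These $k$ blocks pairwise share only the cut vertex $B$, and $t$ is a pendant vertex on $B$.
\item A graph whose blocks are all planar is planar. Concretely, we draw the $k$ blocks in disjoint angular sectors around $B$ and attach $t$ in a remaining sector.
\item Hence $H'$ is planar, and therefore so is its subdivision $H$.
\end{itemize}

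The only point needing care is to avoid claiming that $H$ is a tree: $H$ does contain cycles, for example $B, b_{i,1}, S_{i,1}, u_i, S_{i,2}, b_{i,2}, B$. This is why the argument goes through the block decomposition at $B$ rather than acyclicity.

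Combining this lemma with Lemma~\ref{lem:many-minimal} gives items (ii) and (iii) of Theorem~\ref{thm:planar-many-minimal}, since $m^k=\Omega\bigl((|\mathcal F|/k)^k\bigr)$ when $|\mathcal F|=km+1$.
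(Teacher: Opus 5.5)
Your proof is correct and essentially the paper's. The $K_{2,2}$-freeness check via pairwise intersections of size at most one is identical, and the paper's planar drawing (disjoint wedges around $B$, each containing $m$ internally disjoint paths $u_i - S_{i,j} - b_{i,j} - B$, with $t$ pendant at $B$) is the same as your gluing of the $K_{2,m}$ blocks at the cut vertex $B$, only drawn directly instead of suppressing the $b_{i,j}$.

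One thing to fix: your opening ``plan'' sentence says $H$ is a subdivision of a tree and hence a tree. That is false and contradicts your own later remark about cycles, so delete it; the argument you actually carry out never uses it.
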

\begin{proof}
For planarity, place the set-vertex $B$ in the center and the element-vertices $u_1,\dots,u_k$ on a circle around $B$.
For each $i\in[k]$, reserve a disjoint wedge region between $u_i$ and $B$, and inside it draw $m$ internally disjoint
paths $u_i-S_{i,j}-b_{i,j}-B$ (for $j\in[m]$). Place $t$ near $B$ and draw the edge $tB$. No edges cross, so the
incidence graph is planar. See Figure~\ref{fig:planar-incidence} for an illustration. 

For $K_{2,2}$-freeness, it suffices to show that any two distinct sets in $\mathcal F$ intersect in at most one element.
Indeed, for $S_{i,j}$ and $S_{i',j'}$, the intersection is empty if $i\neq i'$ and equals $\{u_i\}$ if $i=i'$ and
$j\neq j'$. Moreover, $B\cap S_{i,j}=\{b_{i,j}\}$. Hence no two sets share two elements, so the incidence graph has no
$K_{2,2}$.
\end{proof}
\end{proof}
}


\subsubsection{Compact Representation via Tuples}
In this section, we introduce a framework to represent all inclusion-wise minimal
set covers of a given instance of {\sc Set Cover} in a compact way.
The goal is to construct a small collection of tuples, each of which implicitly
encodes many minimal set covers.

We now introduce the central combinatorial object that will allow us to group many
inclusion-wise minimal set covers together. For the time being, we relax the condition that the set families need to be pairwise disjoint. We will see that this condition will be implicitly satisfied when we enforce the minimality condition on the set covers represented by the tuples, while ensuring that if we pick one set from each subfamily of a tuple, we always get a valid set cover. Here is the formal definition of an $\ell$-tuple.


For a set family $\Co{S} \sse \Co{F}$, a permutation of the elements of $\Co{S}$, denoted by $(S_1, \ldots, S_{|\Co{S}|})$, is called a \emph{sequence} of $\Co{S}$.
Moreover, it is to be understood that any $\ell$-sized tuple of sets $(S_1, \ldots, S_{\ell})$ is a sequence of the set family $\{S_1, \ldots ,S_{\ell}\}$. 

\begin{definition}[A tuple encapsulating a sequence]
    \label{def:tuple}
    Given a set system $(\cU,\cF)$ and an integer $\ell\in\mathbb{Z}_{>0}$, consider 
    an $\ell$-tuple $\T$\hide{ is a sequence of $\ell$ } of subfamilies of $\cF$. That is, $
        \T=(F_1,F_2,\ldots,F_\ell)$, where  $F_i\subseteq \cF$, for each $i\in[\ell]$.

    We say that an $\ell$-tuple $\T $ \emph{\contains} a sequence of sets $(S_1,\ldots,S_{\ell})$ if $(S_1,\ldots,S_{\ell}) \in F_1 \times \ldots \times F_{\ell}$. 
    The subfamilies in an $\ell$-tuple need not be pairwise disjoint or distinct\footnote{To avoid confusion, we will exclusively refer to a tuple of sets, i.e., $(S_1, \ldots, S_\ell)$ where each $S_i$ is a set, as a ``sequence''. \hide{; and ``tuple'' to refer to $(F_1, \ldots, F_\ell)$, where each $F_i$ is a set-family}}.
\end{definition}

From Section~\ref{subsec:minsclarge}, we know that the number of minimal set covers of size $k$ can be potentially very large. Each tuple allows us to group many such minimal set covers together. However, one group might not be sufficient to capture all minimal set covers of size at most $k$. Therefore, we will need a collection of tuples which will together capture all minimal set covers of size at most $k$. We call such a collection \emph{a compact representation} of all minimal set covers. Next, we formally define this notion.

\begin{definition}[Compact representations of all minimal set covers]
    \label{def:compact_representations}
    Let \(\mathcal{I}=(\mathcal{U},\mathcal{F},k)\) be an instance of {\sc Set Cover}.
    A family \(\mathscr{D} \sse \bigcup_{\ell \in [k]}(2^{\mathcal{F}})^{\ell}\) of tuples of subfamilies of \(\mathcal{F}\) is said to \emph{compactly represent}\hide{ all} inclusion-wise minimal set covers of size at most \(k\) if  the following hold:

    \begin{enumerate}
        \item \label{compact_rep:covering_condition} \emph{(Covering condition)} For each tuple $\mathbb{T}=(F_1,\ldots,F_{\ell}) \in \mathscr{D}$ where $\ell \le k$ and for each tuple $(S_1,\ldots, S_\ell) \in F_1\times \ldots \times F_{\ell}$, we have $\bigcup_{i=1}^\ell S_i = \mathcal{U}$.
              
        \item \emph{(Disjointness condition)} For each $\mathbb{T}=(F_1,\dots,F_{\ell}) \in \mathscr{D}$ where $\ell \le k$, the families $\{F_i: i \in [\ell]\}$ are pairwise-disjoint, i.e., no two families have a set in common. 
        \item \emph{(Completeness condition)} For each inclusion-wise minimal set cover \(\mathcal{S}\subseteq \mathcal{F}\) with \(|\mathcal{S}|\le k\), there exists a tuple  $\mathbb{T}=(F_1,\dots,F_{|\Co{S}|}) \in \mathscr{D}$ such that for a sequence $(S_{1}, \ldots, S_{|\Co{S}|})$ of $\Co{S}$, we have \(S_i\in F_i\), for each \(i\in[|\mathcal S|]\).
        
    \end{enumerate}
    The \emph{size} of the compact representation is \(|\mathscr{D}|\).
\end{definition}

Taken together, these conditions ensure that every tuple represents only valid set
covers, that different positions in a tuple are structurally separated, and that
every inclusion-wise minimal set cover of size at most $k$ is captured by at least
one tuple in the family. \Ma{We note here that we cannot compute the number of all inclusion-wise minimal set covers using compact representations by simply multiplying the numbers and summing them up as a minimal dominating set might be present in multiple tuples.}

Next, we will show how the above definition helps us to compactly represent the set covers in the instance constructed in \Cref{subsec:minsclarge}.

\begin{remark}[An Instance with a Single-Tuple Compact Representation.]
    \label{rem:compact-size-1}
    We note that a compact representation can be much smaller than the number of minimal set covers it represents.
    For example, consider the set system $(\cU,\cF)$ from the construction in \Cref{subsec:minsclarge}, where $\cF=\{B\}\ \cup\ \{S_{i,j}\mid i\in[k],\, j\in[m]\}$.
    There are $m^k$ many inclusion-wise minimal set covers of size at most $k+1$.
    However, all of them can be compactly represented by a single $(k+1)$-tuple $\T^{\star} \;:=\; (F_0,F_1,\ldots,F_k)$, where $F_0:=\{B\}$ and $
    F_i:=\{S_{i,1},S_{i,2},\ldots,S_{i,m}\}$ for each $i\in[k]$.
    The proof of its validity is deferred to \Cref{sec:app-csc-dslfree}.
\end{remark}




\ifthenelse{\equal{\version}{compressed}}
{
}
{
\begin{lemma}\label{lem:compact-size-1}
The family $\mathscr{D}=\{\T^{\star}\}$ is a compact representation (in the sense of Definition~\ref{def:compact_representations})
of all inclusion-wise minimal set covers of size at most $k+1$ for the instance $(\cU,\cF,k+1)$.
In particular, $|\mathscr{D}|=1$.
\end{lemma}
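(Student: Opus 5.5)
We verify the three conditions of \Cref{def:compact_representations} for the single tuple $\T^{\star}=(F_0,F_1,\ldots,F_k)$, using the characterization of minimal set covers from the construction of \Cref{subsec:minsclarge}. The tuple has arity $k+1$, which does not exceed the size bound $k+1$, so $\mathscr{D}=\{\T^{\star}\}$ is an admissible family and trivially $|\mathscr{D}|=1$.

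\emph{Covering condition.} Any sequence encapsulated by $\T^{\star}$ has the form $(B,S_{1,j_1},\ldots,S_{k,j_k})$ with $j_i\in[m]$. The set $B$ covers $t$ and all $b_{i,j}$, and each $S_{i,j_i}$ covers $u_i$. Hence the union is $\mathcal U$.

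\emph{Disjointness condition.} The set $B$ is distinct from every $S_{i,j}$, since only $B$ contains $t$. Moreover, $S_{i,j}\neq S_{i',j'}$ whenever $(i,j)\neq(i',j')$, because $b_{i,j}$ lies only in $S_{i,j}$ and $B$. Thus the families $F_0,\ldots,F_k$ are pairwise disjoint.

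\emph{Completeness condition.} Let $\mathcal S$ be an inclusion-wise minimal set cover.
\begin{itemize}
\item Since $t$ lies only in $B$, we have $B\in\mathcal S$.
\item Since $u_i$ lies only in the sets $S_{i,1},\ldots,S_{i,m}$, the cover $\mathcal S$ contains at least one $S_{i,j}$ for each $i$.
\item It contains at most one $S_{i,j}$ for each $i$. If it contained two, say $S_{i,j}$ and $S_{i,j'}$, then $S_{i,j'}$ could be removed: $u_i$ is still covered by $S_{i,j}$ and $b_{i,j'}$ is covered by $B$. This contradicts minimality.
\end{itemize}
Hence $\mathcal S=\{B,S_{1,j_1},\ldots,S_{k,j_k}\}$ for some $j_1,\ldots,j_k\in[m]$, and $|\mathcal S|=k+1$. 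Ordering it as the sequence $(B,S_{1,j_1},\ldots,S_{k,j_k})$ gives one set from each $F_i$ in the required positions. This is exactly the argument of Lemma~\ref{lem:many-minimal}, so it may be cited directly.

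The only point needing care is this completeness step, specifically that minimal covers contain exactly one set per block, and that step is already supplied by Lemma~\ref{lem:many-minimal}.
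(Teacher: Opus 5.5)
Your proposal is correct and follows essentially the same route as the paper. It checks the covering, disjointness, and completeness conditions directly for the single tuple, with completeness resting on the characterization (Lemma~\ref{lem:many-minimal}) that every inclusion-wise minimal cover is $\{B\}\cup\{S_{1,j_1},\ldots,S_{k,j_k}\}$ and so has size exactly $k+1$; your explicit checks that arity $k+1$ is admissible and that all sets in $\mathcal{F}$ are pairwise distinct (which the paper's disjointness step uses implicitly) are small but welcome additions.
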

\begin{proof}
We verify the three conditions in Definition~\ref{def:compact_representations} for the single tuple
$\T^{\star}=(F_0,F_1,\ldots,F_k)$.

\smallskip
\noindent\emph{Covering condition.}
Fix any choice $(f_0,f_1,\ldots,f_k)\in F_0\times F_1\times\cdots\times F_k$.
Then $f_0=B$. Moreover, for each $i\in[k]$, we have $f_i=S_{i,j_i}$ for some $j_i\in[m]$, hence $u_i\in f_i$.
Therefore $B$ covers $t$ and all elements $b_{i,j}$, while $\{f_i:i\in[k]\}$ covers all $u_1,\ldots,u_k$.
Thus $\bigcup_{i=0}^k f_i=\cU$.

\smallskip
\noindent\emph{Structural condition.}
The subfamilies $F_0,F_1,\ldots,F_k$ are pairwise disjoint as families of sets: $F_0$ contains only $B$, and for
each $i\in[k]$ the family $F_i$ consists only of sets $S_{i,j}$ with fixed first index $i$.

\smallskip
\noindent\emph{Completeness condition.}
Let $\cS\subseteq \cF$ be an inclusion-wise minimal set cover with $|\cS|\le k+1$.
By Lemma~\ref{lem:many-minimal} (or by the same argument), every inclusion-wise minimal set cover in this instance has
the form $\cS=\{B\}\cup\{S_{1,j_1},\ldots,S_{k,j_k}\}$,
for some $(j_1,\ldots,j_k)\in[m]^k$, and hence $|\cS|=k+1$.
Order $\cS$ as $(B,S_{1,j_1},\ldots,S_{k,j_k})$. Then $B\in F_0$ and $S_{i,j_i}\in F_i$ for each $i\in[k]$,
so $\cS$ is captured by $\T^{\star}$.

This establishes all three conditions. Hence $\mathscr{D}$ is a compact representation and has size $1$.
\end{proof}
}

While the subfamilies comprising a tuple might contain many sets, they can be processed efficiently if they possess a compact implicit representation. 
Specifically, our algorithm represents each subfamily $F$ via a base subset of elements $A \subseteq \mathcal{U}$, such that every set in $F$ is a superset of $A$. 
Consequently, we define the notion of an extension of a sequence of sets.


\begin{definition}[Extension of a sequence of sets]
    Let $(\U,\F)$ be a set family.
    Let $(A_1,\ldots,A_\ell)$ be a sequence, where $A_i\subseteq \U$, for each $i \in [\ell]$. 
    We say that a sequence $(S_1,\ldots,S_\ell)$ \emph{extends} $(A_1,\ldots,A_\ell)$ if for each $i\in [\ell]$, we have $A_i \subseteq S_i \in \F$. 
\end{definition} 
Meanwhile, we will also ensure that the union of the small representations of the subfamilies (the sets $A_i$) in a tuple always covers the entire universe.
Consequently, any extension of a \emph{latent cover}, defined below, will also cover the universe and hence form a valid set cover. We formalize this idea in the next definition.

\begin{definition}[Latent Cover]
    We say that a sequence of sets
    $(A_1,\ldots,A_{\ell})$ forms a \emph{latent cover} of a set cover instance $({\cal U}, {\cal F} , k)$ where $\ell \le k$, if $\bigcup^{\ell}_{i=1} A_i = {\cal U}$. Note that the sets $A_i$ for each $i \in [\ell]$ need not be present in the set family $\cal F$.
\end{definition}



This leads us to the following observation.
\begin{observation}
    For a set cover instance $({\cal U}, {\cal F} , k)$, let $(A_1,\ldots,A_{\ell})$ be a latent cover where $\ell \le k$.
    An\hide{minimal} extension
    \((S_1,\ldots,S_\ell)\) of a latent cover \((A_1,\ldots,A_\ell)\) is a set cover.
    \end{observation}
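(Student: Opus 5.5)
The observation follows directly from the definitions of extension and latent cover, so I would argue by a one-line containment of unions. Let $(A_1,\ldots,A_\ell)$ be a latent cover with $\ell\le k$, and let $(S_1,\ldots,S_\ell)$ be an extension of it.

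First I unpack the two hypotheses. By the definition of an extension, for each $i\in[\ell]$ we have $A_i\subseteq S_i$ and $S_i\in\cF$. By the definition of a latent cover, $\bigcup_{i=1}^{\ell}A_i=\cU$.

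Next I chain the inclusions. Taking unions of the coordinatewise inclusions gives
\[
\cU=\bigcup_{i=1}^{\ell}A_i\subseteq\bigcup_{i=1}^{\ell}S_i\subseteq\cU,
\]
where the last inclusion holds because every set of $\cF$ is a subset of $\cU$. Hence $\bigcup_{i=1}^{\ell}S_i=\cU$.

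Finally, the subfamily $\{S_1,\ldots,S_\ell\}\subseteq\cF$ is therefore a set cover, of size at most $\ell\le k$; its size may be strictly smaller if some $S_i$ coincide.

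The only point needing care is that the $A_i$ themselves need not belong to $\cF$. This is harmless, because the conclusion concerns the $S_i$, which lie in $\cF$ by the definition of an extension. There is no genuine obstacle.

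This observation is exactly what makes the covering condition of \Cref{def:compact_representations} automatic for any tuple of the form $F_i=\{S\in\cF: A_i\subseteq S\}$ built from a latent cover.
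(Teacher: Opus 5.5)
Your proof is correct and is essentially the paper's reasoning. The paper gives no separate proof; it treats the observation as immediate, since $A_i \subseteq S_i \in \mathcal{F}$ for each $i$ together with $\bigcup_{i} A_i = \mathcal{U}$ forces $\bigcup_{i} S_i = \mathcal{U}$, which is exactly your chain of inclusions.
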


    

\hide{
\il{********move**********}
\il{why is the following defn placed here ?  this definition 
is used in Lemma 28 
and proposition is used in \Cref{lemma : l_compact_representation_correctness} so state it close to where it is used. The theorem should be stated as early as possible. }

\il{****move up to here *****}
}

We show that for \dslfree set systems, all inclusion-wise minimal set covers of size at most $k$ admit a compact representation that can be efficiently enumerated, as formalized in \Cref{theorem:find_compact_representations}. \Ma{The algorithm in \Cref{alg:find_compact_representations} is motivated by \cite{DBLP:journals/toct/Guillemot25}}.
The following notion of intersection closure of a set family, due to Guillemot~\cite{DBLP:journals/toct/Guillemot25}, will be instrumental in bounding the recursion depth of our algorithm.


\begin{definition}[Chain length and intersection closure~\cite{DBLP:journals/toct/Guillemot25}]
\label{def:int_closure}
Consider a set system $(\cU,\mathcal{F})$.  For a set $S$ in a set family $\mathcal A$, an \emph{$S$-chain in $\mathcal A$} is a strict chain $S_0\subset S_1\subset\cdots\subset S_q=S$ with every $S_i\in\mathcal A$; its length is $q$.  Let $\ell_{\mathcal A}(S)$ be the maximum length of such a chain.

The \emph{intersection closure} $\intmath(\mathcal F)$ is the smallest family containing all the sets in $\mathcal F$, $\emptyset$, and $\cU$ that is closed under pairwise intersections; equivalently, it consists of all finite intersections of members of $\mathcal F \cup \emptyset $ together with $\cU$.  For the recursive algorithm we also allow the artificial bottom set $\bot=\emptyset$ as an initial seed and set $\ell(\bot)=0$.  For $S\in\intmath(\mathcal F)$, we write $\ell(S)$ for $\ell_{\intmath(\mathcal F)}(S)$.
\end{definition}



\begin{algorithm}[h!]
    \caption{Algorithm to find tuples that \contain all $\ell$-sized minimal set covers of $(\U,\F)$ that extend \family{I}, where $I_i \in \intmath(\F)$ for $i \in [\ell]$}
    \label{alg:find_compact_representations}
    \begin{algorithmic}[1]
        \Procedure{Solve}{$I_1,I_2,\ldots,I_{\ell},\U,\F$} \Comment{ $I_i \in \intmath(\F)$ for $i \in [\ell]$}
        \State $C \gets \bigcup_{i \in [\ell]} I_i$
        \If {$C = \mathcal{U}$} \Comment{All elements are covered}
        \State $F_i \gets \{S \mid S \supseteq I_i \text{ and } S \in \F\}$ for all $i \in [\ell]$ \label{step: check} \Comment{Since $I_i \in \intmath(\F)$, $F_i \neq \emptyset$ for $i \in [\ell]$ }
        \State Delete from each $F_i$ every set that is present in some other $F_j$
        \State \Return $\{\fami{F}\}$
        \EndIf
        \State Select $e \in \mathcal{U} \setminus C$ \Comment{Pick an uncovered element}
        \State $T \gets \emptyset$
        \For {$j \in [\ell]$}
        \State $F_{je} \gets \{S \mid S \supseteq I_j \cup \{e\} \text{ and } S \in \F\}$
        \If {$|F_{je}| \neq 0$}
        \State $I_{je} \gets \bigcap_{f \in F_{je}} f$
        \State $\X{T} \gets \X{T} \cup \textsc{Solve}(I_1, \ldots, I_{je}, \ldots, I_\ell,\U,\F)$
        \EndIf
        \EndFor
        \State \Return $\X{T}$
        \EndProcedure
    \end{algorithmic}
\end{algorithm}

Towards the proof of this theorem we will need to introduce the key concept of an {\it intersection closure}, and use that to prove a couple of technical lemmas.




\begin{proposition}[Theorem 2.2 of \cite{DBLP:journals/toct/Guillemot25}] \label{prop : bounded_length_of_int_closure}
    If $\cF$ is a \dslfree family, then for each set $I \in \operatorname{int}(\mathcal{F})$, it holds that the maximum length of an $I$-chain, $\ell(I) \le d$.
\end{proposition}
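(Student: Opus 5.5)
The plan is to argue by contradiction, building a semi-ladder of length $d$ from any sufficiently long strict chain in $\intmath(\cF)$.

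\textbf{Setup.} Suppose there is a strict chain $I_0\subset I_1\subset\cdots\subset I_q$ in $\intmath(\cF)$ with $q\ge d$. The extremal set $\cU$ and the seed $\emptyset$ can only occupy the two ends of a chain. Truncating the chain if necessary, we may assume that every $I_i$ with $i\ge 1$ is a genuine finite intersection of members of $\cF$. (If $\cU$ appears, a slight loss is handled by the convention $\ell(\bot)=0$ or by shifting the index by one.)

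\textbf{Extracting the semi-ladder.} For each $i\in[q]$, pick an element
\[
x_i\in I_{i}\setminus I_{i-1}.
\]
Since $I_{i-1}$ is an intersection of sets from $\cF$ and $x_i\notin I_{i-1}$, some set $S_i\in\cF$ satisfies $I_{i-1}\subseteq S_i$ and $x_i\notin S_i$. Now take $i<j$. Then $x_i\in I_i\subseteq I_{j-1}\subseteq S_j$, so $x_i\in S_j$. We also have $x_i\notin S_i$ by construction. Hence the sets $S_1,\dots,S_q$ and elements $x_1,\dots,x_q$ form exactly the pattern forbidden in \Cref{def:semiladder-set-family}. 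Taking the first $d$ of them contradicts \dslfree{}ness whenever $q\ge d$.

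\textbf{The main obstacle: boundary cases.} This argument only gives $\ell(I)\le d-1$ for chains whose members are all proper intersections. The remaining work is to check where $\cU$ and $\emptyset$ can add length, which is why the bound is stated as $\ell(I)\le d$.

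\begin{itemize}
\item If $I_{i-1}=\emptyset$ is the artificial bottom, it is vacuously contained in any $S_i$. We still need some $S_i\in\cF$ avoiding $x_i$. If no such set exists, then $x_i$ lies in every member of $\cF$.
\item If the top of the chain is $\cU$ and $\cU\notin\cF$, the step into $\cU$ has no witness set. This step costs at most one in the length.
\end{itemize}

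So I would apply the construction only to steps $i$ whose lower end $I_{i-1}$ is an intersection of members of $\cF$ (with $\emptyset$ allowed if it genuinely arises as such an intersection). This yields at least $q-1$ valid indices, which must be fewer than $d$, giving $q\le d$.

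Verifying these edge conventions against the exact definition of $\intmath$ is the step I expect to need care. The core combinatorial argument above is straightforward.
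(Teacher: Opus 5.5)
Your argument is correct. It is the standard proof: you extract a semi-ladder from a chain by picking $x_i\in I_i\setminus I_{i-1}$ and a member $S_i\in\mathcal{F}$ with $I_{i-1}\subseteq S_i$ and $x_i\notin S_i$. The paper gives no proof of its own here; it only imports the statement from Guillemot.

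One correction to your boundary discussion: the step into $\mathcal{U}$ costs nothing. The witness $S_q$ is obtained from the lower end $I_{q-1}$ alone, and that lower end is a genuine intersection, since $I_1,\dots,I_{q-1}$ are neither $\emptyset$ nor $\mathcal{U}$. Also, $x_q$ never has to lie in a later set. So drop the top-truncation from your setup; losing a unit at both ends would only give $q\le d+1$. Your final criterion is the right one: keep exactly the steps whose lower end is a genuine intersection, which is every step except possibly the one out of $\emptyset$ when $\bigcap\mathcal{F}\neq\emptyset$. That yields $q-1$ valid indices and hence $q\le d$.
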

\ifthenelse{\equal{\version}{compressed}}
{
}
{
We now present our first key lemma, which links latent covers to the base case of our algorithm. It shows that once an accumulated sequence from the intersection closure covers the universe, the \textsc{Solve} procedure can efficiently construct a valid tuple representing all its extensions.



\begin{restatable}[\app]{lemma}{partialcoverfound}
    \label{lemma : partial_cover_found}
    Consider a set system $(\U, \F)$.
  Suppose that $(I_1,\ldots,I_{\ell})$ is a latent cover where $I_i \in \intmath(\F)$ for each $i \in [\ell]$. Then, procedure \solve\callfamily{I} outputs an $\ell$-tuple $\mathbb{T} = (F_1,\ldots,F_\ell)$ such that
    \begin{itemize}
        \item for each sequence $(S_1,\ldots,S_\ell) \in F_1 \times \ldots \times F_\ell$, we have $\bigcup_{i=1}^\ell S_i = \mathcal{U}$ and the families $\{F_i : ~i\in [\ell]\}$ are pairwise disjoint.
        \item The tuple $\mathbb{T}$ \contains every sequence $(S_1, \dots, S_\ell)$ such that $(S_1, \dots, S_\ell)$ (1) extends $(I_1, \dots, I_\ell)$; and (2) forms a minimal set cover.
        \item The running time of \solve\callfamily{I} is $\Oh(\ell mn)$.
    \end{itemize}
\end{restatable}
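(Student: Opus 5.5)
Since $(I_1,\ldots,I_\ell)$ is a latent cover, $C=\bigcup_i I_i=\U$ and \solve\callfamily{I} takes the base-case branch at once. It returns the single tuple $(F_1,\ldots,F_\ell)$, where each $F_i$ is first set to $\{S\in\F : S\supseteq I_i\}$ and then pruned of every set that also lies in some other $F_j$. Writing $F_i^0$ for the unpruned families and $F_i$ for the pruned ones, I check the three bullets in turn.

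\emph{First bullet.} Covering comes straight from the latent-cover property. Any $(S_1,\ldots,S_\ell)\in F_1\times\cdots\times F_\ell$ has $S_i\in F_i\subseteq F_i^0$, hence $S_i\supseteq I_i$. Therefore $\bigcup_i S_i\supseteq\bigcup_i I_i=\U$, so the sequence is an extension of a latent cover and the Observation applies. Pairwise disjointness holds by construction: every set occurring in two or more of the $F_i^0$ is deleted from all of them. (Each $F_i^0$ is nonempty because $I_i\in\intmath(\F)$, but nonemptiness after pruning is not needed here.)

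\emph{Second bullet.} This is the main point, and the key fact is that a minimal set cover consists of distinct sets. Let $(S_1,\ldots,S_\ell)$ extend $(I_1,\ldots,I_\ell)$ and form a minimal set cover, so the $S_i$ are pairwise distinct members of $\F$ and $S_i\in F_i^0$ for each $i$. We must show $S_i$ survives pruning. Suppose not: then $S_i\in F_j^0$ for some $j\neq i$, i.e.\ $S_i\supseteq I_j$. Since $S_j\neq S_i$, the family $\{S_1,\ldots,S_\ell\}\setminus\{S_j\}$ still contains $S_i\supseteq I_j$. It therefore contains a superset of every $I_t$, so it covers $\bigcup_t I_t=\U$. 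This contradicts the minimality of the cover. Hence $S_i\in F_i$ for all $i$, and $\T$ \contains the sequence. I expect this step to be the only delicate one: one must read pruning as removing any set shared by two families, and then use minimality (dropping $S_j$ leaves a set covering $I_j$) rather than any property of the sets themselves.

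\emph{Running time.} Computing $C$ and testing $C=\U$ takes $\Oh(\ell n)$. Building each $F_i^0$ means testing $I_i\subseteq S$ for all $m$ sets, at $\Oh(n)$ per test with characteristic vectors, giving $\Oh(\ell mn)$ in total. For pruning, for each $S\in\F$ I count how many $i$ have $S\in F_i^0$, using one indicator array of length $m$ per index, and delete $S$ from every family whenever the count is at least $2$. This costs $\Oh(\ell m)$. The total is $\Oh(\ell mn)$.
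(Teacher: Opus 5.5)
Your proposal is correct and follows the paper's argument: covering comes from the latent-cover property, and disjointness comes from the pruning step. The claim that no set of a minimal cover is pruned uses the same exchange argument: if the set in position $i$ also contains $I_j$ for some $j\neq i$, discarding the $j$-th set still leaves a cover. The running time is likewise obtained via characteristic vectors.

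Your indexing of the exchange step is slightly cleaner than the paper's ``without loss of generality'' relabelling, since you assume $S_i\supseteq I_j$ directly. Your counting-based pruning in $\mathcal{O}(\ell m)$ time is also a little tidier, but the route is the same.
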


    \begin{proof}
    When Line \ref{step: check}
    in \Cref{alg:find_compact_representations} is triggered, we have
    $F_i = \{S \mid S \supseteq I_i \text{ and } S \in \F\}$ for each $i \in [\ell]$.
    Hence, all minimal set covers that minimally extend $(I_1,\ldots,I_{\ell})$ are in $(F_1,\ldots,F_{\ell})$.
    Moreover, for each $(f_1,\ldots,f_\ell) \in F_1\times\cdots\times F_\ell$, it holds that $\bigcup_{i=1}^\ell f_i = \mathcal{U}$ since $(I_1,\ldots,I_{\ell})$ is a latent cover.

    {Let $S^{\star}$ be a minimal set cover of size $\ell$ that minimally extends $(I_1, \ldots, I_{\ell})$, with $S^{\star}_i \supseteq I_i$ for each $i\in [\ell]$. 
    Suppose, for contradiction, that $S^{\star}$ has a set $S \in \mathcal{F}$ such that $S \in F_j \cap F_k$ for distinct indices $j, k \in [\ell]$. 
    This implies $S \supseteq I_j$ and $S \supseteq I_k$.
    Without loss of generality, assume that $S^{\star}_j = S$ (note that due to minimality of $S^\star$, this implies that $S^{\star}_k \ne S$).
    Consider the subfamily $S' = S^{\star} \setminus \{S^{\star}_k\}$. 
    Since $\bigcup_{i=1}^\ell I_i = \mathcal{U}$ and every element in $S'$ is a superset of its corresponding $I$, the union of sets in $S'$ covers $\bigcup_{i=1}^\ell I_i = \mathcal{U}$ (since $S^{\star}_j = S \supseteq I_k$).
    Thus, $S^{\star} \setminus \{S^{\star}_k\}$ is a valid set cover of size $\ell-1$, which contradicts the assumption that $S^{\star}$ is an inclusion-wise minimal set cover of size $\ell$.
    Therefore, no minimal set cover of size $\ell$ can use a set that belongs to multiple $F_i$'s, and such sets can be safely removed to ensure the families are disjoint.}
    
  \textit{Running time analysis.} Every set $S\subseteq \U$ can be represented by an $n$-bit string where the $e$-th bit indicates whether an element $e$ belongs to the set $S$ or not. 
    This takes $\Oh(n)$ time per set.
    Now for each $i \in [\ell], S' \in \F$, we calculate bit strings for $I_i$ and $S'$ and include $S'$ in $F_i$ if all the elements that are present in $I_i$ are also present in $S'$. Now similarly any set family $F \subseteq \F$ can be represented by an $m$-bit string. For each $i \in [\ell]$, we compare $\ell-1$ many bit string and remove sets from $F_i$ that are present in any one of $\{F_1,\ldots,F_\ell\} \setminus \{F_i\}$.
    All of these operations takes $\Oh(\ell mn)$ time.
\end{proof}
}

The next lemma constitutes an essential step toward proving \Cref{theorem:find_compact_representations}.
Let us define a measure $\mu \fami{I} = \ell d - \sum^{\ell}_{i=1} l(I_i)$.

\begin{restatable}[\app]{lemma}{lcompactrepresentationcorrectness}
    \label{lemma:l_compact_representation_correctness}
    Consider a \dslfree set system $(\U, \F)$ and $\intmath(\F)$, the intersection closure of $\F$. 
    For a tuple $(I_1, \ldots, I_{\ell})$ such that $I_i \in \intmath(F)$, for $i\in [\ell]$,
    the procedure \solve\callfamily{I} returns a family of $\ell$-tuples, denoted by $\X{T}$, such that
    \begin{enumerate}
        \item\label{item:covering-packing} for each $\ell$-tuple $\mathbb{T}=(F_1,\ldots,F_{\ell}) \in \X{T}$ and each sequence $(S_1,\ldots,S_\ell) \in F_1 \times \ldots \times F_\ell$, we have $\bigcup_{i=1}^\ell S_i = \mathcal{U}$ and the families $\{F_i :~i\in [\ell]\}$ are pairwise-disjoint.

        \item\label{item:minimal-extension} 
        
        For each minimal set cover \Co{S} whose sequence \family{S} extends \family{I}, there exists a tuple $\mathbb{T} \in \X{T}$ that \contains ~\family{S}. 

        \item\label{item:time-complexity} The size $|\X{T}| \leq \ell^{\mu\fami{I}}$ and the running time of \solve\callfamily{I} is at most $\ell^{\mu\fami{I}+1} \cdot \Oh(nm)$.
    \end{enumerate}
\end{restatable}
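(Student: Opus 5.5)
We prove all three items together by induction on the measure $\mu\fami{I}=\ell d-\sum_{i}\ell(I_i)$. By \Cref{prop : bounded_length_of_int_closure}, every $I_i\in\intmath(\F)$ has $\ell(I_i)\le d$, so $\mu\ge 0$. The base of the induction is the case in which the procedure terminates immediately. This happens when $C=\U$, i.e.\ $(I_1,\ldots,I_\ell)$ is a latent cover. Then \Cref{lemma : partial_cover_found} gives item~\ref{item:covering-packing} directly. It also gives item~\ref{item:minimal-extension}, since a minimal set cover extending $(I_1,\ldots,I_\ell)$ is encapsulated by the single returned tuple. Finally it gives item~\ref{item:time-complexity}: one tuple, so $1\le \ell^{\mu}$, and running time $\Oh(\ell nm)\le \ell^{\mu+1}\Oh(nm)$ for $\ell\ge1$.

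In the recursive case, fix an uncovered element $e\in\U\setminus C$ and an index $j$ with $F_{je}\ne\emptyset$.
\begin{itemize}
\item The new set $I_{je}=\bigcap_{f\in F_{je}} f$ lies in $\intmath(\F)$, because it is a finite intersection of members of $\F$.
\item It satisfies $I_{je}\supseteq I_j\cup\{e\}$, because every $f\in F_{je}$ contains $I_j\cup\{e\}$.
\item Since $e\notin I_j$, we have $I_j\subsetneq I_{je}$, with both sets in $\intmath(\F)$. Appending $I_{je}$ to a longest $I_j$-chain therefore gives $\ell(I_{je})\ge \ell(I_j)+1$.
\end{itemize}
(For the seed $\bot=\emptyset$ with $\ell(\bot)=0$ the same holds, since $I_{je}$ is nonempty.) Hence $\mu$ strictly decreases in every recursive call, and we may apply the induction hypothesis to each child.

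Item~\ref{item:covering-packing} is inherited, because every returned tuple comes from some child. For item~\ref{item:minimal-extension}, let $\family{S}$ be a minimal set cover extending $\family{I}$. Since $\family{S}$ covers $e$, some $S_j$ contains $e$. Then $S_j\supseteq I_j\cup\{e\}$, so $S_j\in F_{je}\neq\emptyset$ and therefore $S_j\supseteq I_{je}$. So $\family{S}$ extends the $j$-th child's seed sequence, and by induction some tuple returned by that child encapsulates it.

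For item~\ref{item:time-complexity}, there are at most $\ell$ children, each with measure at most $\mu-1$. Induction gives $|\X{T}|\le \ell\cdot \ell^{\mu-1}=\ell^{\mu}$. For the time, the local work (computing $C$ and each $F_{je}$, $I_{je}$) is $\Oh(\ell nm)$. Adding the children's cost gives
\[
T(\mu)\le \Oh(\ell nm)+\ell\cdot \ell^{\mu}\,\Oh(nm)\le \ell^{\mu+1}\cdot \Oh(nm).
\]

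The main obstacle is getting this last bound cleanly. The local $\Oh(\ell nm)$ term must be absorbed without an extra factor, which needs care with the constant hidden in $\Oh$. I would first phrase the bound as a count of recursion-tree nodes: the tree has depth at most $\mu$ and branching at most $\ell$, so it has at most $\sum_{t\le\mu}\ell^t$ nodes. Each node costs $\Oh(\ell nm)$, which gives $\Oh(\ell^{\mu+1}nm)$ for $\ell\ge 2$. The case $\ell=1$ I would handle separately: there the number of nodes is at most $\mu+1\le d+1$, a constant for fixed $d$.
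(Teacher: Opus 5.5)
Your proof is correct and follows the paper's argument: induction on $\mu$, the latent-cover claim for the terminating case, and branching on the selected uncovered element $e$, where $S_j\in F_{je}$ forces $S_j\supseteq I_{je}$ and $\ell(I_{je})>\ell(I_j)$; the $\ell\cdot\ell^{\mu-1}$ bound on the number of tuples is likewise the paper's (the $\mu=0$, non-latent-cover case is handled implicitly, since no child of negative measure can exist). Your recursion-tree node count for the running time, with $\ell=1$ absorbed because $d$ is fixed, is a cleaner justification than the paper's recurrence, which silently folds the per-call $\mathcal{O}(\ell nm)$ overhead into the same constant.
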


\hide{\ma{so,san,au : we think that this comment has been addressed in the proof of \Cref{theorem:find_compact_representations}}
\il{\Ma{Note:}  Condition~\ref{item:covering-packing} describe the covering and packing conditions of compact representations, given by \Cref{def:compact_representations}. 
But \Cref{item:minimal-extension} only talks about the minimal solutions that extend \family{I}, not all minimal solutions, which the definition of compact representation talks about. Hence, when we use this lemma in the proof of \Cref{theorem:find_compact_representations} we cannot simply say "...compactly represents all inclusion-wise minimal ...of size $i$" The proof should explicitly explain why it is sufficient to only deal with minimal set covers that extend the tuples from the intersection closure. 
}
}



\ifthenelse{\equal{\version}{compressed}}
{
}
{
    \begin{proof}
    We prove the statement by induction on the value of $\mu\fami{I}$.

    \noindent\textbf{Base Case:} $\mu\fami{I} = 0$.
    We first show that when $\mu\fami{I}=0$, either the tuple \family{I} already constitutes a latent cover, or no minimal set cover of size exactly $\ell$ exists that minimally extends \family{I}.

    If \family{I} is a latent cover, then by \Cref{lemma : partial_cover_found}, \solve\callfamily{I} correctly outputs a tuple \containing all minimal set covers that minimally extend \family{I} in time $\Oh(\ell mn)$.

    Now, suppose that \family{I} is not a latent cover. Since $I_i \in \intmath(\F)$ for all $i \in [\ell]$, by \Cref{prop : bounded_length_of_int_closure}, we know that $\ell(I_i) \le d$ for all $i \in [\ell]$. Given that $\mu\fami{I} = 0$, it follows that $\ell(I_i) = d$ for all $i \in [\ell]$. Suppose, for contradiction, that there exists a minimal set cover \family{S} that minimally extends \family{I}. Then, there must exist an element $e \in \U \setminus (I_1 \cup \ldots \cup I_{\ell})$. Since \family{S} covers $\U$, $e \in S_j$ for some $j$. This implies that $\ell(S_j) = d+1$ (since $S_j \supsetneq I_j$), contradicting the bound given in \Cref{prop : bounded_length_of_int_closure}. Therefore, no such \family{S} exists in this case, completing the base step.

    \noindent
    \textbf{Induction Hypothesis:}
    Assume that the lemma holds for all instances of \family{I} such that $\mu\fami{I} < t$. That is, for any tuple with a smaller measure, the algorithm correctly outputs all required $\ell$-tuples with the claimed properties and within the stated time bound.

    \smallskip
    \noindent
    \textbf{Inductive Step:}
    We now consider the case $\mu\fami{I} = t > 0$. There are two possibilities to analyze depending on whether \family{I} forms a latent cover or not.

    \medskip
    \noindent
    \textit{Case 1: \family{I} is a latent cover.}
    In this scenario, by \Cref{lemma : partial_cover_found}, \solve\callfamily{I} outputs a tuple satisfying all the desired properties, including the fact that it \contains~all minimal set covers that minimally extend \family{I}, and it does so in time $\Oh(\ell mn)$.

    \medskip
    \noindent
    \textit{Case 2: \family{I} is not a latent cover.}
    Let \family{S} be a minimal set cover that minimally extends \family{I}. Since \family{I} is not a latent cover, there must exist an element $e \in S_j$ for some $j$ that is uncovered by \family{I}, i.e., $e \in \U \setminus (I_1 \cup \ldots \cup I_{\ell})$. \Cref{alg:find_compact_representations} explicitly branches on such an uncovered element $e$ in Line 8.

   Since \family{S} minimally extends \family{I}, we have $I_j \subseteq S_j$. Moreover, since $e \in S_j$, it follows that $I_j \cup \{e\} \subseteq S_j$. We define 
    \[
        I_{je} = \bigcap_{I_j\cup\{e\}\subseteq S \in F} S,
    \]
    that is, $I_{je}$ is the intersection of all sets in $\F$ that contain $I_j \cup \{e\}$. Let $F_{je}$ denote the family of all sets $S \in \F$ such that $I_j \cup \{e\} \subseteq S $.

    Clearly, $F_{je} \neq \emptyset$ since $S_j \in F_{je}$. 
    Moreover, $I_{je} \in \intmath(\F)$ and, by construction, $l(I_j) < l(I_{je})$ because $I_j \subsetneq I_{je}$ (since $e \in I_{je}\setminus I_j$).
    Recall that the measure is $\mu\fami{I} = \ell d - \sum^{\ell}_{i=1} l(I_i)$. Since the chain length $l(\cdot)$ strictly increases for $I_{je}$, the sum subtracted from $\ell d$ strictly increases. Hence, the overall measure strictly decreases: $\mu(I_1, \ldots, I_{je}, \ldots, I_{\ell}) < \mu\fami{I} = t$.
    This allows us to apply the induction hypothesis on the instance $(I_1, \ldots, I_{je}, \ldots, I_{\ell})$.

    By the induction hypothesis, there exists an $\ell$-tuple $\mathbb{T}$ in the family of $\ell$-tuples returned by \solve$(I_1,\ldots,I_{je},\ldots,I_{\ell},\U,\F)$ such that $\mathbb{T}$ \contains~\family{S}. Consequently, due to Line 14 of the algorithm, there must also exist a corresponding $\ell$-tuple $\mathbb{T}$ in the family of $\ell$-tuples returned by \solve$(I_1,\ldots,I_{j},\ldots,I_{\ell},\U,\F)$ that \contains \family{S} as desired.

    Note that we are in the case where \family{I} is not a latent cover. For each $j \in [\ell]$ and each $I_{je}$ constructed from a non-empty $F_{je}$, we have $\ell(I_{je}) > \ell(I_j)$, implying that $\mu(I_1, \ldots, I_{je}, \ldots, I_{\ell}) < t$. Thus, we may again apply the induction hypothesis to all such recursive instances.

    \begin{sloppypar}
        By the induction hypothesis, for every $\ell$-tuple $\mathbb{T}=(F_1,F_2,\ldots,F_{\ell})$ returned by \solve$(I_1,\ldots,I_{je},\ldots,I_{\ell},\U,\F)$, and for every sequence $(X_1,\ldots,X_\ell) \in F_1 \times \cdots \times F_\ell$, we have $\bigcup_{i=1}^{\ell} X_i = \mathcal{U}$. 
        Since in Line 14, the algorithm takes the union over all such recursive calls, it follows that in the final output of \solve\callfamily{I}, for every $\ell$-tuple $\mathbb{T} = (F_1,F_2,\ldots,F_{\ell})$ and every $(X_1,\ldots,X_\ell) \in F_1\times\cdots\times F_\ell$, the union $\bigcup_{i=1}^\ell X_i = \mathcal{U}$ also holds.

    \end{sloppypar}


{\it Bounding the number of $\ell$-tuples.}
  By the induction hypothesis, the number of $\ell$-tuples returned by each call to \solve$(I_1,\ldots,I_{je},\ldots,I_{\ell},\U,\F)$ is at most $\ell^{\mu(I_1,\ldots,I_{je},\ldots,I_{\ell})} \le \ell^{t-1}$, for each $j \in [\ell]$ with $F_{je} \neq \emptyset$. Since there are at most $\ell$ such branches, the total number of $\ell$-tuples returned by \solve\callfamily{I} is at most $\ell^t$.

  \begin{sloppypar}
 {\it Running-time analysis.} Each recursive instance of the algorithm \solve$(I_1,\ldots,I_{je},\ldots,I_{\ell},\U,\F)$ (for each valid $j$) takes time at most $\ell^{(t-1)+1} \cdot \Oh(nm)$, and each such instance outputs at most $\ell^{t-1}$ tuples, each of size at most $\ell \cdot n$. Therefore, the total running time across all recursive branches is bounded by $\ell^{t-1} \cdot \ell \cdot \ell \cdot n = \ell^{t+1} \cdot \Oh(nm)$,
    which establishes the claimed time complexity.
\end{sloppypar}

    This completes the inductive argument, and hence the proof of the lemma. 
\end{proof}
}

We will formally argue that it is sufficient to deal with minimal set covers that extend the tuples of the intersection closure of $\F$. We prove \Cref{theorem:find_compact_representations} in \Cref{sec:app-csc-dslfree}.

\ifthenelse{\equal{\version}{compressed}}
{
}
{
    \begin{proof}[Proof of \Cref{theorem:find_compact_representations}] 
        For each $i \in [k]$, we invoke \solve$(X_1,X_2,\ldots,X_i,\U,\F)$ where $X_j=\bot=\emptyset$ for every $j\in[i]$, and $\ell=i$. Every minimal set cover of size $i$ extends this bottom tuple. Hence, procedure \solve$(X_1,X_2,\ldots,X_i,\U,\F)$ returns a family of tuples $\mathscr{D}_i$ where $|\mathscr{D}_i| \le i^{id}$ that compactly represents all inclusion-wise minimal set covers of size exactly $i$.
    
        We compute $\mathscr{D}$ by taking the union of all $\mathscr{D}_i$ for $i \in [k]$ (that is, $\X{D} = \bigcup_{i\in [k]}\X{D}_i$).
        It compactly represents all inclusion-wise minimal set covers of size at most $k$ in $(\mathcal{U},\mathcal{F})$, satisfying the desired properties in \Cref{theorem:find_compact_representations}.
        Therefore, $|\D|\le k\cdot k^{kd} \le k^{kd+1}$. As every tuple output by the algorithm for all values of $i \in [k]$ satisfies the covering condition, the union also does.
        Moreover, for each $i \in [k]$, computing $\mathscr{D}_i$ requires at most $i^{id+1} \cdot \Oh(nm)$ time. Therefore, the total time to compute $\mathscr{D}$ is bounded by $k^{kd+2} \cdot \Oh(nm)$.
    \end{proof}
}

\subsubsection{Tightness and Impossibility Results}\label{subsubsec:tighteness and impossibility}
In this section, we present a lower-bound construction that establishes both the asymptotic tightness of \Cref{theorem:find_compact_representations} and an unconditional impossibility result for general set systems.
The basic idea of our construction is to obtain an instance where, in any compact representation of the minimal set covers, every tuple is forced to contain only singletons. 
This restriction implies that the number of tuples must scale directly with the number of minimal set covers (accounting for permutations).
Ultimately, by establishing a lower bound on the total number of minimal set covers in this instance, we immediately obtain an unconditional lower bound on the size of the compact representation itself.

\subparagraph{The Construction.}
Let $k, c \in \mathbb{N}$ with $k \ge 2$ and $c \ge 1$.
We define the universe $\mathcal{U}$ as a grid of elements with $k$ rows and $c$ columns:
$
    \mathcal{U} = \{e_{i,j} \mid i \in [k], j \in [c]\}.
$
We define the set family $\mathcal{F}$ such that every set contains exactly one element from each column:
$
    \mathcal{F} = \{ S \subset \mathcal{U} \mid \text{for all } j \in [c], \, |S \cap \{e_{1,j}, e_{2,j}, \dots, e_{k,j}\}| = 1 \}.
$

\ifthenelse{\equal{\version}{compressed}}
{
\begin{restatable}[\app]{claim}{imposscountcovers}
    \label{clm:imposs-count-covers_1}
    Any compact representation of all minimal set covers of size at most $k$ in $(\mathcal{U}, \mathcal{F})$ is of size at least $(k!)^{c-1}$.
\end{restatable}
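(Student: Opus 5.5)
The plan is to identify the minimal set covers of size at most $k$ in $(\mathcal{U},\mathcal{F})$ exactly, count them, and show that in any compact representation each tuple encapsulates at most one of them. View each $S\in\mathcal{F}$ as a function $f_S:[c]\to[k]$, where $f_S(j)$ is the unique row $i$ with $e_{i,j}\in S$.

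\textbf{Step 1: the covers of size at most $k$.}
\begin{itemize}
\item Column $j$ contains $k$ elements $e_{1,j},\ldots,e_{k,j}$, and each set covers exactly one of them. So any set cover has at least $k$ sets.
\item Hence a cover of size at most $k$ has exactly $k$ sets $S_1,\ldots,S_k$. For every column $j$, the values $f_{S_1}(j),\ldots,f_{S_k}(j)$ must be a permutation of $[k]$.
\item Every such cover is automatically inclusion-wise minimal, since no cover has fewer than $k$ sets.
\item Counting ordered sequences $(S_1,\ldots,S_k)$: we choose independently, for each of the $c$ columns, a permutation of $[k]$. This gives $(k!)^c$ sequences.
\item These sequences consist of distinct sets, because already column $1$ separates them. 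So each unordered cover corresponds to exactly $k!$ sequences.
\item Therefore there are exactly $(k!)^{c-1}$ minimal set covers of size at most $k$.
\end{itemize}

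\textbf{Step 2: a tuple cannot hold two sets in one position.} Let $\mathscr{D}$ be any compact representation and $\mathbb{T}=(F_1,\ldots,F_\ell)\in\mathscr{D}$.
\begin{itemize}
\item If some $F_i$ is empty, $\mathbb{T}$ encapsulates nothing and can be ignored.
\item Otherwise, the covering condition together with Step 1 forces $\ell=k$. Any single selection from $F_1\times\cdots\times F_\ell$ is a cover with at most $\ell$ sets, so $\ell\ge k$; and arity is at most $k$ by definition.
\item Suppose some $F_i$ contains two distinct sets $S\neq S'$. Fix a selection $(S_1,\ldots,S_k)$ from the tuple with $S_i=S$. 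Replacing $S_i$ by $S'$ gives another selection from the same tuple.
\item By the covering condition both selections are covers. By Step 1, in every column $j$ the value $f_{S_i}(j)$ is the unique row of $[k]$ missing from $\{f_{S_t}(j): t\neq i\}$.
\item This missing row is the same in both selections, so $f_S=f_{S'}$, i.e.\ $S=S'$, a contradiction.
\item Hence every $F_i$ is a singleton, and $\mathbb{T}$ encapsulates exactly one sequence. It therefore captures (through the completeness condition) at most one minimal set cover.
\end{itemize}

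\textbf{Step 3: conclusion.} By the completeness condition, each of the $(k!)^{c-1}$ covers from Step 1 must be captured by some tuple of $\mathscr{D}$. By Step 2, distinct covers need distinct tuples. Hence $|\mathscr{D}|\ge (k!)^{c-1}$.

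The only delicate step is the rigidity argument in Step 2: we must be careful that an $F_i$ may be empty or the arity may differ from $k$. These cases are dispatched as above, since such tuples encapsulate nothing or cannot satisfy the covering condition. The counting in Step 1 is routine once we observe that covers of size at most $k$ are exactly the "column-wise permutation" families.
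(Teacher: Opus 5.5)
Your proof is correct and follows essentially the same route as the paper. Both arguments show that each subfamily of a tuple is a singleton, because fixing the other $k-1$ sets determines the remaining one; both count the column-wise-permutation covers as $(k!)^{c-1}$ and finish via the completeness condition. Your handling of tuples with an empty subfamily or arity different from $k$, and your count of $(k!)^c$ ordered sequences divided by $k!$, are only slightly more careful restatements of the paper's claim and column-by-column count.
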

}
{
    Since there are $k$ choices for each of the $c$ columns, the total number of sets is $|\mathcal{F}| = k^c$. 
    Observe that since every set is of size exactly $c$, every minimal set cover of size at most $k$ is actually a minimum set cover of size exactly $k$.
    Moreover, every minimum set cover is also a partition of the universe: that is, each element of the universe is contained in exactly one $c$-sized set in the set cover.
    Thus, we have the following.
    
    \begin{claim}
        \label{clm:imposs-singletons-in-D}
        In any compact representation of all minimal set covers of size at most $k$ in $(\mathcal{U}, \mathcal{F})$, every sub-family of sets present in any $k$-tuple must be a singleton.
        That is, for any $k$-tuple $\mathbb{T} = (F_1, \dots, F_k)$ in the representation, $|F_i| = 1$ for all $i \in [k]$.
    \end{claim}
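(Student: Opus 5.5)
The plan is to derive the claim from the \emph{covering condition} of \Cref{def:compact_representations} alone, using a counting argument on the sizes of sets in $\mathcal{F}$. Every set in $\mathcal{F}$ has exactly $c$ elements, while $|\mathcal{U}| = kc$. So any family of at most $k$ sets whose union is $\mathcal{U}$ must consist of exactly $k$ sets, and these sets must be pairwise disjoint. In other words, every set cover of size at most $k$ is a partition of $\mathcal{U}$ into $k$ blocks. This fact is the only structural input needed.

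Fix a tuple $\mathbb{T}=(F_1,\ldots,F_\ell)$ in an arbitrary compact representation $\mathscr{D}$. Suppose first that all $F_i$ are nonempty. By the covering condition, every sequence in $F_1\times\cdots\times F_\ell$ covers $\mathcal{U}$. By the counting fact above, this forces $\ell = k$, so every tuple with nonempty entries is a $k$-tuple.

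Now suppose, for contradiction, that some $F_i$ contains two distinct sets $S \neq S'$. Pick arbitrary $S_j\in F_j$ for every $j\neq i$. By the covering condition, both $(S_1,\ldots,S,\ldots,S_k)$ and $(S_1,\ldots,S',\ldots,S_k)$ cover $\mathcal{U}$. Both are therefore partitions of $\mathcal{U}$. Hence
\[
S \;=\; \mathcal{U}\setminus \bigcup_{j\neq i} S_j \;=\; S',
\]
which is a contradiction. So $|F_i| = 1$ for every $i$.

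The only delicate point is degenerate tuples in which some $F_i$ is empty. For such a tuple the product $F_1\times\cdots\times F_\ell$ is empty, so the covering condition holds vacuously, and the tuple encapsulates no sequence at all. I would handle this by observing that such tuples contribute nothing to the completeness condition and may be deleted without affecting whether $\mathscr{D}$ is a compact representation. Equivalently, I would read the claim as applying to tuples that encapsulate at least one set cover. With this convention, every remaining $k$-tuple has singleton entries, as stated. This is the step I expect to need the most care in phrasing, since the vacuous case is a matter of convention rather than mathematics; the core partition argument is immediate.
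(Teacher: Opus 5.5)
Your argument is correct and is essentially the paper's: both fix the sets in all other coordinates. Since each set has exactly $c$ elements and $|\mathcal{U}|=kc$, the covering condition forces the remaining coordinate to be exactly $\mathcal{U}\setminus\bigcup_{j\neq i}S_j$, so it is unique. Your explicit handling of tuples with an empty coordinate is a genuine refinement. Such tuples satisfy the covering condition vacuously and would literally violate $|F_i|=1$, and the paper's proof silently excludes them by picking $S_j\in F_j$. The refinement is harmless downstream, because the later counting only uses tuples that encapsulate an actual cover.
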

    \begin{claimproof}
        By the covering property of a compact representation, every combination of sets $(S_1, \dots, S_k) \in F_1 \times \dots \times F_k$ must cover $\mathcal{U}$. 
    
        Fix arbitrary sets $S_2 \in F_2, \dots, S_k \in F_k$. Since each set covers exactly $c$ elements, their union covers at most $(k-1)c$ elements. 
        This leaves at least $c$ elements of $\mathcal{U}$ uncovered (specifically, exactly one element in each column). 
    
        For the tuple to be valid, \emph{every} set $S_1 \in F_1$ must cover exactly these remaining $c$ elements. 
        However, by the definition of $\mathcal{F}$, there is exactly one set in $\mathcal{F}$ that contains those specific $c$ elements. 
        Therefore, $F_1$ is a singleton. 
        By symmetry, this holds for all $F_i$.
    \end{claimproof}
    
    Let $\mathscr{D}$ denote \emph{any} compact representation of all minimal set covers of size at most $k$ in $(\mathcal{U}, \mathcal{F})$.
    Consider any minimal set cover $(S_1, \dots, S_k)$ in the set system.
    Observe that such a set cover always exists; for example, one can choose $S_i$ to be exactly the elements in the $i$-th row of the grid.
    By definition, there is some $k$-tuple in $\mathscr{D}$ that \contains this set cover.
    Moreover, by \Cref{clm:imposs-singletons-in-D}, this is the \emph{unique} set cover that this tuple \contains.
    Thus, the number of $k$-tuples in $\mathscr{D}$ is tightly lower-bounded by the number of minimal set covers of size $k$ in $(\mathcal{U}, \mathcal{F})$.
    
    \begin{claim}
        $|\mathscr{D}| \ge \text{(number of minimal set covers of size $k$)}$.
    \end{claim}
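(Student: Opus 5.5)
The plan is to exhibit an injective map from the set of minimal set covers of size $k$ into $\mathscr{D}$. Since $\mathscr{D}$ is a compact representation of all minimal set covers of size at most $k$, the completeness condition of \Cref{def:compact_representations} gives, for every minimal set cover $\mathcal{S}$ of size $k$, some $k$-tuple $\mathbb{T}_{\mathcal{S}}=(F_1,\ldots,F_k)\in\mathscr{D}$ and an ordering $(S_1,\ldots,S_k)$ of $\mathcal{S}$ with $S_i\in F_i$ for all $i\in[k]$. We fix one such choice of $\mathbb{T}_{\mathcal{S}}$ for each $\mathcal{S}$; this defines the map $\mathcal{S}\mapsto\mathbb{T}_{\mathcal{S}}$.

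We will then show that the map is injective, using \Cref{clm:imposs-singletons-in-D}. That claim forces every subfamily of a $k$-tuple in $\mathscr{D}$ to be a singleton. Hence $\mathbb{T}_{\mathcal{S}}=(\{S_1\},\ldots,\{S_k\})$, and $\mathcal{S}$ is recovered from its tuple as the union $\{S_1,\ldots,S_k\}$ of its entries. So distinct minimal set covers are sent to distinct tuples. Because the minimal set covers are counted as unordered families, different orderings of the same cover cause no double counting: the map goes from covers to tuples, not the other way around. The claimed inequality $|\mathscr{D}|\ge$ (number of minimal set covers of size $k$) follows immediately.

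The only step that needs care is confirming that \Cref{clm:imposs-singletons-in-D} applies to the tuple chosen for $\mathcal{S}$, i.e., that this tuple has arity exactly $k$. Completeness indexes the tuple by $|\mathcal{S}|=k$, and the claim was proved for $k$-tuples, so this holds. There is no real obstacle beyond this bookkeeping.
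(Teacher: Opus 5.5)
Your proposal is correct and follows the paper's argument. Completeness assigns each minimal cover of size $k$ a $k$-tuple in $\mathscr{D}$ that encapsulates it, and \Cref{clm:imposs-singletons-in-D} forces that tuple to consist of singletons, so distinct covers receive distinct tuples. One point is worth making explicit: the singleton claim really needs every $F_i$ to be nonempty, since a tuple with an empty entry satisfies the covering condition vacuously; this holds for your $\mathbb{T}_{\mathcal{S}}$ because $S_i\in F_i$.
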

    
    To count the number of minimal set covers of size $k$, we count the number of ways to partition the grid into $k$ disjoint sets of size $c$. 
    We process the universe column by column:
    \begin{itemize}
        \item Column 1: There are $k$ elements in the first column to be distributed among $k$ \emph{unlabelled} sets. 
        There is exactly $1$ way to initialize this partition.
        \item Columns 2 to $c$: For each subsequent column, we must assign its $k$ elements to the $k$ established sets. 
        There are exactly $k!$ ways to distinctly map the elements of the column to the $k$ sets.
    \end{itemize}
    Since the choices for columns $2$ to $c$ are completely independent, we have the following.
    \begin{claim}
        \label{clm:imposs-count-covers}
        The number of minimal set covers of size $k$ in $(\mathcal{U}, \mathcal{F})$ is $(k!)^{c-1}$.
        Consequently, $|\mathscr{D}| \ge (k!)^{c-1}$.
    \end{claim}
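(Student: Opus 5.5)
The plan is to count minimal set covers of size $k$ in the grid instance and then show that each one forces a separate tuple in $\mathscr{D}$.

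\emph{Step 1: every minimal set cover of size at most $k$ is a partition of $\mathcal{U}$ into exactly $k$ sets.} Every $S\in\mathcal{F}$ has exactly $c$ elements, one from each column. Each column has $k$ elements, and a set meets it only once, so covering that column needs at least $k$ sets. Hence a cover of size at most $k$ has exactly $k$ sets. Since $k$ sets of size $c$ cover $kc=|\mathcal{U}|$ elements, they must be pairwise disjoint. Any such partition is inclusion-wise minimal: removing a set uncovers its elements, because no other set contains them.

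\emph{Step 2: in any valid tuple, every subfamily is a singleton.} Let $\mathbb{T}=(F_1,\dots,F_k)$ be a tuple of $\mathscr{D}$. Note that $\mathscr{D}$ may also contain tuples of arity less than $k$; these satisfy the covering condition only vacuously, when some $F_i$ is empty, since fewer than $k$ sets cannot cover. Such tuples encapsulate nothing and can be ignored. For a $k$-tuple with all $F_i$ nonempty, fix $S_2\in F_2,\dots,S_k\in F_k$. Their union has at most $(k-1)c$ elements, so at least one element per column is missing. By the covering condition, every $S_1\in F_1$ must contain all missing elements. There is at least one missing element in each column and $S_1$ meets each column once, so the missing set is exactly one element per column. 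That set determines $S_1$ uniquely, so $|F_1|=1$. The same argument applies to every coordinate.

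\emph{Step 3: count the partitions.} Assign labels to the $k$ parts by their element in column $1$; this is legitimate because distinct parts contain distinct column-$1$ elements. Each column $j\ge 2$ then admits an arbitrary bijection from its $k$ elements to the $k$ labelled parts, giving $k!$ choices per column. Distinct choices of bijections give distinct partitions, so there are exactly $(k!)^{c-1}$ minimal set covers of size $k$.

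\emph{Step 4: conclude.} By completeness, each such cover is encapsulated by some tuple in $\mathscr{D}$, through some ordering of its sets. By Step 2 that tuple is a sequence of singletons, so it encapsulates exactly one sequence and therefore represents exactly one underlying set cover. This gives an injection from minimal set covers into $\mathscr{D}$, and hence $|\mathscr{D}|\ge (k!)^{c-1}$.

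The main subtlety is Step 2. The argument must use that each column is covered exactly once, so that the set of missing elements is forced to be one per column and thus a unique member of $\mathcal{F}$. It must also dispose of the degenerate tuples, those of arity below $k$ or with an empty subfamily, which encapsulate no cover.
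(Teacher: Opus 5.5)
Your proposal is correct and follows the paper's argument essentially step for step. It shows that minimal covers are partitions into exactly $k$ sets, proves the singleton property by fixing $S_2,\dots,S_k$, counts column by column to get $k!$ choices per column (your labelling of parts by their column-$1$ element is the paper's ``one way to initialize'' column $1$), and then injects covers into $\mathscr{D}$. Your explicit treatment of tuples with an empty subfamily or arity below $k$ is a small gain in rigor: as literally stated, the paper's singleton claim fails for such degenerate tuples.
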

}

Next, we derive the consequences of this construction by choosing appropriate values of $c$.
First, we prove the tightness of \Cref{theorem:find_compact_representations} in \dslfree set systems.

\begin{restatable}[\app]{theorem}{tightnessdslfree}
    \label{thm:tightness-dsl-free}
    When $c=d-2$, the set system $(\mathcal{U}, \mathcal{F})$ is \dslfree. 
    Any compact representation of all minimal set covers of size at most $k$ has size at least $(k!)^{d-3} = k^{\Omega(kd)}$.
\end{restatable}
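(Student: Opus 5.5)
The proof has two parts: first show that the grid set system with $c=d-2$ columns is \dslfree, then specialise \Cref{clm:imposs-count-covers_1} and estimate $k!$.

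\textbf{Step 1: semi-ladder-freeness.} Suppose, for contradiction, that there are sets $S_1,\dots,S_d\in\mathcal F$ and elements $x_1,\dots,x_d\in\mathcal U$ with $x_i\in S_j$ for all $i<j$ and $x_i\notin S_i$. I will use a cardinality argument.

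First, the $x_i$ are pairwise distinct. Indeed, for $i<j$ we have $x_i\in S_j$ but $x_j\notin S_j$, so $x_i\neq x_j$.

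Next, look at the last set $S_d$. It contains $x_1,\dots,x_{d-1}$, which are $d-1$ distinct elements. But every set of $\mathcal F$ contains exactly one element from each column, so $|S_d|=c=d-2<d-1$. This is a contradiction, so $(\mathcal U,\mathcal F)$ is \dslfree.

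The argument only uses that all sets have size $c\le d-2$; the exact column structure does not matter here. We need $d\ge 3$ so that $c\ge 1$ and the construction is defined.

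\textbf{Step 2: the lower bound.} Apply \Cref{clm:imposs-count-covers_1} with $c=d-2$. It gives that every compact representation of the minimal set covers of size at most $k$ has size at least $(k!)^{c-1}=(k!)^{d-3}$.

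This claim rests on two facts established earlier. Every tuple in a compact representation consists only of singletons, because of the covering condition and the fact that every minimum cover partitions the grid. The number of such partitions is $(k!)^{c-1}$, obtained by counting bijections column by column.

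\textbf{Step 3: asymptotics.} I will use $k!\ge (k/e)^k$, which gives $(k!)^{d-3}\ge (k/e)^{k(d-3)}$.

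For $k\ge e^2$ we have $k/e\ge k^{1/2}$, so the bound is at least $k^{k(d-3)/2}$. For $d\ge 4$, say, $d-3=\Omega(d)$, so the bound is $k^{\Omega(kd)}$. This matches the upper bound $k^{kd+1}$ of \Cref{theorem:find_compact_representations} up to a constant factor in the exponent.

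\textbf{Main obstacle.} There is no real difficulty in this proof. The only point needing care is the distinctness of the $x_i$ in Step 1, together with stating the mild range conditions ($d\ge 4$, $k$ large) under which the $\Omega(\cdot)$ notation is meaningful.
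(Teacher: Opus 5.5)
Your proposal is correct and follows the paper's proof. The paper argues that $S_d$ would have to contain $x_1,\dots,x_{d-1}$, which is impossible since sets meet each of the $d-2$ columns once, and then specialises the $(k!)^{c-1}$ counting claim to $c=d-2$. Your explicit check that the $x_i$ are pairwise distinct, which the paper's ``distinct columns'' step silently uses, and your Stirling estimate justifying $k^{\Omega(kd)}$ for $d\ge 4$ fill small details the paper leaves implicit.
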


\ifthenelse{\equal{\version}{compressed}}
{
}
{
\begin{proof}
    When $c=d-2$, all sets in $\mathcal{F}$ are of size $d-2$.
    By definition, a $d$-semi-ladder requires $d$ sets $S_1, \dots, S_d$ and $d$ elements $x_1, \dots, x_d$ such that $x_i \in S_j$ for $j > i$, and $x_i \notin S_i$. 
    Specifically, the set $S_d$ must contain the $d-1$ elements $x_1, \dots, x_{d-1}$. 
    Since no set in $\mathcal{F}$ can contain two elements from the same column, these $d-1$ elements must belong to distinct columns. 
    However, the grid only has $(d-2)$ columns. Therefore, such a set $S_d$ cannot exist.
    Thus, there is no $d$-semi-ladder in $\mathcal{F}$.
    
\end{proof}
}

Next, we prove an impossibility result on obtaining small compact representations. 
The $\textsf{W[2]}$-hardness of \textsc{Set Cover} rules out an \fpt algorithm that computes an \fpt sized compact representation.
However, it does not preclude the existence of such small compact representations.
We prove that for general instances where the semi-ladder index is unbounded, the size of any compact representation cannot be bounded by any \fpt function.

\begin{restatable}[\app]{theorem}{insertlabelhere}
    \label{thm:tightness-of-compact-rep-in-general-set-systems}
    For any function $f$ and any constant $q$, there exists an integer $k$ and a set system $(\mathcal{U}, \mathcal{F})$ such that any compact representation of all minimal set covers of size at most $k$ has size strictly greater than $f(k) \cdot (|\mathcal{U}| + |\mathcal{F}|)^q$.
\end{restatable}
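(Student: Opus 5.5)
We will use the grid construction above with a suitable choice of $c$, together with the lower bound $|\mathscr{D}| \ge (k!)^{c-1}$ from \Cref{clm:imposs-count-covers_1}. In that instance $|\mathcal{U}| = kc$ and $|\mathcal{F}| = k^c$. The task is to pick $c$ as a function of $k$ so that $(k!)^{c-1}$ beats $f(k)\cdot(kc+k^c)^q$ for large $k$. The point is that the instance size grows only like $k^{c}$, whereas the number of minimal covers grows like $(k!)^{c-1}\approx k^{k(c-1)}$. So once $k$ is large relative to $q$, every additional column gains a factor of roughly $k^{k}$ in the lower bound but only a factor of about $k^{q}$ in the polynomial bound.

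Concretely, we fix $f$ and $q$ and may assume $f$ is nondecreasing and $f(k)\ge 1$. First choose $k \ge 2$ large enough that $k! > (2k)^{q}$; this holds for all $k$ beyond some threshold depending only on $q$. Next, for $c \ge 2$ we bound $|\mathcal{U}|+|\mathcal{F}| = kc + k^c \le 2k^c$, using $kc \le k^c$ for $k\ge 2$. This gives
\[
f(k)\,(|\mathcal{U}|+|\mathcal{F}|)^q \le f(k)\,2^q k^{cq}.
\]
It then suffices to have $(k!)^{c-1} > f(k)\,2^q k^{cq}$. Rewrite this as $\bigl(k!/k^{q}\bigr)^{c-1} > f(k)\,2^q k^{q}$. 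With $k$ fixed, the base $k!/k^q$ exceeds $2^q \ge 1$, and in fact exceeds $1$ strictly, so the left-hand side grows exponentially in $c$ while the right-hand side is a constant. Hence we can choose $c$ large enough, depending on $f(k)$, $q$ and $k$, to make the inequality hold. Applying \Cref{clm:imposs-count-covers_1} then shows that every compact representation has size at least $(k!)^{c-1}$, which is strictly greater than $f(k)(|\mathcal{U}|+|\mathcal{F}|)^q$.

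The only delicate point is the order of quantifiers. The parameter $k$ must be chosen first, depending only on $q$, so that the per-column gain $k!$ dominates the per-column cost $k^q$. Only after that is $c$ taken large in terms of $f(k)$. If both were scaled together, the comparison might fail. We must also rely on the validity of \Cref{clm:imposs-count-covers_1}, which we take as given: every tuple consists of singletons, and there are $(k!)^{c-1}$ partitions into $k$ transversal sets. This needs $k\ge 2$ and $c\ge 1$, both of which hold by our choices.
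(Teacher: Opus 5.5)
Your proposal is correct and takes essentially the same route as the paper. Like the paper, you fix $k$ first so that $k!$ dominates $k^q$, bound $|\mathcal{U}|+|\mathcal{F}|\le 2k^c$, and then take $c$ large so that $(k!)^{c-1}$ exceeds $f(k)\,2^q k^{cq}$. The only differences are cosmetic: the paper's condition is $\log_k(k!)>q$ while yours is the slightly stronger $k!>(2k)^q$, and your rewriting $(k!/k^q)^{c-1}>f(k)\,2^q k^q$ is the exponentiated form of the paper's logarithmic rearrangement.
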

\ifthenelse{\equal{\version}{compressed}}
{
}
{
\begin{proof}
    We choose $k$ to be any integer sufficiently large such that $\log_k(k!) > q$.
    Consider our construction $(\mathcal{U}, \mathcal{F})$ with $k$ rows and $c$ columns, where $c$ is an integer we will choose later.
    
    Recall that $|\mathcal{U}| = kc$ and $|\mathcal{F}| = k^c$.
    For any positive $c$, we have $kc \le k^c$, and thus $|\mathcal{U}| + |\mathcal{F}| \le 2k^c$. 
    
    Let $\mathscr{D}$ denote an arbitrary compact representation of all minimal set covers of size at most $k$.
    By \Cref{clm:imposs-count-covers}, we have $|\mathscr{D}| \ge (k!)^{c-1}$.
    Therefore, for the size to exceed the stated \fpt bound, it suffices to choose $c$ such that:
    \[
        |\mathscr{D}| \ge (k!)^{c-1} > f(k) \cdot (2k^c)^q \ge f(k) \cdot (|\mathcal{U}| + |\mathcal{F}|)^q.
    \]
    Taking the logarithm base $k$ yields:
    \[
        (c-1) \log_k(k!) > \log_k(f(k)) + q\log_k(2) + c \cdot q.
    \]
    Rearranging the terms gives:
    \[
        c \cdot \left( \log_k(k!) - q \right) > \log_k(f(k)) + q\log_k(2) + \log_k(k!).
    \]
    Since we chose $k$ to be sufficiently large, the coefficient $(\log_k(k!) - q)$ is a positive constant. 
    Moreover, since $k$ and $q$ are now fixed, the entire right-hand side is a constant.
    Finally, we set $c$ sufficiently large so that the inequality holds, and thus we have the theorem.
\end{proof}
}

\subsection{Applications of compactly representing all minimal set covers}
\label{sec:apps-of-compact-representations}

Now that we have a method to find the compact representations of all minimal set covers in \dslfree set systems, we can leverage it to solve \Csc efficiently, \Cref{thm:connected-set-cover}, and as a consequence prove the following result on \cds. 


\hide{\begin{restatable}[\app]{theorem}{cscalgo}
    \label{thm:connected-set-cover} There exists an algorithm for \Csc on \dslfree set systems that runs in $k^{kd+2}\cdot 2^k \cdot n^{\Oh(1)}$ time and uses polynomial space.
\end{restatable}}

\ifthenelse{\equal{\version}{compressed}}
{
}
{
    \begin{proof}
        Let $\mathscr{D}$ be the family of tuples returned by \Cref{theorem:find_compact_representations} in $k^{kd+2}\cdot \Oh(nm)$ time.
        We would like to invoke the \gst algorithm given by \Cref{prop : group steiner tree algo} to find a solution.
        For each tuple $\mathbb{T}=\{F_1,F_2,\ldots,F_\ell\}$, we define the groups to be $F_i$ for each $i \in [\ell]$. Now we will run the \gst algorithm given by \Cref{prop : group steiner tree algo} on the graph $\hat{G}$ to find a solution. We will call such an instance the \gst instance on tuple $\mathbb{T}$.
        Any solution $T=\{S_1,\ldots,S_\ell\}$ to \gst instance on any tuple is also a solution for the \Csc instance as at least one set of each family must be taken in $T$, which ensures that all elements are covered by condition~\ref{compact_rep:covering_condition}.
        Also, as $T$ is a solution of the \gst instance, $T$ is connected.
        What remains to be shown is that, if the \Csc instance parameterized by solution size is a \yes instance, then there exists a tuple such that \Cref{prop : group steiner tree algo} returns a solution of size at most $k$.
        Let $D^\star$ be an optimal solution of the \Csc instance of size at most $k$. Let $D'\subseteq D^\star$ be a minimal solution. We know by \Cref{def:compact_representations} that there exists a tuple $\mathbb{T}$ in $\mathscr{D}$ that \contains $D'$ as $|D'| \le k$. Since $D^\star$ is a solution for the \gst instance on the tuple $\mathbb{T}$, there exists a solution returned by \Cref{prop : group steiner tree algo} of size at most $k$.
        
        \noindent
        \textit{Running time analysis.} The family of tuples $\mathscr{D}$ of size at most $k^{kd+1}$ can be computed in $k^{kd+2}\cdot \Oh(nm)$ time (\Cref{theorem:find_compact_representations}).
        Now, for each tuple, the \gst algorithm (\Cref{prop : group steiner tree algo}) takes $2^\ell \cdot n^{\Oh(1)}$ time, the entire algorithm runs in $k^{kd+2}\cdot 2^k \cdot n^{\Oh(1)}$ time.
    \end{proof}
}


\begin{restatable}[\app]{corollary}{csccds}
    \label{corrolary:conn-dominating-set}
    There exists an algorithm for \cds on \dslfree graphs that runs in $k^{kd+2}\cdot 2^k \cdot n^{\Oh(1)}$ time and in polynomial space.
\end{restatable}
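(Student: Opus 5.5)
The plan is to reduce \cds on a \dslfree graph $G$ to \Csc and then invoke \Cref{thm:connected-set-cover}. Given an instance $(G,k)$ of \cds, first dispose of the trivial case where $G$ has a dominating set of size at most one: check every single vertex in polynomial time. Otherwise, build the \Csc instance with universe $\cU = V(G)$ and family $\cF = \{F_v = N_G[v] : v \in V(G)\}$, indexed by vertices as in \Cref{def:graph-semiladder}, so two vertices with equal closed neighborhoods remain distinct members. Take $\hat G$ to be $G$ itself under the identification $v \leftrightarrow F_v$, so $V(\hat G)=\cF$ and $F_uF_v\in E(\hat G)$ iff $uv\in E(G)$, and keep the same parameter $k$. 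By \Cref{def:graph-semiladder}, $G$ being \dslfree means precisely that $(\cU,\cF)$ is \dslfree, so \Cref{thm:connected-set-cover} applies.

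Next I check that solutions correspond exactly. For $S\subseteq V(G)$, the family $\{F_v : v\in S\}$ covers $\cU$ iff $\bigcup_{v\in S} N_G[v]=V(G)$, i.e., iff $N_G[S]=V(G)$. Also, $\hat G[\{F_v:v\in S\}]$ is isomorphic to $G[S]$, so one is connected iff the other is, and the two solutions have the same size. Hence $(G,k)$ is a yes-instance of \cds iff the constructed instance is a yes-instance of \Csc, and any \Csc solution maps back to a \cds solution by reading off the indices.

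For the running time, the construction takes polynomial time, with $n=m=|V(G)|$. By \Cref{thm:connected-set-cover}, the algorithm runs in $k^{kd+2}\cdot 2^k\cdot n^{\Oh(1)}$ time and uses polynomial space, which gives the corollary.

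The only delicate point is duplicate closed neighborhoods. If $\cF$ were treated as a plain set, two vertices with the same closed neighborhood would collapse into one member, and the graph $\hat G$ would be ill-defined. So I will stress that $\cF$ is an indexed family. I also need to confirm that \Cref{theorem:find_compact_representations} and the \gst step never rely on the members of $\cF$ being distinct. They do not: all membership tests in \Cref{alg:find_compact_representations} are performed on indexed members, and the semi-ladder condition is unaffected by repeating a set, since a semi-ladder needs $x_i\notin S_i$ together with $x_i\in S_j$ for $j>i$, which forces $S_i\ne S_j$.
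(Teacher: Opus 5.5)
Your proposal is correct and matches the paper's proof. Both arguments model \cds as \Csc with $\mathcal{U}=V(G)$, the indexed family of closed neighborhoods, and $\hat G$ identified with $G$, and then invoke \Cref{thm:connected-set-cover}. Your remarks on duplicate closed neighborhoods and the size-one case agree with the indexed-family convention of \Cref{def:graph-semiladder}; they spell out what the paper leaves as easy to show.
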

\ifthenelse{\equal{\version}{compressed}}
{
}
{
    \begin{proof}
        Given an instance $(\calG,k)$ of \cds, we model it as an instance $(({\cal U, \cal F}),\hat{G})$ of \Csc as follows. We set $\cU = V(\calG)$. The family $\cF$ consists of the closed neighborhoods of all the vertices in $\calG$.
    Two sets $A,B \in \cF$ have an edge between them in $\hat{G}$ if the corresponding vertices share an edge in $\calG$. It is easy to show that the instance $(\calG,k)$ of \cds has a solution of size at most $k$ if and only if the $(({\cal U, \cal F}),\hat{G})$ instance of \Csc has a solution of size at most $k$.
\end{proof}
}

\newcommand{\rbds}{\textsc{Red-Blue Dominating Set}\xspace}
\section{Cores for Dominators and Domination Requirements}
\label{sec:cores}


In this section, we develop structural tools for the lossy compression.  We reinterpret \ds as \rbds, separating possible dominators from vertices that must be dominated.  The two objects below are parameter-$k$ cores: all quantifiers over dominating sets are restricted to inclusion-wise minimal dominating sets of size at most $k$.

\begin{definition}[$k$-dominator core]\label{def : Dominator Core}
Let $G$ be a graph and $k\in\mathbb N$.  A family $\mathcal C=\{C_1,\ldots,C_t\}$ of pairwise disjoint subsets of $V(G)$ is a \emph{$k$-dominator core} of $G$ if the following hold for every inclusion-wise minimal dominating set $D$ of $G$ with $|D|\le k$.
\begin{enumerate}
    \item \emph{(Enclosure)} $D\subseteq \bigcup_{i\in[t]} C_i$.
    \item \emph{(Unique contribution)} $|D\cap C_i|\le 1$ for every $i\in[t]$.
    \item \emph{(Replacement)} If $v\in D\cap C_i$ and $v'\in C_i$, then $(D\setminus\{v\})\cup\{v'\}$ is a dominating set of $G$.
\end{enumerate}
We call $|\mathcal C|$ the size of the core.
\end{definition}

Given a \dslfree graph $G$, our goal is to compute a $k$-grouped domination core of size at most $k^d$ and a $k$-dominator core of size $\Oh(k^{d^2})$.




\begin{definition}[Domination of Sets and Families] \label{def : dominating set and families}
    In the graph $G$, a vertex $u$ is said to \emph{dominate} a subset $S\sse V(G)$ if $u$ dominates every vertex in $S$.
    A family of vertex subsets $\F \subseteq 2^{V(G)}$ is said to be dominated by a set $D\subseteq V(G)$ if for each set $B \in \F$, there exists a vertex $v \in D$ that dominates $B$.
\end{definition}

The next definition groups vertices into subsets that must be collectively dominated. 


\begin{definition}[$k$-grouped domination core]\label{def : Grouped Domination Core}
Let $G$ be a graph and $k\in\mathbb N$.  A family $\mathcal G$ of pairwise disjoint subsets of $V(G)$ is a \emph{$k$-grouped domination core} of $G$ if the following hold.
\begin{enumerate}
    \item \emph{(Collective domination)} For every inclusion-wise minimal dominating set $D$ of $G$ with $|D|\le k$ and every batch $B\in\mathcal G$, some vertex of $D$ dominates all vertices of $B$.  We refer to each set $B \in {\mathcal G}$ as a \emph{batch}.
    \item \emph{(Preservation)} Every set $D\subseteq V(G)$ that dominates every batch in $\mathcal G$ is a dominating set of $G$.
\end{enumerate}
\end{definition}



Next, we establish the necessary structural properties of \dslfree graphs and the required grouping operation.
The \textsc{Group} operation transforms a tuple $(\U, \F, f, g, S)$ into $(\U', \F', f', g')$ by replacing all vertices in $S$ with a single representative. 
This preserves \dslfree structure and maintains domination and grouping properties. 
In particular, any dominating set in $( \U',\F', f', g')$ corresponds naturally to one in the original graph. 
We use this operation as a key subroutine in our algorithm; its formal description follows.



\ifthenelse{\equal{\version}{compressed}}
{
}
{
The following result will be used to argue the base case in the proof of \Cref{thm:existence-domination-dominator-core}.
Note that if a vertex appears in all closed neighborhoods of $G$, then that vertex itself is a solution; thus, the \cds instance parameterized by solution size becomes polynomial-time solvable and the vertex can be returned as a singleton in dominator core and grouped domination core if $k\ge 1$. Henceforth, we shall assume that the size of a minimum connected dominating set of $G$ is strictly greater than $1$.

\begin{proposition}[\cite{DBLP:journals/toct/Guillemot25}]
    \label{prop : bounding number of neighbourhoods}
    Let $\mathcal{F}$ be a \dslfree family of sets over a universe of size $n$ such that no element of the universe appears in all the sets of $\mathcal{F}$. Then, the number of distinct sets in $\mathcal{F}$ is bounded by $\Oh(n^d)$. 
\end{proposition}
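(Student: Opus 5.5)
The plan is an encoding argument. I will assign to every distinct set $S\in\mathcal F$ a short \emph{witness}: a pair $(A_S,B_S)$ with $A_S\subseteq S$, $B_S\cap S=\emptyset$ and $|A_S|+|B_S|\le d$. The witness should single out $S$, meaning $S$ is the only member of $\mathcal F$ that contains $A_S$ and avoids $B_S$. The map $S\mapsto(A_S,B_S)$ is then injective on distinct sets. The number of pairs of disjoint subsets of total size at most $d$ is $\sum_{s\le d}\binom{n}{s}2^s=\Oh(n^d)$ for fixed $d$, which gives the bound.

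\textbf{Step 1: building the witness greedily.} Fix $S$ and start with $A=B=\emptyset$. While some $T\in\mathcal F$ with $T\neq S$ contains $A$ and avoids $B$, I extend the pair. I would choose such a $T$ with $S\setminus T\neq\emptyset$ if one exists. In that case I pick $x\in S\setminus T$ and add it to $A$; call this an $A$-step. Otherwise every surviving $T\neq S$ satisfies $T\supsetneq S$. Then I pick $T$ minimal under inclusion among the survivors, choose $y\in T\setminus S$, and add it to $B$; call this a $B$-step.

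\textbf{Step 2: bounding the $A$-steps by a semi-ladder.} Let the $A$-steps produce elements $x_1,\dots,x_a$ and sets $T_1,\dots,T_a$, in order. Then $x_i\notin T_i$. Also $x_i\in T_j$ for every later $j>i$, because every later survivor contains the current $A\ni x_i$. Finally $S$ contains every $x_i$. To close the ladder I append $S$ as the last set, paired with an element outside $S$. Such an element exists: by hypothesis no element lies in all sets, so if $S\ne\mathcal U$ I can take any element of $\mathcal U\setminus S$. This yields an $(a+1)$-semi-ladder, so $a\le d-2$.

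\textbf{Step 3: bounding the $B$-steps, the main obstacle.} Once the $A$-steps stop, all remaining candidates are proper supersets of $S$. Each $B$-step removes a minimal superset together with everything else containing $y$. These steps do not form a semi-ladder directly, because the pattern is reversed: $y_i\in T_i$, $y_i\notin S$, and $y_i\notin T_j$ for $j>i$. My first attempt is to combine the two kinds of steps into a single chain in $\intmath(\mathcal F)$ and use \Cref{prop : bounded_length_of_int_closure}, which bounds chain lengths by $d$, so that the total number of steps is at most $d$. Concretely, I would replace $A$ by the intersection $I_A$ of all sets containing $A$. Each $A$-step strictly enlarges $I_A$, so it climbs a chain in $\intmath(\mathcal F)$ that ends at a set contained in $S$. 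The candidates after the $A$-phase are exactly the members of $\mathcal F$ containing $I_A\supseteq S$. I would then argue that the minimal such supersets can be separated from $S$ using few extra elements.

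If this bookkeeping fails, the fallback is to drop $B$ entirely. I would instead charge $S$ to the pair $(A,\text{rank})$, where the rank is the number of elements added to $A$, and encode $S$ as the intersection closure of $A$. This is the alternative to pursue if sets strictly above $S$ cause trouble. Either way, the crux of the proof is controlling these supersets of $S$ so that the total witness size stays at most $d$ and the count stays at $n^{d}$ rather than $n^{2d}$.
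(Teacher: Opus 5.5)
\textbf{Step 3 fails.} The witness it is meant to produce need not exist: a pair $(A_S,B_S)$ of total size at most $d$ such that $S$ is the \emph{only} member of $\mathcal F$ containing $A_S$ and avoiding $B_S$. Take $\mathcal F=\{\emptyset\}\cup\{\{y\}: y\in\mathcal U\}$. No element lies in all sets, and $\mathcal F$ is $3$-semi-ladder-free: $x_1\in S_2\cap S_3$ forces $S_2=S_3=\{x_1\}$, and then $x_2\in S_3$ gives $x_2=x_1\in S_2$, contradicting $x_2\notin S_2$. For $S=\emptyset$ you must have $A_S=\emptyset$, and your greedy indeed makes no $A$-step. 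Uniqueness then forces $B_S=\mathcal U$, i.e.\ $n$ successive $B$-steps. The supersets of $S$ that survive the $A$-phase can be numerous and pairwise incomparable, and no bounded set of excluded elements separates $S$ from all of them. No bookkeeping in $\operatorname{int}(\mathcal F)$ can repair this, because the obstruction lies in what the witness is required to do, not in how you prove it.

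\textbf{The fix is already in your fallback.} What you need is \emph{recoverability}, not uniqueness. When the $A$-phase stops, every $T\in\mathcal F$ with $A\subseteq T$ satisfies $S\subseteq T$. Since also $A\subseteq S\in\mathcal F$, you get $S=\bigcap\{T\in\mathcal F: A\subseteq T\}$. So $S$ is determined by $A$ alone, and $S\mapsto A$ is injective on distinct sets. Step 2 gives $|A|\le d-1$, hence there are at most $\sum_{s\le d-1}\binom{n}{s}=\mathcal{O}(n^{d-1})$ distinct sets. Drop $B$, the rank, and Step 3 entirely; proper supersets of $S$ that contain $A$ are harmless.

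A minor point on Step 2: it does not use the hypothesis that no element lies in all sets. An element outside $S$ exists simply because $S\neq\mathcal U$, and if $S=\mathcal U$ the $A$-steps alone form an $a$-semi-ladder, giving $a\le d-1$.

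The paper gives no proof of this proposition; it imports it from Guillemot. The repaired argument is the closure-generator reasoning behind the paper's use of $\operatorname{int}(\mathcal F)$. Your $A$-steps climb a strict chain $\bigcap\{T: A_0\subseteq T\}\subsetneq\bigcap\{T: A_1\subseteq T\}\subsetneq\cdots$ in $\operatorname{int}(\mathcal F)$ ending at $S$, so the chain-length bound caps $|A|$.
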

}

\subparagraph*{Operation \textsc{Group}.}
A \emph{relevant tuple} for a graph $G$ is a tuple $(\U,\F,f,g)$ where $(\U,\F)$ is a set system, $f:\F\to 2^{V(G)}$, and $g:\U\to 2^{V(G)}$.  The operation \textsc{Group} takes a relevant tuple and a subset $S\subseteq\U$ and replaces all elements of $S$ by one new representative $v_S$.  Formally, $\U'=(\U\setminus S)\cup\{v_S\}$.  Each set $A\in\F$ is mapped to
\[
\sigma(A)=
\begin{cases}
(A\setminus S)\cup\{v_S\},&\text{if }S\subseteq A,\\
A\setminus S,&\text{otherwise.}
\end{cases}
\]
After duplicate images are identified, let $\F'=\{\sigma(A):A\in\F\}$.  For $B\in\F'$, define $f'(B)=\bigcup_{A\in\F:\sigma(A)=B} f(A)$.  Finally, set $g'(u)=g(u)$ for $u\in\U\setminus S$ and $g'(v_S)=\bigcup_{s\in S}g(s)$.  If $|S|\ge 2$, then $|\U'|<|\U|$.

\begin{proposition}[Lemma 3.3 in \cite{DBLP:journals/toct/Guillemot25}]
    \label{prop:semiladder closed grouping}
    Let $(\U,\F)$ be a set system where $\F$ is \dslfree. Then, for any subset $S\subseteq\U$, the set family $(\U',\F')$ returned by \textsc{Group}$(\U,\F,f,g,S)$ is \dslfree.
\end{proposition}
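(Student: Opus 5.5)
The plan is to argue directly from \Cref{def:semiladder-set-family}: suppose $(\U',\F')$ contains a $d$-semi-ladder, i.e.\ sets $B_1,\ldots,B_d\in\F'$ and elements $y_1,\ldots,y_d\in\U'$ with $y_i\in B_j$ for $i<j$ and $y_i\notin B_i$. We then pull this back to a $d$-semi-ladder in $(\U,\F)$, contradicting the hypothesis. For each $j$, choose a preimage $A_j\in\F$ with $\sigma(A_j)=B_j$. The $A_j$ are pairwise distinct because the $B_j$ are: the indices $j$ are distinct, and the semi-ladder forces $B_i\neq B_j$ for $i<j$, since $y_i\in B_j\setminus B_i$.

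Next we lift the elements. Every $y_i$ with $y_i\neq v_S$ lies in $\U\setminus S$, and we set $x_i=y_i$. On $\U\setminus S$ the map $\sigma$ does not change membership: for $u\notin S$ we have $u\in\sigma(A)$ if and only if $u\in A$. Hence all conditions involving such $x_i$ transfer verbatim.

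The representative $v_S$ is the main obstacle. By distinctness it equals at most one $y_i$, say $y_p=v_S$. From the definition of $\sigma$, $v_S\in\sigma(A)$ holds exactly when $S\subseteq A$. So $y_p\in B_j$ for $j>p$ means $S\subseteq A_j$, and $y_p\notin B_p$ means $S\not\subseteq A_p$. We therefore pick $x_p\in S\setminus A_p$, which is nonempty. Then $x_p\notin A_p$, and $x_p\in S\subseteq A_j$ for all $j>p$, as required.

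The lifted elements are distinct: $x_p\in S$, while all other $x_i\notin S$. So $A_1,\ldots,A_d$ and $x_1,\ldots,x_d$ form a $d$-semi-ladder in $\F$, which is a contradiction. The only delicate points are this choice of $x_p$ and checking that the identification of duplicate images in $\F'$ does not break distinctness. Both were handled above: a single witness $x_p$ suffices because the condition ``$S\subseteq A_j$ for all $j>p$'' is stronger than what we need.
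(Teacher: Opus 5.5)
Your proof is correct. Pulling a $d$-semi-ladder of $(\U',\F')$ back through $\sigma$ works for two reasons. Membership of elements outside $S$ is unchanged by $\sigma$. The at most one occurrence of $v_S$ can be replaced by any $x_p\in S\setminus A_p$, and this element lies in every $A_j$ with $j>p$ because $S\subseteq A_j$ there.

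The paper gives no proof of its own; it imports the statement as Lemma~3.3 of Guillemot. Your argument is therefore a self-contained verification for exactly the \textsc{Group} operation defined here. It also covers degenerate cases such as $S=\emptyset$, where $v_S$ lies in every set of $\F'$ and so cannot appear in a semi-ladder.

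One small remark: you do not need to justify distinctness of the $y_i$, the $A_j$, or the lifted $x_i$ separately. It follows from the semi-ladder conditions themselves, since for $i<j$ we have $y_i\in B_j$ while $y_j\notin B_j$.
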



\subparagraph*{The \textsc{FindCore} algorithm.}
We introduce the \textsc{FindCore} algorithm (see \Cref{algo:findcore}), which constructs both $k$-dominator cores and $k$-grouped domination cores \Ma{motivated by \cite{DBLP:journals/toct/Guillemot25}}. 
Intuitively, \textsc{FindCore} recursively compresses the universe while preserving domination structure. 
We complete the analysis in \Cref{sec:app-cores} showing that it yields \Cref{thm:existence-domination-dominator-core}.


\ifthenelse{\equal{\version}{compressed}}
{
}
{
The \emph{image set} of a function $h: A \to B$ denotes the subset of elements of $B$ that have a pre-image in $A$, and is denoted by $f(A)$.
}



\begin{algorithm}[h!]
    \caption{Finding Dominator Core and Grouped Domination Core}
    \label{algo:findcore}
     
    \begin{algorithmic}[1]
        \Procedure{FindCore}{$\mathcal{U}, \F, f, g$}\\
        \hspace*{\algorithmicindent} \textbf{Input}: {Relevant tuple $( \mathcal{U}, \mathcal{F}, f, g)$ of a graph $G$. }\\
        \hspace*{\algorithmicindent} \textbf{Output}: {$k$-dominator core and $k$-grouped domination core}
        \If{$|\mathcal{U}| \le k^d$}
        \State \Return $(f(\F), g(\U))$ \Comment{the dominator core and
        the grouped domination core.}
        \EndIf
        \State Let $\intmath(\mathcal{F})$ be the intersection closure of $\mathcal{F}$.
        \State Let $S \in \intmath(\mathcal{F})$ which satisfies $k^{\ell(S)} < |S|$ and $\ell(S) \leq \ell(X)$ for every $X \in \intmath(\mathcal{F})$ with $k^{\ell(X)} < |X|$.
        \If{no such set $S$ exists}
        \State \label{base case of induction algo: findcore} \Return $(f(\F), g(\U))$ \Comment{the dominator core and
        the grouped domination core.}
        \Else 
        \State $( \mathcal{U}',\mathcal{F}', f', g') \gets \textsc{Group}(\mathcal{U}, \mathcal{F}, f, g, S)$
        \State \label{algo:recursive-call} \Return \Call{FindCore}{$ \mathcal{U}',\mathcal{F}', f', g'$}
        \EndIf
        \EndProcedure
    \end{algorithmic}
\end{algorithm}


\ifthenelse{\equal{\version}{compressed}}
{
}
{
Using the reasoning of \cite{DBLP:journals/toct/Guillemot25}, we can also show the following observation, which will be used in the proof of base case of \Cref{lemma:inductive-core}, the main technical result of this section. For the sake of completeness, we provide the proof of this observation here.

\begin{observation}[\cite{DBLP:journals/toct/Guillemot25}\app]\ma{why does it have reference and also a proof? So: The proof is heavily based on the reasoning of the cited paper}
\label{obs:existance of grouping}
    For the \dslfree set family $\Co{F}$, let $\intmath(\mathcal{F})$ denote the intersection closure of $\mathcal{F}$. If there exists a set $S' \in \intmath(\mathcal{F})$ with $k^{\ell(S')} < |S'|$, then Line~\ref{base case of induction algo: findcore} of \textsc{FindCore}$( \mathcal{U}, \mathcal{F}, f, g, S)$ is never executed.
\end{observation}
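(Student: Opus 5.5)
The plan is to show that whenever some $S'\in\intmath(\mathcal F)$ satisfies $k^{\ell(S')}<|S'|$, the selection step of \textsc{FindCore} finds a set $S$ with the required minimality property. Then the test ``no such set $S$ exists'' fails, and the algorithm takes the \textsc{Group} branch instead of Line~\ref{base case of induction algo: findcore}. Let
\[
\mathcal X=\{X\in\intmath(\mathcal F): k^{\ell(X)}<|X|\}.
\]
By hypothesis $S'\in\mathcal X$, so $\mathcal X\neq\emptyset$.

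First I will argue that $\ell(\cdot)$ is well defined and takes only finitely many values on $\mathcal X$. The universe $\U$ is finite, so $\intmath(\mathcal F)\subseteq 2^{\U}$ is a finite family. Every strict chain inside it therefore has finite length, and $\ell(X)$ is a well-defined non-negative integer for each $X\in\intmath(\mathcal F)$. Moreover, $\mathcal F$ is \dslfree. This holds for the input family by assumption, and it is preserved across the recursive calls by \Cref{prop:semiladder closed grouping}. Hence \Cref{prop : bounded_length_of_int_closure} gives $\ell(X)\in\{0,1,\ldots,d\}$ for every $X\in\intmath(\mathcal F)$.

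Next I will take $S$ to be any element of $\mathcal X$ minimising $\ell$. Such an element exists because $\mathcal X$ is a nonempty finite set and the minimum of $\ell$ over it is attained. By construction, $S$ satisfies $k^{\ell(S)}<|S|$ and $\ell(S)\le \ell(X)$ for every $X\in\mathcal X$. These are exactly the two conditions demanded on the selection line of \Cref{algo:findcore}. The algorithm therefore never reaches Line~\ref{base case of induction algo: findcore} in this situation, and it calls \textsc{Group} on $S$.

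The only point needing care is that the hypothesis of \Cref{prop : bounded_length_of_int_closure} must hold in every recursive call, not just at the top level; this is handled by \Cref{prop:semiladder closed grouping}. Strictly speaking, finiteness of $\intmath(\mathcal F)$ already suffices for existence of the minimiser, so the $d$ bound is not essential here. I will also record the consequence used later. Whenever $|\U|>k^d$, the set $\U\in\intmath(\mathcal F)$ itself witnesses the hypothesis, since $k^{\ell(\U)}\le k^d<|\U|$ (for $k\ge 1$). So Line~\ref{base case of induction algo: findcore} can only be executed when every set $X$ of the intersection closure satisfies $|X|\le k^{\ell(X)}$.
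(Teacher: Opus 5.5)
Your proposal is correct and follows essentially the same route as the paper: take the nonempty, finite family of sets $X\in\intmath(\mathcal F)$ with $k^{\ell(X)}<|X|$ and pick a member minimising $\ell$, which is a valid choice on the selection line, so the ``no such set'' branch is not taken. Your $\mathcal X$ uses the per-set threshold $k^{\ell(X)}$, whereas the paper's $\hat F$ uses the fixed threshold $k^{\ell(S')}$. Your version is slightly cleaner, because with it the minimality condition demanded on the selection line follows immediately.
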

}
\ifthenelse{\equal{\version}{compressed}}
{
}
{
    \begin{proof}
        Assume for the sake of contradiction that there exists a set $S' \in \intmath(\mathcal{F})$ with $|S'| > k^{\ell(S')}$ and line~7 of \textsc{FindCore}$(\mathcal{F}, \mathcal{U}, f, g)$ is executed. Let $\hat{F}$ be the family of sets in $\intmath(\mathcal{F})$ that have size greater than $k^{\ell(S')}$ and let $i$ be the minimum value of $\ell(S'')$ for all $S'' \in \hat{F}$. We know that $\hat{F}$ is non-empty as $S' \in \hat{F}$. Let $S$ be the set in $\hat{F}$ such that $\ell(S) = i$. By the definition of $i$, there is no set $S'' \in \intmath(\mathcal{F})$ with $\ell(S'') < i$ and $|S''| > k^{\ell(S'')}$. Hence, line~7 of \textsc{FindCore}$(\mathcal{F}, \mathcal{U}, f, g)$ is never executed, which is a contradiction
    \end{proof}
}

\ifthenelse{\equal{\version}{compressed}}
{
}
{
The following lemma serves as the base case for the inductive proof of \Cref{thm:existence-domination-dominator-core}. It establishes that if the universe $\mathcal{U}$ is sufficiently small, then the image sets of $f$ and $g$ already form the required dominator and grouped domination cores, respectively.





\begin{lemma}\label{lemma:core base case}Consider a relevant tuple $(\U, \F, f,g)$ for a graph $G$.
    Suppose that the set family $\F$ is \dslfree.\hide{ with the corresponding universe $\U$.} Moreover, suppose that the image set of $f$ forms a $k$-dominator core of $G$, the image set of $g$ forms a $k$-grouped domination core of $G$, and $|\mathcal{U}| \le k^d$. Then, \textsc{FindCore}$(\mathcal{U}, \mathcal{F}, f, g)$ outputs a $k$-grouped domination core of size at most $k^d$ and a $k$-dominator core of size at most $\Oh(k^{d^2})$.
    \hide{\begin{itemize}
        \item \textsc{FindCore}$(\mathcal{U},\mathcal{F},  f, g)$ outputs a $k$-grouped domination core of size at most $k^d$; and
        \item \textsc{FindCore}$(\mathcal{U}, \mathcal{F}, f, g)$ outputs a dominator core of size at most $\Oh(k^{d^2})$.
    \end{itemize}}
\end{lemma}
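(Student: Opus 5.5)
Since $|\mathcal U|\le k^d$, the first branch of \textsc{FindCore} fires and the procedure returns $(f(\F),g(\U))$ unchanged. By hypothesis the image of $f$ is already a $k$-dominator core and the image of $g$ is already a $k$-grouped domination core of $G$. So correctness (the core properties) is inherited directly, and it remains only to bound the two sizes.

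For the grouped domination core, the batches are the sets $g(u)$ for $u\in\U$. Hence the number of batches is at most $|g(\U)|\le|\U|\le k^d$, which gives the first bound.

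For the dominator core, the classes are the sets $f(A)$ for $A\in\F$, so the size is at most the number of distinct sets in $\F$; recall that \textsc{Group} identifies duplicate images. I will bound this count with \Cref{prop : bounding number of neighbourhoods}. That proposition applies to a \dslfree family over a universe of size $n'=|\U|\le k^d$ in which no element lies in all sets, and it yields $\Oh((k^d)^d)=\Oh(k^{d^2})$ distinct sets.

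The hypothesis ``no common element'' is the main obstacle. I would handle it as follows. If some $u\in\U$ lies in every set of $\F$, consider any vertex $w\in g(u)$ and any $v\in f(A)$. I would use the invariant, maintained by \textsc{Group} and true initially with $f(F_v)=\{v\}$ and $g(v)=\{v\}$, that membership $u\in A$ means every vertex of $f(A)$ dominates $g(u)$. Then $w$ is dominated by every candidate dominator, in particular by every vertex of $G$ appearing in the core. Via the enclosure property, this places us in the degenerate situation excluded earlier, where a single vertex is a dominating set (the paper assumes the minimum connected dominating set has size greater than $1$).

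Alternatively, I would strip such universal elements: removing an element common to all sets does not change the number of distinct sets and preserves \dslfree-ness, since sub-systems inherit it. After removing them, the proposition applies to the reduced universe, which still has size at most $k^d$. I would prefer this second route, since it avoids relying on the invariant. It handles the corner case where all sets become empty, leaving at most one distinct set.

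Combining the two bounds, the output is a $k$-grouped domination core of size at most $k^d$ and a $k$-dominator core of size $\Oh(k^{d^2})$.
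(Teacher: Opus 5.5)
Your proposal is correct and follows the paper's argument: the first branch of \textsc{FindCore} returns $(f(\F),g(\U))$ unchanged, the batch count is at most $|\U|\le k^d$, and the number of dominator classes is at most the number of distinct sets in $\F$, which Guillemot's counting bound caps at $\Oh((k^d)^d)=\Oh(k^{d^2})$. You go slightly beyond the paper on that bound's ``no common element'' hypothesis, which the paper covers only through its global assumption that $G$ has no dominating vertex: your preferred stripping route is sound for an arbitrary relevant tuple, whereas your first route would also need the $f$-images to jointly cover $V(G)$ (true for tuples obtained from the initial one by \textsc{Group}), since the enclosure property alone does not yield a dominating vertex.
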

\begin{proof}
    The statement of \Cref{lemma:core base case} follows directly from \Cref{prop : bounding number of neighbourhoods}, together with the assumption that the image of $f$ forms a $k$-dominator core and the image of $g$ forms a
    $k$-grouped domination core.
\end{proof}
}

\ifthenelse{\equal{\version}{compressed}}
{
}
{
We begin by defining the problem of \rbds, which will be used as a tool in the proof of the main lemma of this section. In \rbds, we are given a bipartite graph $G = (R \uplus B, E)$ and an integer $k$, and the goal is to determine whether there exists a subset $S \subseteq R$ of size at most $k$ such that every vertex in $B$ is adjacent to at least one vertex in $S$. We will call $R$ as the red set and $B$ as the blue set of $G$.


Let $\hat{G}= (R\uplus B, E)$ be the graph constructed as follows. We create two copies of the vertex set $V(G)$, denoted by $R$ and $B$. We add edge $(x,y)$ to $E$ if the corresponding vertex of $x$ in $R$ is adjacent to the corresponding vertex of $y$ in $B$ in the original graph $G$ or if $x$ and $y$ correspond to the same vertex in $G$.

Now, for a relevant tuple $(\U, \F,  f, g, S)$, we define the \emph{relevant red-blue dominating set} instance as follows.
$V(G') = R \uplus B $ where $R$ and $B$ are copies of vertices in $V(G)$. Moreover, the edges $E(G')$ are defined as follows: for each subset $S'\in \mathcal{F}$, we consider each element $v \in f(S')$. The corresponding vertex $v\in R$ is adjacent to the corresponding vertices of $B$ in the set $\cup_{s\in S'}g(s) $.
Both of these constructions will be used to design an invariant which will be useful to prove the existence of a bounded dominator core and a grouped domination core and thus help in proving the correctness using strong induction. One will give an handle to the original graph $G$ and the other will give a handle to the properties of the relevant tuple $(\U, \F,  f, g)$.




\begin{observation}\label{obs:rbds-ds}
    A set $S \subseteq V(G)$ is a dominating set of $G$ if and only if the corresponding vertices of $S$ in $R$ form a red-blue dominating set of $\hat{G}$.
\end{observation}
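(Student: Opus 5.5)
The statement to prove is \Cref{obs:rbds-ds}. The plan is to unfold both definitions and compare them vertex by vertex, using only the construction of $\hat{G}$.

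First, record the adjacency rule. By construction, a red copy $x_R\in R$ of $x\in V(G)$ is adjacent to the blue copy $y_B\in B$ of $y\in V(G)$ exactly when $xy\in E(G)$ or $x=y$. In other words, $N_{\hat{G}}(x_R)=\{y_B : y\in N_G[x]\}$. So each red vertex sees precisely the blue copies of its closed neighborhood in $G$. Note also that $\hat{G}$ is bipartite with sides $R$ and $B$, because edges are only ever added between a red and a blue copy. Hence it is a valid \rbds instance.

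For the forward direction, let $S\subseteq V(G)$ be a dominating set of $G$, and let $S_R$ be its set of red copies. Take any blue vertex $y_B$. Since $S$ dominates $G$, there is some $x\in S$ with $y\in N_G[x]$. By the adjacency rule, $x_R\in S_R$ is adjacent to $y_B$. Therefore $S_R$ dominates all of $B$.

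For the converse, suppose $S_R$ dominates every vertex of $B$. Fix $y\in V(G)$. Its blue copy $y_B$ has a neighbor $x_R\in S_R$. By the adjacency rule, $y\in N_G[x]$ with $x\in S$. Hence $N_G[S]=V(G)$.

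I expect no real obstacle here. The only point needing care is the convention that red-blue domination uses open adjacency in the bipartite graph. This is exactly why the construction adds the edge between the two copies of the same vertex: it encodes self-domination. I would state that remark explicitly when applying the adjacency rule.
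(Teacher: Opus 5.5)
Your proof is correct. It matches the paper's approach: the paper states this observation without proof, as an immediate consequence of the construction, since the red copy of $x$ is adjacent exactly to the blue copies of $N_G[x]$ (the edge between the two copies of the same vertex encodes self-domination, as you note).
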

This preserves the dominating sets of $G$. In particular, it preserves all the minimal dominating sets of $G$. This is an observation that will be used in the following pivotal lemma.


\begin{restatable}[\app]{lemma}{inductivecore}
\label{lemma:inductive-core}
Let $(\mathcal{U}, \mathcal{F}, f, g)$ be a relevant tuple of a graph $G$ and a subset $S \subseteq V(G)$. Suppose the following conditions are satisfied:
\begin{enumerate}
    \item $\mathcal{F}$ is a $d$-semiladder-free set family.
    \item  The image of $f$ constitutes a $k$-dominator core of $G$.
    \item The image of $g$ constitutes a $k$-grouped domination core of $G$.
    \item The graph $G'(\mathcal{F}, \mathcal{U}, f, g) = (R' \uplus B', E')$ is a subgraph of $\hat{G}$ such that their blue vertex sets are identical. Furthermore, any set $S' \subseteq V(\hat{G})$ of size at most $k$ is a red-blue dominating set of $\hat{G}$ if and only if it is a red-blue dominating set of $G'(\mathcal{F}, \mathcal{U}, f, g)$ of size at most $k$.
\end{enumerate}
Then, the algorithm \textsc{FindCore} outputs a $k$-dominator core and a $k$-grouped domination core of size at most $\Oh(k^{d^2})$ and $k^d$, respectively.
\end{restatable}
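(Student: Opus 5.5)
We argue by strong induction on $|\mathcal{U}|$, carrying the four hypotheses as an invariant through each recursive call of \textsc{FindCore}. The algorithm either returns $(f(\F),g(\U))$ or applies \textsc{Group} with a set $S\in\intmath(\F)$ satisfying $|S|>k^{\ell(S)}\ge 1$, so $|S|\ge 2$ and $|\U'|<|\U|$. This makes the induction well founded.

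\textbf{Termination cases.} If $|\U|\le k^d$, we invoke \Cref{lemma:core base case}: the image of $g$ has at most $|\U|\le k^d$ batches. The image of $f$ has at most as many classes as $\F$ has distinct sets, which by \Cref{prop : bounding number of neighbourhoods} is $\Oh(|\U|^d)=\Oh(k^{d^2})$. For that proposition we use our standing assumption that no vertex lies in every closed neighbourhood; we check that this property survives grouping, or handle it as the trivial case. If instead the algorithm reaches Line~\ref{base case of induction algo: findcore}, then no set of $\intmath(\F)$ violates $|X|\le k^{\ell(X)}$. In particular this holds for $X=\U\in\intmath(\F)$, and by \Cref{prop : bounded_length_of_int_closure} we have $\ell(\U)\le d$, so $|\U|\le k^d$ and we are back in the previous case.

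\textbf{Recursive step.} We must show that $(\U',\F',f',g')$ again satisfies hypotheses 1--4. Hypothesis 1 is \Cref{prop:semiladder closed grouping}. The heart of the proof is a claim about the grouped set $S$: every red-blue dominating set $D$ of $G'(\F,\U,f,g)$ of size at most $k$ that is minimal (equivalently, by hypothesis 4 and \Cref{obs:rbds-ds}, every minimal dominating set of $G$ of size $\le k$) contains a red vertex coming from some $A\in\F$ with $S\subseteq A$. Moreover, all red vertices from $\sigma$-preimages of a common set are interchangeable.

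We prove the first part in Guillemot's style, using the minimality of $\ell(S)$. Write $S=\bigcap\{A\in\F: S\subseteq A\}$ and suppose no solution set contains $S$. Then each of the at most $k$ chosen sets meets $S$ in a set $S\cap A\in\intmath(\F)$ that is a proper subset of $S$, so $\ell(S\cap A)<\ell(S)$. By the choice of $S$, each such set has size $|S\cap A|\le k^{\ell(S)-1}$. Hence at most $k\cdot k^{\ell(S)-1}=k^{\ell(S)}<|S|$ elements of $S$ are covered, a contradiction.

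From this claim we derive the invariant:
\begin{itemize}
\item Since some solution set contains all of $S$, the merged batch $g'(v_S)=\bigcup_{s\in S}g(s)$ is dominated by a single vertex of $D$, which gives collective domination. Preservation is immediate, because dominating the coarser batches implies dominating the original ones.
\item For $f'$: sets $A$ with the same image $\sigma(A)$ differ only inside $S$. A minimal solution uses at most one of them, since two would be redundant once $S$ is covered. Swapping one for another still covers everything, because $v_S$ is covered by the set containing $S$. This gives the unique contribution and replacement properties of the merged class $f'(B)$, while enclosure is inherited.
\item Hypothesis 4 holds because the new relevant red-blue instance is obtained from the old one by deleting edges from red vertices of sets $A\not\supseteq S$ into $g(S)$. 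Red-blue dominating sets of size $\le k$ are exactly preserved by the claim.
\end{itemize}

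We expect the main obstacle to be this recursive step, and specifically hypothesis 4 and the replacement property, because dominating sets that are not minimal must also be handled. We would reduce to minimal ones by noting that every dominating set of size $\le k$ contains a minimal one, and that adding red vertices never destroys red-blue domination. The size bounds then follow from the terminal case via the induction hypothesis.
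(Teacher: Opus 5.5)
Your proposal is correct and follows essentially the same route as the paper. Both argue by induction on $|\mathcal U|$. The base case uses Guillemot's set-counting bound, with $\mathcal U\in\intmath(\mathcal F)$ forcing $|\mathcal U|\le k^d$ whenever no violating set exists. Both then apply the counting argument to a violating $S$ of minimum $\ell(S)$, showing that every solution of size at most $k$ contains a single dominator of $g(S)$, and re-verify the four invariants after \textsc{Group}. The small differences are harmless: you prove unique contribution and replacement for merged $f'$-classes directly in the old red-blue instance, using the vertex whose set contains $S$, whereas the paper uses identical neighbourhoods in the grouped instance; and you explicitly note that the no-common-element hypothesis of the counting bound survives grouping, which the paper leaves implicit.
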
\ma{So: See once}

    
    
\ifthenelse{\equal{\version}{compressed}}
{
}
{
    \begin{proof}
        We proceed by induction on $|\mathcal{U}|$.
        We say that a vertex partially has edges to a set $S$ if it {has an edge to some}\ma{shares an edge with at least one vertex in $S$..?} but not all\ma{why can't it be "all".? . So: used as a notation in proof. Actually it can be 0 for the case.. we can give a different name} vertices in $S$.
        
        \noindent\textbf{Base case.}
        When $|\mathcal{U}| \le k^d$, the claim follows directly from \Cref{lemma:core base case}.
    
        \noindent\textbf{Induction hypothesis.}
        Assume that the statement holds for all instances of \textsc{FindCore} with $|\mathcal{U}| < t$.
    
        \noindent\textbf{Inductive step.}
        Consider $|\mathcal{U}| = t > k^d$. Let $\intmath(\mathcal{F})$ be the intersection closure of $\mathcal{F}$.
    
        \textit{Case 1.} Suppose line~7 of \textsc{FindCore}$(\mathcal{F}, \mathcal{U}, f, g)$ is triggered.
        Then $|\mathcal{U}| \le k^d$, since otherwise $\mathcal{U} \in \intmath(\Co{F})$ and by \Cref{obs:existance of grouping}, we reach a contradiction. Thus, by \Cref{lemma:core base case}, we obtain both the dominator and grouped domination cores.
    
        \textit{Case 2.} Let $S = \{v_{i_1}, \ldots, v_{i_q}\}$ be the set in $\intmath(\Co{F})$ satisfying
        \begin{equation}
            \label{eq:condition for S}
            \ell(S) = i, \quad |S| > k^{\ell(S)}, \quad \text{and no } S' \in \intmath(\Co{F}) \text{ with } \ell(S') < \ell(S) \text{ has } |S'| > k^{\ell(S')}.
        \end{equation}
        Thus, any intersection with $S$ by a set in $\mathcal{F}$ has size at most $k^{i-1}$. Also in $G'(\F,\U,f,g)$, no vertex in $R$ has an edge
        partially to any batch of the image set of $g$. Let $D'$ be any dominating set of $G'(\F,\U,f,g)$ of size at most $k$.
        By the assumption of the lemma, $D'$ is also a solution of $\hat{G}$.
        As the sets of $k$-grouped domination core (image sets of $g$) are disjoint equation~\ref{eq:condition for S} ensures that no $k$ vertices other than the one that has edges to all the vertices in $g(v_{i_1}) \cup \cdots \cup g(v_{i_q})$ can dominate all of $g(v_{i_1}) \cup \cdots \cup g(v_{i_q})$ in $G'(\F,\U,f,g)$. Therefore, some vertex in $G'(\F,\U,f,g)$ must dominate this entire group. We perform the \textsc{Group} operation on $(\F,\U,f,g,S)$ to obtain a new instance $(\F', \U', f', g')$.
    
        If the four conditions of \Cref{lemma:inductive-core} hold for \textsc{FindCore}$(\mathcal{F}', \mathcal{U}', f', g')$, then by the induction hypothesis, \textsc{FindCore}$(\mathcal{F}', \mathcal{U}', f', g')$ outputs a $k$-grouped domination core of size at most $k^d$ and a $k$-dominator core of size at most $\Oh(k^{d^2})$.
        Hence, by line~\ref{algo:recursive-call} of \textsc{FindCore}$(\mathcal{F}, \mathcal{U}, f, g)$, the required cores are obtained.
        Next, we will prove that the four conditions of \Cref{lemma:inductive-core} also hold for the instance \textsc{FindCore}$(\mathcal{F}', \mathcal{U}', f', g')$. Specifically, we argue as follows.
    
        Due to \Cref{prop:semiladder closed grouping}, we know that the \Sc instance $(\F', \U')$ is $d$-\semiladderfree. This proves the condition~1 of \Cref{lemma:inductive-core}.
    
    
        Let $D$ be a minimal dominating set of $G'(\F,\U,f,g)$ of size at most $k$.
        Hence, there exists a vertex in $G'(\F,\U,f,g)$ that dominates all the vertices in $\bigcup_{j=1}^q g(v_{i_j})$. Thus, all other vertices that have edges to a strict subset of $\bigcup_{j=1}^q g(v_{i_j})$ can have their edges removed from $\bigcup_{j=1}^q g(v_{i_j})$ thus forming $G'(\F',\U',f',g')$. Any dominating set of $G'(\F,\U,f,g)$ is also a dominating set of $G'(\F',\U',f',g')$. Moreover, any dominating set of $G'(\F',\U',f',g')$ is a dominating set of $\hat{G}$ because $G'(\F',\U',f',g')$ is a subgraph of $G'(\F,\U,f,g)$ and hence also a subgraph of $G''$ such that it contains all the vertices in the blue set.
        Thus, by condition~4, any set $S' \subseteq V(G)$ of size at most $k$ is a dominating set of $G'(\F',\U',f',g')$ if and only if it is a dominating set of $G'(\F,\U,f,g)$. Also, by \Cref{obs:rbds-ds}, $S$ is a dominating set of $G'(\F',\U',f',g')$ of size at most $k$ if and only if it is a dominating set of $\hat{G}$ of size at most $k$.
        Thus, Condition~4 of \textsc{FindCore}$(\F',\U',f',g')$ is therefore true.

        We observed that there exists a vertex in the dominating set that dominated all the vertices of $g(v_{i_1}) \cup \ldots \cup g(v_{i_q})$. Next, we would like to observe that sets in $\F$ only get merged into one as elements from the sets get deleted. Those are exactly the vertices that partially dominate vertices of $g(v_{i_1}) \cup \ldots \cup g(v_{i_q})$. Therefore, any minimal dominating set $D$ of $G'(\F,\U,f,g)$ must take at most one vertex from $f(S_1) \cup \ldots \cup f(S_{\ell})$, where $S_1, \dots, S_\ell$ are the sets in $\F$ that got merged together in a set $S''$; otherwise, we could delete an element from $D$ contradicting that $D$ was a minimal dominating set. As the vertices in a set $Q$ of the image set of $f'$ have edges to the same vertices in $G'(\F',\U',f',g')$, let $v$ and $v'$ be any two vertices of such a set $Q$. If $v \in D$, then $(D \setminus \{v\}) \cup \{v'\}$ is also a dominating set of $\hat{G}$, and the corresponding vertices also form a dominating set of $G$. All the sets in the image set of $f'$ are disjoint because some sets which were disjoint under $f$ got merged together. Since we are not deleting any vertex in the image set of $f'$ that has an edge (where all of its neighbours are not dominated by some other vertex), all minimal dominating sets are contained within the image set of $f'$. Hence, the image set of $f'$ is a $k$-dominator core of $G$. Thus, condition~2\ma{use ref, So: How to do?} of \Cref{lemma:inductive-core} for the instance \textsc{FindCore}$(\F',\U',f',g')$ is true.
    
        Since the image set of $g$ is a $k$-grouped domination core of $G$, all the sets in the image set of $g$ are disjoint. The merge operation maintains this disjointness. Thus, the sets in the image set of $g$ are disjoint. By the arguments used previously, there must be a vertex that dominates all the vertices in the newly grouped set. The same is true for the rest of the sets by the assumption that the image set of $g$ is a grouped domination core of $G$. Thus, the image set of $g'$ is a grouped domination core of $G$. Thus, we have proved condition~3 of \textsc{FindCore}$(\F',\U',f',g')$.
    
        \sloppypar{Since $|\U'| < |\U|$, as stated earlier we can apply the induction hypothesis to conclude that \textsc{FindCore}$(\mathcal{F}', \mathcal{U}', f', g')$ constructs a $k$-grouped domination core of size at most $k^{d}$ and a $k$-dominator core of size at most $k^{d^2}$.}
    \end{proof}
}

Armed with the proof for~\Cref{lemma:inductive-core}, the proof for the following theorem is immediate and we move it to \Cref{sec:app-cores}.
}


\ifthenelse{\equal{\version}{compressed}}
{
}
{
    \begin{proof}[Proof of \Cref{thm:existence-domination-dominator-core}]\ma{Move to app (later)}
    We now describe how to invoke \textsc{FindCore} to obtain the required cores.
    We initialize the instance $(\mathcal{F}, \mathcal{U}, f, g)$ as follows: 
    \[ \U=V(G), \quad \F = \{N_G[v] \mid v\in V(G)\}, \quad f(N_G[v]) = \{v\}, \textrm{ and } g(v) = \{v\}\]
    
    
    We verify that this initialization satisfies all four conditions of \Cref{lemma:inductive-core}:
    \begin{itemize}
        \item By \Cref{def:graph-semiladder}, the family $\mathcal{F}$ consisting of closed neighborhoods in a \dslfree graph $G$.  
        \item The image set of $f$, $f(\F)$, forms a $k$-dominator core of $G$ since it consists of singletons $\{v\}$ for each $v \in V(G)$; hence, every minimal dominating set is trivially contained in this family and intersects each set in at most one vertex.
        \item The image set of $g$ forms a grouped domination core of $G$ for the same reason, as any minimal dominating set dominates every singleton $\{v\}$.
        \item The auxiliary graph $G'(\F,\U,f,g)$ coincides with $\hat{G}$, since for every $S = N_G[v] \in \mathcal{F}$, all vertices in $f(S) = \{v\}$ are adjacent to all vertices in $g(v) = \{v\}$ by construction. Therefore, any dominating set of $G$ of size at most $k$ is also a dominating set of $G'(\F,\U,f,g)$, and vice versa.
    \end{itemize}
    
    Hence, all the preconditions of \Cref{lemma:inductive-core} hold for this instance.
    Applying \textsc{FindCore}$(\mathcal{F}, \mathcal{U}, f, g)$ yields the desired output. 
    This completes the proof of \Cref{thm:existence-domination-dominator-core}.
    \end{proof}
}
%
%


\section{Approximate compression for \cds on \dslfree graphs}
\label{sec:lossy-kernel}

We refer to~\Cref{def:approxreduction,def:approxcompression} for the formal definition of approximate compression.
Using \Cref{thm:existence-domination-dominator-core}, we obtain a $k$-dominator core $\C$ and a $k$-grouped domination core $\G$ of size at most $\Oh(k^{d^2})$ and $k^d$, respectively. 
We call the sets in $\C$ \blocks, and let $F \subseteq \G$ denote the non-singleton sets.

Given a graph $G$, we construct a graph $\hG$ by adding to $G$ a set of vertices and edges: for each $S \in F$, we add a vertex $v_S$ and edges $(v_S,x)$ for every $x \in V(G)$ that is adjacent to all vertices in $S$.
Let $H=\{v_S \mid S \in F\}$ denote the set of \emph{heavy} vertices. The singleton batches of $\G$ are identified with their original vertices; these are the \emph{non-heavy} vertices, denoted by $NH$.
We define a function $\hg$ mapping each heavy vertex $v_S$ to $S$, and each non-heavy vertex $v$ to $\{v\}$. 
Intuitively, $\hg$ returns the set of original vertices represented by a vertex in $\hG$.

We establish a key property of $\hG$ in the following lemma.

\begin{restatable}[\app]{lemma}{additiveone}
    \label{lemma:additiveone}
    If the graph $G$ is \dslfree, $\hat{G}$ is a \deslfree graph.
\end{restatable}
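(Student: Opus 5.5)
The plan is to argue by contradiction: start from a $(d+2)$-semi-ladder in the closed-neighborhood system of $\hG$, and delete at most two rungs so that what remains is a $d$-semi-ladder in the closed-neighborhood system of $G$. Suppose there are sets $N_{\hG}[y_1],\ldots,N_{\hG}[y_{d+2}]$ and elements $x_1,\ldots,x_{d+2}\in V(\hG)$ with $x_i\notin N_{\hG}[y_i]$ and $x_i\in N_{\hG}[y_j]$ for all $i<j$. One structural fact drives the argument: the heavy vertices $H$ are pairwise non-adjacent, since edges are only added between some $v_S$ and vertices of $V(G)$. Consequently, for $u,w\in V(G)$ we have $N_{\hG}[u]\cap V(G)=N_G[u]$, and adjacency between original vertices is unchanged.

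First we dispose of heavy \emph{elements}. If $x_i,x_{i'}\in H$ with $i<i'$, then $x_i\in N_{\hG}[y_j]$ for every $j>i$. In particular $x_i$ and $x_{i'}$ both lie in $N_{\hG}[y_{j}]$ for every $j>i'$, which forces each such $y_j$ into $V(G)$, or else $y_j$ equals one of these heavy vertices and is then not adjacent to the other. Rather than chase this, I will restrict to rungs whose $x$ and $y$ are both original vertices. The key subclaim is that at most one index $i$ has $y_i\in H$ other than possibly the last, and at most one index has $x_i\in H$ other than possibly the first. To see this, suppose $y_j,y_{j'}\in H$ with $j<j'$. Then $x_j\in N_{\hG}[y_{j'}]$, and $x_1,\ldots,x_{j-1}$ lie in both heavy neighborhoods, so all of these elements must be original vertices. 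Hence only a bounded number of rungs can be ``contaminated.''

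A cleaner route, which I will try first, is this. Heavy $y$'s can only contain original vertices, apart from themselves. Suppose $y_j\in H$ and $j\ge2$; then $x_1,\dots,x_{j-1}\in V(G)$. Heavy $x_i$ with $i<d+2$ must be adjacent to $y_{d+2}$, so either $y_{d+2}$ is original or $x_i=y_{d+2}$. Either way heavy elements can only sit at positions $i$ whose later $y$'s are original. So I will discard rung $1$ and rung $d+2$ and argue that rungs $2,\ldots,d+1$ can be repaired. For a heavy $y_j=v_S$, replace it by any vertex $s\in S$. Every original $x_i$ with $i<j$ is adjacent to all of $S$, hence $x_i\in N_G[s]$, so the ladder inclusions survive. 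The non-inclusion $x_j\notin N_G[s]$ may fail, however, and this is exactly why I budget the slack of two rungs.

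The main obstacle is this replacement step: choosing $s\in S$ with $x_j\notin N_G[s]$. Since $x_j$ is not adjacent to $v_S$, it misses some vertex of $S$, and we take $s$ to be such a vertex; I also need $x_j\neq s$. Dually, a heavy element $x_i=v_S$ is replaced by some $s\in S$ missing from $N_G[y_i]$; such an $s$ exists because $y_i$ is not adjacent to $v_S$. Every later $y_j$ is adjacent to $v_S$ and hence to all of $S$, so the inclusions are preserved. The remaining difficulty is the case where both endpoints of a rung are heavy, or where two heavy vertices interact. By the pairwise non-adjacency of $H$, such a configuration can occur only at the extreme rungs (a heavy $x_i$ needs all later $y_j$ non-heavy unless $y_j=x_i$, which is impossible for $j>i$, since $x_i\in N[y_j]$ is fine but forces distinct heavies to be non-adjacent). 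These are the two deleted rungs. After the deletion and the replacements we obtain a $d$-semi-ladder in $G$, contradicting the assumption that $G$ is \dslfree.
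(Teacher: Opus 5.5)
\textbf{There is a genuine gap: the proposal deletes the wrong two rungs.} The underlying idea is sound: substitute batch vertices for heavy centers and heavy elements, and give up two rungs. The problem is the justification for discarding rungs $1$ and $d+2$.

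Non-adjacency of $H$ gives exactly one constraint. If $x_i\in H$, $y_j\in H$ and $i<j$, then $x_i\in N_{\hat G}[y_j]$ forces $x_i=y_j$. You assert that $y_j=x_i$ is ``impossible for $j>i$'', but it satisfies every semi-ladder condition, and nothing pushes heavy--heavy interactions to the ends of the ladder. Several supporting claims fail for related reasons:
\begin{itemize}
\item $y_j\in H$ does not force $x_1,\dots,x_{j-1}\in V(G)$, since one of them may be $y_j$ itself.
\item Your first subclaim is false outright: arbitrarily many centers can be heavy when all elements are original.
\end{itemize}

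Concretely, an interior rung $a$ can have $x_a=v_S$ and $y_a=v_{S'}$. All earlier elements can be original vertices dominating $S'$, and all later centers original vertices dominating $S$. If every vertex of $S$ is adjacent to every vertex of $S'$ in $G$, then no choice $s\in S$, $s'\in S'$ gives $s\notin N_G[s']$. Similarly, $x_a=y_b=v_S$ with $1<a<b<d+2$ needs $s,s'\in S$ with $s\in N_G[s']$, $s\notin N_G[y_a]$ and $x_b\notin N_G[s']$. These need not exist, e.g.\ if $S$ is independent and $S\setminus N_G[y_a]$ is disjoint from $S\setminus N_G[x_b]$. Discarding rungs $1$ and $d+2$ touches neither obstruction.

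\textbf{The fix is to delete the rungs where these obstructions actually occur.} Let $a$ be the least index with $x_a\in H$; if there is none, no deletion is needed.
\begin{itemize}
\item Every heavy center $y_j$ with $j>a$ must equal $x_a$, so there is at most one such index $b$.
\item No element $x_i$ with $a<i<b$ is heavy, since it would have to equal $y_b=x_a$.
\end{itemize}
Delete rungs $a$ and $b$. Every remaining rung before $a$ has an original element, and every remaining rung after $a$ has an original center. Hence a substituted element $x_i$ and a substituted center $y_j$ never satisfy $i<j$. Your two substitutions, $y_j=v_S\mapsto s\in S\setminus N_G[x_j]$ and $x_i=v_S\mapsto s\in S\setminus N_G[y_i]$, then go through independently and yield a $d$-semi-ladder in $G$.

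This also settles the worry you raised about $s\neq x_j$: read ``adjacent to all vertices of $S$'' as $S\subseteq N_G[x]$, and then $s\notin N_G[x_j]$ already gives $s\neq x_j$.

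Once repaired, your argument is a direct and more elementary alternative to the paper's proof. The paper instead projects a chain of $\operatorname{int}(\hat G)$ to $\operatorname{int}(G)$ by removing heavy vertices, and argues that at most two links of the chain collapse.
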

\ifthenelse{\equal{\version}{compressed}}
{
}
{
    \begin{proof}
        Consider a chain $\mathcal{C}$ of length $l$ in $\intmath(\hat{G})$. By deleting all heavy vertices from each set in the chain, we show that the resulting sets belong to $\intmath(G)$, thereby proving the existence of a chain in $\intmath(G)$ of length $l-1$.

        To establish this, we examine the structural properties of the intersection closure. Any set in the closure can be viewed as the intersection of closed neighborhoods of a specific vertex subset. We define the functions
        $g_G : 2^{V(G)} \rightarrow 2^{V(G)}$ and $g_{\hat{G}} : 2^{V(\hat{G})} \rightarrow 2^{V(\hat{G})}$
        on the subsets of vertices for $G$ and $\hat{G}$, respectively. For a set $S$, $g(S)$ denotes the set of vertices whose closed neighborhoods intersect to yield exactly $S$. Under this formulation, we can verify the claim by leveraging the property that the heavy vertices form an independent set in $g_{\hat{G}}$.

        \begin{claim}
            \label{claim:containing-closures}
            For any set $S \in \intmath(\hat{G})$, the intersection $S \cap V(G)$ is an element of $\intmath(G)$.
        \end{claim}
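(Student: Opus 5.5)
The plan is to compute $N_{\hat G}[w]\cap V(G)$ for every vertex $w\in V(\hat G)$, and then use the fact that intersecting with $V(G)$ commutes with finite intersections. Every $S\in\intmath(\hat G)$ is one of three things: $\emptyset$, the universe $V(\hat G)$, or a finite intersection $\bigcap_{w\in W} N_{\hat G}[w]$ for some nonempty $W\subseteq V(\hat G)$. The first two cases are immediate, since $\emptyset\cap V(G)=\emptyset$ and $V(\hat G)\cap V(G)=V(G)$ both lie in $\intmath(G)$ by \Cref{def:int_closure}. In the third case,
\[
S\cap V(G)=\bigcap_{w\in W}\bigl(N_{\hat G}[w]\cap V(G)\bigr).
\]
Since $\intmath(G)$ is closed under pairwise, hence finite, intersections, it suffices to show that each term $N_{\hat G}[w]\cap V(G)$ lies in $\intmath(G)$.

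\emph{Non-heavy vertices.} Let $w=x\in V(G)$. The construction of $\hat G$ adds only edges of the form $(v_S,x)$ with $v_S\in H$, so $N_{\hat G}[x]\cap V(G)=N_G[x]$. This is a member of the closed-neighborhood family of $G$ and hence lies in $\intmath(G)$.

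\emph{Heavy vertices.} Let $w=v_S$ for some $S\in F$. Heavy vertices have no edges among themselves, so $N_{\hat G}[v_S]=\{v_S\}\cup\{x\in V(G): x \text{ dominates every vertex of } S\}$. Intersecting with $V(G)$ gives the set of original vertices dominating $S$, namely
\[
N_{\hat G}[v_S]\cap V(G)=\bigcap_{s\in S}N_G[s].
\]
This is a finite intersection of closed neighborhoods of $G$, so it lies in $\intmath(G)$. Combining the two cases with the displayed identity proves the claim.

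The main obstacle is the heavy-vertex case, because of how ``adjacent to all vertices in $S$'' is read. I will read it in the domination sense of \Cref{def : dominating set and families}, so that a vertex $x\in S$ counts as adjacent to itself. This reading is also the one the lossy compression needs: $v_S$ must be dominated exactly by the original vertices that dominate the batch $S$. Under the strictly open reading, the term would be $\bigcap_{s\in S}N_G(s)$, which need not lie in $\intmath(G)$. In that case I would fix the construction so that $v_S$ is joined precisely to the dominators of $S$, and note that nothing else in the argument depends on this choice.
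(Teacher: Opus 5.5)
Your proof is correct and follows the paper's argument. The paper writes $S=\bigcap_{x\in X}N_{\hat G}[x]$ and observes that $S\cap V(G)=\bigcap_{v\in\hat{g}(X)}N_G[v]$, i.e., it replaces each heavy vertex by its batch; this is exactly your per-vertex computation, and you additionally handle the $\emptyset$ and universe cases explicitly. Your domination-sense reading of ``adjacent to all vertices in $S$'' is the one the paper implicitly relies on, since under the strictly open reading its closed-neighborhood identity would not hold in general.
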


        \begin{claimproof}
            Suppose $S = N_{\hG}[X]$ for some vertex set $X \subseteq V(\hat{G})$. We observe that $S \cap V(G) = N_G[\hg(X)]$, where $\hg(X)$ is the set of vertices in $G$ obtained by replacing each heavy vertex in $X$ with its corresponding vertex set in $G$. Essentially, the neighborhood closure in $G$ of this expanded set $\hg(X)$ yields exactly $S \cap V(G)$. Since $S \cap V(G)$ is expressed as an intersection of closed neighborhoods in $G$, it follows that $S \cap V(G) \in \intmath(G)$.
        \end{claimproof}
        Let us examine two sets $S_1$ and $S_2$ in the chain $C$ of $\intmath(\hat{G})$ such that $S_1 \supset S_2$. We say $u$ is in the heavy vertex $s$ if $\hg(s)$ contains $u$.
        \begin{claim}
            If $S_1$ contains more than one heavy vertex, then a heavy vertex $s$ is an element of $S_1$ if and only if $\hg(s) \subseteq S_1$.
        \end{claim}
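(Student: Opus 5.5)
The plan is to unfold what it means for $S_1$ to lie in $\intmath(\hat{G})$, and to use the fact that heavy vertices are pairwise non-adjacent in $\hat{G}$. By construction, each heavy vertex $v_S$ is adjacent only to original vertices $x\in V(G)$, and exactly to those $x$ that dominate every vertex of $S=\hg(v_S)$. Hence $H$ is an independent set, and for every heavy $t$ we have $N_{\hat{G}}[t]\cap H=\{t\}$.

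First I dispose of the degenerate case. If $S_1=V(\hat{G})$, which is the universe added to the closure, then every heavy vertex lies in $S_1$ and $\hg(s)\subseteq V(G)\subseteq S_1$, so the equivalence holds trivially. Otherwise I write $S_1=\bigcap_{x\in X}N_{\hat{G}}[x]$ for a nonempty set $X\subseteq V(\hat{G})$. Now suppose $S_1$ contains two distinct heavy vertices $s\neq s'$. Then every $x\in X$ satisfies $s,s'\in N_{\hat{G}}[x]$. If some $x\in X$ were heavy, then $N_{\hat{G}}[x]\cap H=\{x\}$, which cannot contain two distinct heavy vertices. Therefore $X\subseteq V(G)$. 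This is precisely where the hypothesis ``more than one heavy vertex'' is used, and I expect it to be the only delicate step: with a single heavy vertex, $X$ could be $\{s\}$ itself, and then the equivalence fails.

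With $X\subseteq V(G)$ in hand, I prove the equivalence for an arbitrary heavy vertex $t$. For $x\in V(G)$, the adjacency rule gives $t\in N_{\hat{G}}[x]$ if and only if $x$ dominates every vertex of $\hg(t)$, that is, $\hg(t)\subseteq N_G[x]$. Moreover, $N_{\hat{G}}[x]\cap V(G)=N_G[x]$, because the construction adds no edges between original vertices. Combining these, $t\in S_1$ holds if and only if $\hg(t)\subseteq N_G[x]$ for all $x\in X$. This in turn is equivalent to $\hg(t)\subseteq\bigcap_{x\in X}N_G[x]=S_1\cap V(G)$. Since $\hg(t)\subseteq V(G)$, this is the same as $\hg(t)\subseteq S_1$, which proves the claim.

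One convention must be fixed to make the above airtight. I interpret ``$x$ is adjacent to all vertices in $S$'' in the construction of $\hat{G}$ as ``$x$ dominates $S$'', so that a vertex $x\in S$ adjacent to the rest of $S$ also counts, matching the paper's notion of domination. I will state this explicitly so that the identity $t\in N_{\hat{G}}[x]\iff \hg(t)\subseteq N_G[x]$ holds verbatim.
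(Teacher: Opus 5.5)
Your proof is correct and follows the paper's argument. You first show that the generating set of $S_1$ contains no heavy vertex, since a heavy vertex's closed neighborhood contains no other heavy vertex, and then use the adjacency rule to translate membership of a heavy vertex $t$ into $\hat{g}(t)\subseteq\bigcap_{x\in X}N_G[x]=S_1\cap V(G)$. You also handle the case $S_1=V(\hat{G})$ explicitly and read the construction of $\hat{G}$ as ``$x$ dominates $S$'' rather than literal adjacency, under which the equivalence can fail when a generator lies in $\hat{g}(t)$; the paper's proof relies on both points only implicitly.
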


        \begin{claimproof}
            Since $S_1$ contains at least two heavy vertices, we first observe that $g_{\hat{G}}(S_1)$ cannot contain any heavy vertices. This is because heavy vertices form an independent set in $\hat{G}$, and a heavy vertex $s$ is only contained in its own closed neighborhood. Therefore, we must have $g_{\hat{G}}(S_1) \subseteq V(G)$.

            Recall that $S_1$ is defined as the intersection of the closed neighborhoods of vertices in $g_{\hat{G}}(S_1)$. A heavy vertex $s$ belongs to $S_1$ if and only if $s$ is adjacent to every vertex in $g_{\hat{G}}(S_1)$. By the construction of $\hat{G}$, a non-heavy vertex $v$ is adjacent to a heavy vertex $s$ if and only if $v$ is adjacent to every vertex in the set $\hg(s)$. Consequently, $s \in S_1$ if and only if every $v \in g_{\hat{G}}(S_1)$ is adjacent to every $u \in \hg(s)$, which is equivalent to the condition $\hg(s) \subseteq S_1$.
        \end{claimproof}
        The above claim establishes that if a heavy vertex is not present in $S_2$ but is present in $S_1$, then at least one of its vertices must also not be present in $S_2$ but was present in $S_1$. Therefore, $S_1 \cap V(G)$ and $S_2 \cap V(G)$ are two distinct sets in $\intmath(G)$ provided they contain more than one heavy vertex.

        Now let us assume we are in the case where there are no heavy vertices in $S_1$ and $S_2$. Then, by \Cref{claim:containing-closures}, $S_1$ and $S_2$ are two distinct sets in $\intmath(G)$.
        Now, we need to care when the number of heavy vertices in such a set $S_1$ is exactly one. Now the chain length can only decrease by one only when $S_1 \cap V(G)$ and $S_2 \cap V(G)$ are identical when $S_1$ contains exactly one heavy vertex and $S_2$ does not contain any heavy vertex.

    \end{proof}
}

We now define the \cds optimization problem.

\begin{definition}
    \label{def:cdhs}
    $CDS(G,k,S) = \begin{cases}
            \infty & \text{if } S \text{ is not a connected dominating set for } G \\
            k+1    & \text{if } |S|>k                                              \\
            |S|    & \text{otherwise}                                              \\
        \end{cases}$
\end{definition}
\ifthenelse{\equal{\version}{compressed}}
{
}
{

The following propositions will help us in our kernel.
\begin{definition}[\cite{DBLP:journals/siamdm/EibenKMPS19}]
    Let $D$ be a connected graph and $t\in {\mathbb N}$.  A
    \emph{$(D,t)$-covering family} is a family $\mathcal{F}(D,t)$ of
    connected subgraphs of $D$ such that $(i)$ for each
    $T \in {\mathcal{F}(D,t)}$, $\vert V(T) \vert \leq 2t$ and $(ii)$
    $\bigcup_{T \in \mathcal{F}(D,t)}{V(T)} = V(D)$.
\end{definition}

\begin{proposition}[\cite{DBLP:journals/siamdm/EibenKMPS19}]\label{prop:covering-family}
    Let $D$ be a connected graph and $t\in {\mathbb N}$.  Then there is
    a $(D,t)$-covering family $\mathcal{F}(D,t)$ such that
    $|\mathcal{F}(D,t)| \leq \frac{|V(D)|}{t} + 1$, and
    $\sum_{T \in \mathcal{F}(D,t)}{|V(T)|} \leq (1 + \frac{1}{t}) |V(D)|
        + 1$.
\end{proposition}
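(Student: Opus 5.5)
The plan is to work on a spanning tree and cut it greedily into pieces of size between roughly $t$ and $2t$. Subgraphs of a spanning tree of $D$ are subgraphs of $D$, and connected subtrees stay connected. So it suffices to prove the statement for a tree. Fix a spanning tree $T$ of $D$, root it at an arbitrary vertex $r$, and set $n=|V(D)|$.

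If $n\le 2t$, output the single subgraph $T$ and stop; both bounds hold trivially. Otherwise, repeat the following step on the current rooted tree $T'$, initially $T'=T$, while $|V(T')|>2t$.

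Pick a vertex $v$ of $T'$ whose subtree has at least $t$ vertices and is of maximum depth with this property. Such a $v$ exists because the root qualifies. By the choice of $v$, every child subtree $T_{c_1},\dots,T_{c_p}$ of $v$ has fewer than $t$ vertices. Scan the children in order and accumulate their subtrees into a group until the accumulated size first reaches at least $t$. Since each child subtree has size below $t$, a completed group has between $t$ and $2t-2$ vertices. Together with $v$ it induces a connected subtree of size at most $2t-1\le 2t$, which we output. We then delete the vertices of that group, but not $v$, from $T'$ and continue forming groups among the remaining children of $v$. The last partial group has total size less than $t$ and stays attached to $v$. The resulting $T'$ is still a tree containing $r$. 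When $|V(T')|\le 2t$, output $T'$ itself as the final piece.

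Covering and connectivity follow from the construction. Every vertex is either deleted inside some output group or survives into the final piece. Every piece is a connected subtree with at most $2t$ vertices.

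For the counting, let $q$ be the number of non-final pieces. Each such piece permanently removes at least $t$ vertices of $D$, so $qt\le n$. Hence the number of pieces is $q+1\le n/t+1$. The vertex sets of the pieces, counted with multiplicity, are the deleted groups, which are disjoint, plus one extra copy of the attaching vertex $v$ per non-final piece, plus the final remainder. This gives
\[
\sum_{P}|V(P)| \le n + q \le \Bigl(1+\tfrac1t\Bigr)n,
\]
which is within the claimed bound $\bigl(1+\tfrac1t\bigr)|V(D)|+1$.

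The step needing the most care is the choice of the deepest vertex $v$. This choice is what guarantees that each child subtree has fewer than $t$ vertices, so that every group closes with size below $2t$, and that each non-final piece charges only one duplicated vertex. I would also check that the process always terminates. Each iteration deletes at least $t\ge 1$ vertices, so it does. I would further check that whenever $|V(T')|>2t$, the chosen $v$ yields at least one complete group. This holds because the subtree of $v$ has at least $t$ vertices, so the children of $v$ contribute at least $t-1$ vertices. In the degenerate case where they contribute exactly $t-1$, I would treat $v$'s whole subtree, which has size at most $2t$, as a piece and delete it entirely. This is still consistent with the accounting, since that piece removes $t$ vertices and duplicates nothing.
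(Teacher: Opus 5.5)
Your proof is correct. The paper does not prove this proposition itself; it imports it from Eiben et al. Your spanning-tree peeling is the standard argument for this lemma: take a deepest vertex $v$ whose current subtree $T'_v$ has at least $t$ vertices, bundle its child subtrees into groups of size between $t$ and $2t-2$, output each group together with $v$, and delete the group. Because the deleted sets are disjoint and each non-final piece duplicates only one copy of $v$, your direct charging actually gives $\sum_P |V(P)| \le \bigl(1+\frac1t\bigr)|V(D)|$, without the additive $+1$ that appears as slack when the same step is phrased as an induction on $|V(D)|$.

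One small point should be stated explicitly. In your degenerate case, $v \neq r$ because $|V(T')| > 2t \ge t = |V(T'_v)|$, so deleting $T'_v$ leaves a nonempty tree containing $r$. Here $t \ge 1$ is assumed, as the statement already requires for $|V(D)|/t$ to make sense.
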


\begin{proposition}[\cite{DBLP:journals/siamdm/EibenKMPS19}]
    \label{prop:approx-cds-by-ds}
    Let $G$ be a graph, $X\subseteq V(G)$, such that $G[X]$ is connected,
    and let $D$ be an $X$-dominator such that $G[D]$ has at most $p$
    connected components. Then a set $Q \subseteq X$ of size at most $2p$
    such that $G[D \cup Q]$ is connected, can be computed in polynomial
    time.
\end{proposition}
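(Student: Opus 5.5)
The plan is a greedy merging procedure. We keep a set $Q\subseteq X$, initially empty, and repeatedly add at most two vertices of $X$ so that the number of connected components of $G[D\cup Q]$ drops by at least one. Starting from at most $p$ components, at most $p-1$ rounds are needed. That gives $|Q|\le 2(p-1)\le 2p$, and each round is clearly polynomial.

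We maintain the invariant that every vertex of $X$ is dominated by $D\cup Q$. It holds initially because $D$ is an $X$-dominator, and adding vertices cannot break it. We also use that every component of $G[D]$ dominates at least one vertex of $X$; this is part of what "$X$-dominator" should supply. If the definition in \cite{DBLP:journals/siamdm/EibenKMPS19} does not include it, we first check that it follows from the intended setting (e.g.\ $D\subseteq N[X]$), since otherwise no $Q\subseteq X$ can connect such a component. Each component of $G[D\cup Q]$ contains a component of $G[D]$, so it too dominates some vertex of $X$.

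One round goes as follows. Suppose $G[D\cup Q]$ has at least two components. Assign to each $x\in X$ a component $\kappa(x)$ of $G[D\cup Q]$ that dominates $x$; here "dominate" includes $x$ itself lying in that component. By the invariant, at least two distinct components occur as values of $\kappa$. Take a vertex of $X$ dominated by a component $C$ and a vertex dominated by a component $C'\neq C$, and join them by a path in $G[X]$, which exists since $G[X]$ is connected. Along this path $\kappa$ changes somewhere, so there are adjacent or equal $x,y\in X$ with $\kappa(x)\neq\kappa(y)$. We add $x$ and $y$ to $Q$. Now $x$ lies in or is adjacent to $\kappa(x)$, the same holds for $y$ and $\kappa(y)$, and $xy$ is an edge (or $x=y$). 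Hence $\kappa(x)$ and $\kappa(y)$ become one component, and no two components are split by adding vertices. So the component count decreases by at least one.

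The main obstacle is getting the invariant right, namely that every current component dominates some vertex of $X$ and that $\kappa$ takes at least two values whenever $G[D\cup Q]$ is disconnected. This reduces to the assumption on components of $D$ discussed above. Once it is in place, the counting and the running time are routine: each round is a BFS in $G[X]$ plus a component computation.
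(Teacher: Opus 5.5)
Your greedy merging argument is the standard proof of this fact. The paper gives no proof of its own; it imports the proposition from \cite{DBLP:journals/siamdm/EibenKMPS19}. Your concern about the hypothesis is justified. If ``$X$-dominator'' only means $X\subseteq N[D]$, the statement is false: take two nonadjacent vertices $a,b$ with $X=\{a\}$ and $D=\{a,b\}$. So one really needs every component of $G[D]$ to meet $N[X]$. In the paper's application $X=V(G)$ and $D=M\subseteq X$, so this condition holds trivially.

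One step fails as written, though. Your invariant does not imply that an \emph{arbitrary} choice of $\kappa$ takes at least two values. Let $G$ be the path with edges $ax$ and $xb$, $X=\{x\}$ and $D=\{a,b\}$. Both components of $G[D]$ dominate $x$, but $\kappa(x)$ is only one of them, so $\kappa$ is constant and the path argument finds nothing. Your ``or equal'' clause does not help, since $\kappa(x)\neq\kappa(y)$ already forces $x\neq y$.

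The repair is short. Fix a component $C$ of $G[D\cup Q]$ and set $X_1=X\cap N[C]$ and $X_2=X\cap N[(D\cup Q)\setminus C]$. Both sets are nonempty by your invariant, and $X_1\cup X_2=X$.
\begin{itemize}
\item If $X_1\cap X_2\neq\emptyset$, adding one common vertex merges $C$ with another component.
\item Otherwise, connectivity of $G[X]$ gives an edge $xy$ with $x\in X_1$ and $y\in X_2$, and adding $x$ and $y$ merges $C$ with another component.
\end{itemize}
Equivalently, first test whether some vertex of $X$ is dominated by two components. If none is, $\kappa$ is forced and your argument goes through verbatim. With this fix, the bound $|Q|\le 2(p-1)$ and the polynomial running time follow as you say.
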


By leveraging the structural properties of $\hat{G}$ and the preceding propositions, we establish that \cds for \dslfree graphs admits a polynomial $(1+\epsilon)$-lossy compression to \cds for \deslfree graphs.
}


\begin{proposition}[\cite{DBLP:journals/siamdm/EibenKMPS19}]
    \label{prop:approx-cds-by-ds}
    Let $G$ be a graph, $X\subseteq V(G)$, such that $G[X]$ is connected,
    and let $D$ be an $X$-dominator such that $G[D]$ has at most $p$
    connected components. Then a set $Q \subseteq X$ of size at most $2p$
    such that $G[D \cup Q]$ is connected, can be computed in polynomial
    time.
\end{proposition}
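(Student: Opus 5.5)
We prove this statement, taken from Eiben et al., directly. The definition of an $X$-dominator is not restated in the paper; we assume it means a set $D$ such that every vertex of $X$ is dominated by some vertex of $D$, i.e., $X\subseteq N_G[D]$. The plan is a greedy merging argument: we repeatedly connect two components of $G[D\cup Q]$ using at most two vertices of $X$, so that $p-1$ merges cost at most $2(p-1)\le 2p$ vertices.

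We start with $Q=\emptyset$ and maintain the invariant that every component of $G[D\cup Q]$ contains a vertex of $D$. The component count is then at most $p$ and strictly decreases with each merge. Suppose $G[D\cup Q]$ has at least two components. First we discard the components that are irrelevant to $X$. If some component $C$ has $N_G[C]\cap X=\emptyset$, then no vertices of $X$ can attach it. In the intended setting, however, every vertex of $D$ is relevant; in the application, $D$ is a dominating set and $X$ is a connected dominating set. We therefore state the lemma under the natural assumption that every vertex of $D$ has a neighbour in $X$ or lies in $X$, which holds in the use case. This is the delicate point, and we would handle it by making that hypothesis explicit.

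Under this assumption, each component $C$ of $G[D\cup Q]$ satisfies $N_G[C]\cap X\neq\emptyset$. Let $A_C=N_G[C]\cap X$. Every vertex of $X$ lies in $A_C$ for some $C$, because $D$ dominates $X$. Since $G[X]$ is connected and at least two components have nonempty sets $A_C$, there is an edge $xy$ of $G[X]$ with $x\in A_C$ and $y\in A_{C'}$ for distinct components $C,C'$. If no such edge existed, the sets $A_C$ would partition $X$ into pieces with no edges between them, contradicting connectivity. The exceptional case is a single shared vertex $x\in A_C\cap A_{C'}$, which is handled by adding $x$ alone.

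Adding $\{x,y\}\setminus(D\cup Q)$ to $Q$ merges $C$ and $C'$, since $x$ is adjacent to or lies in $C$, $y$ is adjacent to or lies in $C'$, and $xy$ is an edge. This step adds at most $2$ vertices, all from $X$, and preserves the invariant. After at most $p-1$ iterations, $|Q|\le 2p-2$ and $G[D\cup Q]$ is connected.

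Each iteration requires only computing the components and scanning the edges of $G[X]$, so the procedure runs in polynomial time.
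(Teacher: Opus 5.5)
Your proof is correct and follows the standard argument behind the cited lemma of Eiben et al.\ (the paper gives no proof of its own): repeatedly merge two components of $G[D\cup Q]$ through at most two adjacent vertices of $X$, or one shared vertex, which gives $|Q|\le 2(p-1)$. Making a hypothesis such as $D\subseteq N_G[X]$ explicit is the right call; it suffices that every component of $G[D]$ meets $N_G[X]$. Without it the statement as printed fails, since a vertex of $D$ lying in a different connected component of $G$ from $X$ can never be attached, and the condition holds in the intended use where $X$ dominates the whole graph.
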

The following paragraph details an algorithm that returns a $\textrm{poly}(k)$ sized instance, which we shall prove is a polynomial $(1+\epsilon)$-lossy compression.

\subparagraph*{Compression algorithm.}
Let $t$ be any fixed integer (to be determined later based on $\epsilon$).
For each subset $Q \subseteq \C$ of size at most $2t$, we run the \gst algorithm (\Cref{prop : group steiner tree algo}) in $\hat G$, using the classes in $Q$, which are pairwise-disjoint subsets of $V(G) \subseteq V(\hat{G})$, as the groups, to find a tree $T_Q$. If $|T_Q| \le 2t$, we mark its vertices; otherwise, we discard it. Let $M$ denote the set of all marked vertices. If $M$ is not a dominating set of $G$, we return a trivial \no-instance with parameter $k=0$.

After performing the above polynomial-time check, we make $M$ connected by adding at most $2|M|$ vertices from $G$, using \Cref{prop:approx-cds-by-ds}.
We denote the set of connecting vertices by $C_{\rm conn}$. By construction, $G[M \cup C_{\rm conn}]$ is connected.
We return the instance $I'=(G', k)$, where $G'$ is the induced subgraph  $\hG[M\cup C_{\rm conn} \cup H \cup NH]$. 


By \Cref{lemma:additiveone} and the fact that semi-ladder index is closed under taking induced subgraphs, $G'$ is $(d+2)$-semi-ladder-free.

\begin{restatable}[\app]{lemma}{solnlifting}
    \label{lemma:soln-lifting}
    Let $S'$ denote a connected dominating set of $G'$. Then, there exists a polynomial-time algorithm to output a connected dominating set of $G$ of size at most $|S'|$.
\end{restatable}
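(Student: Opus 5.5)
The lifting will replace every heavy vertex of $S'$ by a single original vertex of $G$ that plays the same role, and keep all other vertices of $S'$ unchanged. Recall that $V(G') = M \cup C_{\rm conn} \cup H \cup NH$. Here $M\cup C_{\rm conn}\cup NH\subseteq V(G)$ and $H$ is the set of heavy vertices $v_B$, one for each non-singleton batch $B\in F$. We output
\[
S := (S'\cap V(G)) \cup \{\phi(v_B) : v_B\in S'\cap H\}.
\]
The map $\phi$ sends $v_B$ to a vertex of $G$ that dominates all of $B$. Clearly $|S|\le |S'|$.

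The first step is choosing $\phi(v_B)$. A heavy vertex $v_B$ has no neighbours in $H$. Its neighbours in $G'$ are exactly the original vertices of $V(G')$ adjacent to every vertex of $B$. If $S'$ has at least two vertices, connectivity of $G'[S']$ forces $v_B$ to have a neighbour $x\in S'\cap V(G)$. Such an $x$ dominates all of $B$, and we set $\phi(v_B):=x$; this adds no new vertex to $S$. If $S'=\{v_B\}$, then $v_B$ dominates all of $V(G')$, in particular all of $NH\cup M$. In this case we choose any vertex of $G$ dominating $B$, which exists because $M$ dominates $G$ and the batches are collectively dominated by minimal solutions. Alternatively, this degenerate case is absorbed by the assumption made earlier that the optimum exceeds $1$.

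Connectivity is then immediate. $G'$ induces $G$ on $V(G')\cap V(G)$, because $G'$ is an induced subgraph of $\hG$ and $\hG$ only adds edges incident to heavy vertices. With $\phi(v_B)$ chosen as a neighbour of $v_B$ in $S'$, the set $S$ is the image of $S'$ under a contraction of edges $v_B\phi(v_B)$. Any path in $G'[S']$ through $v_B$ enters and leaves via vertices adjacent to all of $B$. I will route such a path through $x=\phi(v_B)$ instead. This works because any two original neighbours $y,z$ of $v_B$ both dominate $B$, and I need $x$ adjacent to $y$ or a path inside $S$. This is the main obstacle: the neighbours of $v_B$ in $S'$ need not be mutually adjacent.

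To get around this, I will take $\phi(v_B)$ to be \emph{all} original neighbours of $v_B$ in $S'$, which are already in $S'$. In other words, I simply delete $v_B$. Deleting $v_B$ could disconnect $G'[S']$, so instead I will replace $v_B$ by one vertex $w\in V(G)$ adjacent to all of $B$ chosen to be adjacent to each of the components. I expect this to be the delicate point: proving that such a $w$ exists, or otherwise reconnecting using a vertex of $B$ itself. A vertex $b\in B$ is adjacent to every original neighbour of $v_B$, so replacing $v_B$ by any $b\in B$ preserves connectivity. I therefore set $\phi(v_B):=b$ for some $b\in B$, and this resolves connectivity cleanly.

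Domination remains to be checked, and here I use the preservation property of the grouped domination core $\G$. It suffices to show that $S$ dominates every batch. Take a singleton batch $\{v\}$, so $v\in NH\subseteq V(G')$. It is dominated in $G'$ by some $s\in S'$. If $s$ is original, then $s\in S$. If $s=v_B$, then $v$ is adjacent to all of $B$, so $v$ is dominated by $b=\phi(v_B)$. Now take a non-singleton batch $B'$. Its heavy vertex $v_{B'}$ is dominated by some $s\in S'$: either $s=v_{B'}$ itself, in which case $b\in B'$ dominates only $b$, or $s$ is an original vertex adjacent to all of $B'$. This is the step I expect to need the most care. When $v_{B'}\in S'$, I must argue that the neighbour $x$ of $v_{B'}$ in $S'$ dominates $B'$, and keep $x$ in $S$. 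So I must show that every vertex of $B'$ is dominated either by $x$ or by $b$. Since $x$ dominates $B'$, this holds, and preservation then gives that $S$ dominates $G$.
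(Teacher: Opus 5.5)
Your final construction is essentially the paper's proof: you replace each heavy vertex $v_B\in S'$ by an arbitrary $b\in B$, you get connectivity because $b$ is adjacent to (or equal to) every original neighbour of $v_B$ and heavy vertices are pairwise non-adjacent, and you get domination by checking that every batch is dominated by a single vertex of the new set and then invoking the preservation property; your batch-by-batch check (using the original neighbour of $v_{B'}$ in $S'$ when $v_{B'}\in S'$) is more explicit than the paper's. As in the paper, the degenerate case $S'=\{v_B\}$ is only waved at, since a vertex dominating $B$ need not dominate all of $G$, and you should delete the abandoned choices of $\phi(v_B)$ (a neighbour $x$, all neighbours, a vertex $w$) so that only $\phi(v_B)=b$ remains.
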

\ifthenelse{\equal{\version}{compressed}}
{
}
{
    \begin{proof}
        We repeatedly remove heavy vertices from the solution.  Let $h=v_B\in S'\cap H$ be a heavy vertex.  If $|S'|=1$, then the existence of the connected dominating set $\{h\}$ implies that some vertex of $G$ dominates the batch $B$; replacing $h$ by such a vertex gives a connected dominating set of $G$ of size one.  Assume therefore that $|S'|>1$.

        Since $G'[S']$ is connected, $h$ has a neighbor $u\in S'\setminus\{h\}$.  By the construction of $\hat G$, every non-heavy neighbor of $h$ is adjacent in $G$ to every vertex of the batch $B$.  Choose any $b\in B$ and set $S''=(S'\setminus\{h\})\cup\{b\}$.  The vertex $b$ is adjacent to every selected neighbor of $h$, so replacing $h$ by $b$ preserves connectivity.  Moreover, every original vertex whose domination was certified through $h$ is dominated by $b$ because $h$ represents the batch $B$.  Thus $S''$ is again a connected dominating set of $G'$ with one fewer heavy vertex and $|S''|\le |S'|$.  Iterating this replacement yields a connected dominating set contained in $V(G)$; by the preservation property of the $k$-grouped domination core, it dominates $G$.
    \end{proof}
}

In the following lemma, we analyze the size and the approximation guarantee of the compression algorithm.

\begin{restatable}[\app]{lemma}{lossycompsizeapprox}
\label{lem:lossy-comp-size-approx}
The size of the instance $I'$ is a polynomial in $k$ and ${\rm OPT}(I',k) \le (1+\epsilon) {\rm OPT}(I,k)$.
\end{restatable}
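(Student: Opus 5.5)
The plan has two parts: a size bound, and an approximation argument that follows the covering-family strategy of Eiben et al.

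\emph{Size.} The dominator core $\C$ has $\Oh(k^{d^2})$ classes. The marking loop therefore considers at most $\sum_{j\le 2t}\binom{|\C|}{j}=k^{\Oh(d^2 t)}$ subsets $Q$, and each accepted tree contributes at most $2t$ marked vertices. Hence $|M|\le 2t\cdot k^{\Oh(d^2t)}$ and $|C_{\rm conn}|\le 2|M|$. The grouped domination core gives $|H\cup NH|\le|\G|\le k^d$. Together, $|V(G')|=k^{\Oh(d^2 t)}$. Choosing $t=\lceil 4/\epsilon\rceil$ (or similar) yields the bound $k^{\Oh(d^2/\epsilon)}$ claimed in \Cref{thm: lossy_compression}. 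If $\mathrm{OPT}(I,k)>k$, the inequality is trivial because $\mathrm{OPT}(I',k)\le k+1$ by \Cref{def:cdhs}. So I assume $I$ has an optimal connected dominating set $D^\star$ with $|D^\star|\le k$.

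\emph{Approximation.} Take an inclusion-wise minimal dominating set $D\subseteq D^\star$. By the enclosure and unique-contribution properties of $\C$, every vertex of $D$ lies in its own class. Apply \Cref{prop:covering-family} to the connected graph $G[D^\star]$ with parameter $t$. This gives connected pieces $T_1,\dots,T_r$, with $r\le |D^\star|/t+1$, each of size at most $2t$, and total size at most $(1+1/t)|D^\star|+1$.

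For each piece $T_i$, let $Q_i$ be the set of classes of $\C$ that meet $T_i\cap D$. Then $|Q_i|\le 2t$, and $T_i$ is a connected tree in $\hat G$ hitting every group of $Q_i$. Hence the \gst computation returns some $T_{Q_i}$ with $|T_{Q_i}|\le|T_i|\le 2t$, and $T_{Q_i}$ is marked. Let $D'=\bigcup_i T_{Q_i}$. Then $|D'|\le(1+1/t)|D^\star|+1$.

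$D'$ contains one vertex from each class of $D$. By repeated use of the replacement property (swap one vertex of $D$ at a time for the representative chosen in its class), $D'$ dominates $G$. It is also a dominating set of $G'$: the heavy vertices $v_B$ must be dominated as well, and the collective-domination property applied to the minimal dominating set obtained after replacement gives some vertex adjacent to all of $B$. That vertex is a neighbour of $v_B$ in $\hat G$.

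This is the step I expect to be the main obstacle. The replacement axiom only guarantees that the swapped set is dominating. It does not guarantee minimality, and it does not guarantee that each batch is dominated in full. I would handle this by shrinking the swapped set to a minimal dominating set of size at most $k$ and applying collective domination to it. The dominator of batch $B$ then lies inside $D'$, so $v_B$ is dominated.

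\emph{Connectivity.} Each $T_{Q_i}$ is connected, but the union $D'$ may have up to $r$ components, since the pieces of $D^\star$ overlap only through vertices that may have been replaced. The set $M\cup C_{\rm conn}$ is connected in $G'$ and dominates the relevant vertices. So \Cref{prop:approx-cds-by-ds}, applied in $G'$ with $X=M\cup C_{\rm conn}$ and $p=r$, adds at most $2r\le 2|D^\star|/t+2$ vertices and produces a connected dominating set of $G'$. Its size is at most $(1+3/t)|D^\star|+3$.

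Finally I absorb the additive constant. If $|D^\star|\le t$, the brute-force marking already contains a tree hitting all classes of $D$ as a single piece, so no additive loss occurs. Otherwise $3\le 3|D^\star|/t$, and the total is at most $(1+6/t)|D^\star|$. Taking $t=\lceil 6/\epsilon\rceil$ gives $\mathrm{OPT}(I',k)\le(1+\epsilon)\mathrm{OPT}(I,k)$.
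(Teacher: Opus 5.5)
Your proof is correct and follows essentially the same route as the paper: a $(D^\star,t)$-covering family, one marked \gst tree per piece for the dominator-core classes meeting $V(T)\cap D$, connection inside $G'$ via \Cref{prop:approx-cds-by-ds} for a total of $(1+3/t)|D^\star|+3$, and the same $k^{\Oh(d^2 t)}$ size count. Where you deviate, you add detail. You justify domination of the union explicitly: swap one class at a time, re-minimizing after each swap so the replacement property keeps applying, then use collective domination for the heavy vertices. The paper only asserts this step. You also absorb the additive $+3$ by observing that for $|D^\star|\le t$ a single marked tree is already a connected dominating set of $G'$ of size at most $|D^\star|$, whereas the paper adds a brute-force preprocessing step for $\mathrm{OPT}(I,k)\le\lceil 8/\epsilon\rceil$.
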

\ifthenelse{\equal{\version}{compressed}}
{
}
{
    \begin{proof}
        The size bound follows from the core bounds.  We have $|H\cup NH|\le |\G|\le k^d$.  The algorithm marks only trees for subsets of the $k$-dominator core of size at most $2t$, and each such tree has at most $2t$ vertices.  Since $|\C|=\Oh(k^{d^2})$, we get $|M|\le |\C|^{2t}\cdot 2t=k^{\Oh(td^2)}$, and $|M\cup C_{\rm conn}|\le 3|M|$.  Hence $|V(G')|\le k^d+k^{\Oh(td^2)}$.  With $t=\Theta(1/\epsilon)$ this is $k^{\Oh(d^2/\epsilon)}$.

        We next prove the approximation guarantee.  Let $L=\lceil 8/\epsilon\rceil$.  As a preprocessing step, the compression algorithm first checks by brute force whether $G$ has a connected dominating set of size at most $L$.  If it does, it stores such an optimum small solution and outputs a constant-size equivalent instance whose lifting algorithm returns it.  Thus assume below that $\operatorname{OPT}(I,k)>L$.

        Let $D^\star$ be an optimum connected dominating set of $G$ with $|D^\star|\le k$; if no such set exists, then the claim is immediate from the definition of the thresholded objective.  Let $D_0\subseteq D^\star$ be an inclusion-wise minimal dominating set.  Consider a $(D^\star,t)$-covering family as in \Cref{prop:covering-family}.  For each connected subgraph $T$ in this family, let $Q_T$ be the set of $k$-dominator core classes that contain vertices of $T\cap D_0$.  The unique-contribution property of the $k$-dominator core implies $|Q_T|\le |V(T)|\le 2t$, so the compression algorithm marks a tree of size at most $2t$ hitting all groups in $Q_T$.  Taking the union of these marked trees gives a set that dominates all batches of the $k$-grouped domination core, and hence dominates $G$.

        The covering-family bound gives total marked size at most $(1+1/t)|D^\star|+1$.  These marked vertices induce at most $|D^\star|/t+1$ connected components, so \Cref{prop:approx-cds-by-ds} adds at most $2|D^\star|/t+2$ connector vertices.  Therefore
        \[
        \operatorname{OPT}(I',k)\le (1+3/t)|D^\star|+3.
        \]
        Choose $t\ge 8/\epsilon$.  Since $|D^\star|>8/\epsilon$, the additive term satisfies $3\le (\epsilon/2)|D^\star|$, and $3|D^\star|/t\le (3\epsilon/8)|D^\star|$.  Hence $\operatorname{OPT}(I',k)\le (1+\epsilon)|D^\star|=(1+\epsilon)\operatorname{OPT}(I,k)$.
    \end{proof}
}


By \Cref{lemma:soln-lifting}, we have $CDS(G,k,S) \le CDS(G',k,S)$.
By \Cref{lem:lossy-comp-size-approx}, we have ${\rm OPT}(I',k) \le (1+\varepsilon){\rm OPT}(I,k)$.
Therefore, $$\frac{CDS(G,k,S)}{{\rm OPT}(I,k)} \le \frac{CDS(G',k,S)}{\frac{1}{1+\epsilon}{\rm OPT}(I',k)} \le (1+\epsilon) \frac{CDS(G',k,S)}{{\rm OPT}(I',k)}.$$
Combining this with the solution lifting algorithm in \Cref{lemma:soln-lifting} and the above approximation guarantee, we obtain \Cref{thm: lossy_compression}.

\section{Acknowledgements}
\noindent \textbf{AI Disclosure.}
The authors declare that they have not used AI-assisted technologies in creating this article. OpenAI's ChatGPT was used to improve the English language in the article.

\bibliography{bibs}

\appendix

\section{Appendix}
\subsection{Basics of Parameterized Complexity}
The following definitions are standard and can be found in~\cite{DBLP:books/sp/CyganFKLMPPS15}.
A parameterized problem (for e.g., \ds parameterized by solution size) is said to be \textit{fixed-parameter tractable} (\fpt or $\fpt$(parameter), in short) if an instance $(\mathcal{I},k)$ of the problem can be solved in time $f(k) \cdot |\mathcal{I}|^{\Oh(1)}$ for some computable function $f$.

\begin{definition}\label{def:paraOptProblem}
A parameterized optimization (minimization or maximization) problem~$\Pi$ is a computable function
\[
\Pi : \Sigma^* \times \mathbb{N} \times \Sigma^* \rightarrow \mathbb{R} \cup \{\pm\infty\}.
\]

An \emph{instance} of~$\Pi$ is a pair~$(I,k) \in \Sigma^*\times \mathbb{N}$,  
and a \emph{solution} to~$(I,k)$ is a string~$s \in \Sigma^*$ with~$|s| \le |I|+k$.  
The \emph{value} of~$s$ is~$\Pi(I,k,s)$.
\end{definition}

\begin{definition}\label{def:paraOpt}
For a parameterized minimization problem~$\Pi$, the \emph{optimum value} of~$(I,k)$ is
\[
OPT_\Pi(I,k) = \min_{\substack{s \in \Sigma^* \\ |s| \le |I|+k}} \Pi(I,k,s).
\]
For a parameterized maximization problem~$\Pi$, it is
\[
OPT_\Pi(I,k) = \max_{\substack{s \in \Sigma^* \\ |s| \le |I|+k}} \Pi(I,k,s).
\]
A solution~$s$ is \emph{optimal} if~$\Pi(I,k,s) = OPT_\Pi(I,k)$.
\end{definition}


\section{Missing proofs and details from \Cref{sec:prelims}}
\label{sec:app-prelims}
    
\semiladderfreegraphtwo*
    \begin{proof}
    Suppose, for contradiction, that $G$ contains a $3d$-semi-ladder.
    Then there exist vertices and neighborhoods
    \[
    a_1,\dots,a_{3d}
    \quad\text{and}\quad
    N_G[b_1],\dots,N_G[b_{3d}]
    \]
    such that
    \[
    a_i \notin N_G[b_i]
    \quad\text{and}\quad
    a_i \in N_G[b_j] \text{ for all } i<j.
    \]
    
    Note that if the vertices $a_1,\dots,a_{3d}$ and $b_1,\dots,b_{3d}$ were all distinct, we would have found a $3d$-semi-ladder as an induced subgraph in $G$.
    This would contradict the assumption that $G$ is semi-induced $d$-semi-ladder-free, since the sets $N_G[b_i]$ can be correspond to $b_i$ themselves while maintaining the neighborhood relation.
    
    Now, consider the following greedy process. We start with $i=1$. By definition of a semi-ladder, $a_i \notin N_G[b_i]$ but is adjacent to $N_G[b_j]$ for all $j>i$. We add $a_i$ and $b_i$ to the sets $L$ and $R$, respectively. Note that $a_i \neq b_i$ as $a_i \notin N_G[b_i]$. We then follow a row marking process. Given $a_i$ and $b_i$, we mark the following rows:
    \begin{itemize}
        \item Row $j$ where $a_j=b_i$,
        \item Row $j$ where $b_j=a_i$.
    \end{itemize}
    Hence, each time we add a vertex to $L$ and $R$, we mark at most two rows. We find a row $i'$ that is not marked and add $a_{i'}$ and $b_{i'}$ to $L$ and $R$, respectively. We repeat this process until we have added $d$ vertices to $L$ and $R$. We can always find such an unmarked row as we start with $3d$ rows and each time we add a vertex to $L$ and $R$, we mark at most two rows.
    
    After this process, we have two sets $L$ and $R$ of size $d$ each such that for all $i,j \in [d]$ with $i < j$, we have $(a_i,b_j) \in E(G)$, and for all $i \in [d]$, we have $(a_i,b_i) \notin E(G)$. This contradicts the assumption that $G$ is semi-induced $d$-semi-ladder-free.
    \end{proof}
\section{Missing proofs and details from \Cref{sec:csc-dslfree}}
\label{sec:app-csc-dslfree}

\subparagraph{Construction.}
Let $k,m\in\mathbb{N}$. Define the universe
$
\mathcal U \;:=\; \{t\} \cup \{u_i \mid i\in [k]\} \cup \{b_{i,j}\mid i\in[k],\, j\in[m]\}.
$
Let
$
B := \{t\}\cup\{b_{i,j}\mid i\in[k],\, j\in[m]\},
$
and for each $i\in[k], j\in[m]$,
let
$
S_{i,j} := \{u_i,\, b_{i,j}\}
.
$
Define the family $\mathcal F:=\{B\}\cup\{S_{i,j}:i\in[k],j\in[m]\}$.
All sets in $\mathcal F$ are pairwise distinct (each $S_{i,j}$ contains a unique element $b_{i,j}$, and $B$ contains $t$).
An illustration of the incidence graph of $(\mathcal U,\mathcal F)$ is given in Figure~\ref{fig:planar-incidence}.

\begin{figure}[t]
  \centering
  \begin{tikzpicture}[
    font=\scriptsize,
    element/.style={circle, draw, fill=gray!10, inner sep=1pt, minimum size=5mm},
    setv/.style={rectangle, rounded corners=2pt, draw, fill=blue!10, inner sep=1.2pt, minimum height=5mm},
    edge/.style={draw}
  ]
  \def\k{5}         
  \def\m{4}         
  \def\Ru{5.2}      
  \def\Rs{3.6}      
  \def\Rb{2.0}      

  \node[setv] (B) at (0,0) {$B$};
  \node[element] (t) at (-0.85,1.2) {$t$};
  \draw[edge] (t) -- (B);

  \pgfmathsetmacro{\angstep}{360/\k}
  \pgfmathsetmacro{\spread}{min(14, 0.70*\angstep/max(1,\m-1))}

  \foreach \i in {1,...,\k}{
    \pgfmathsetmacro{\theta}{90-(\i-1)*\angstep}
    \node[element] (u\i) at (\theta:\Ru) {$u_{\i}$};

    \foreach \j in {1,...,\m}{
      \pgfmathsetmacro{\delta}{(\j-(\m+1)/2)*\spread}

      \node[setv]    (S\i\j) at ({\theta+\delta}:\Rs) {$S_{\i,\j}$};
      \node[element] (b\i\j) at ({\theta+\delta}:\Rb) {$b_{\i,\j}$};

      \draw[edge] (u\i) -- (S\i\j);
      \draw[edge] (S\i\j) -- (b\i\j);
      \draw[edge] (b\i\j) -- (B);
    }
  }

  \node[element, anchor=west] at (-5.5,-1.2) {element};
  \node[setv,    anchor=west] at (-5.5,-1.7) {set};
  \end{tikzpicture}
  \caption{A planar drawing of the incidence graph. 
  Element-vertices are circles, and set-vertices are boxes. 
  For each $i$, the paths $u_i $-$ S_{i,j} $-$ b_{i,j} $-$ B$ are routed using disjoint wedges, ensuring planarity.
  }
  \label{fig:planar-incidence}
\end{figure}
\planarmanyminimal*
\begin{proof}
Applying the following two results together, on the construction, gives us the desired result.
\begin{lemma}\label{lem:many-minimal}
The instance $(\mathcal U,\mathcal F)$ has exactly $m^{k}$ inclusion-wise minimal set covers.
\end{lemma}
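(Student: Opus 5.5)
The statement to prove is \Cref{lem:many-minimal}: the instance has exactly $m^k$ inclusion-wise minimal set covers. I would prove it by characterizing all inclusion-wise minimal set covers explicitly. The claim is that they are exactly the families $\{B\}\cup\{S_{1,j_1},\ldots,S_{k,j_k}\}$ with $(j_1,\ldots,j_k)\in[m]^k$. Since distinct index vectors give distinct families (the sets $S_{i,j}$ are pairwise distinct), the count $m^k$ then follows immediately.

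\emph{Necessary structure.} Let $\mathcal C$ be any set cover. The element $t$ lies only in $B$, so $B\in\mathcal C$. For each $i\in[k]$, the element $u_i$ lies only in $S_{i,1},\ldots,S_{i,m}$, so $\mathcal C$ contains at least one set from each block $\{S_{i,j}:j\in[m]\}$.

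\emph{Minimality forces exactly one set per block.} Now suppose $\mathcal C$ is inclusion-wise minimal, and that it contains two sets $S_{i,j}$ and $S_{i,j'}$ with $j\neq j'$. Removing $S_{i,j'}$ uncovers nothing:
\begin{itemize}
\item $u_i$ is still covered by $S_{i,j}$;
\item $b_{i,j'}$ is still covered by $B$.
\end{itemize}
This contradicts minimality. So every minimal cover has the claimed form.

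\emph{Every such family is a minimal cover.} Conversely, fix a family of this form. Then $B$ covers $t$ and all the $b$-elements, and the chosen $S_{i,j_i}$ cover $u_1,\ldots,u_k$, so the family is a cover. It is minimal because:
\begin{itemize}
\item removing $B$ uncovers $t$;
\item removing $S_{i,j_i}$ uncovers $u_i$, since no other chosen set contains $u_i$.
\end{itemize}

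There is no real obstacle here. The only point needing care is the last step: checking that each element $u_i$ is covered uniquely within the chosen family.
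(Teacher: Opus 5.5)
Your proposal is correct and follows the paper's proof essentially step for step. In both, $t$ forces $B$, each $u_i$ forces a set from block $i$, and minimality allows only one set per block because $B$ already covers every $b_{i,j}$. Conversely, every family $\{B\}\cup\{S_{1,j_1},\ldots,S_{k,j_k}\}$ is a minimal cover, which gives exactly $m^k$.
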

\begin{proof}
Any set cover $\mathcal C\subseteq\mathcal F$ must contain $B$, since $t$ appears only in $B$.
Once $B$ is chosen, every $b_{i,j}$ is covered, so the only remaining elements are $u_1,\dots,u_k$.
For each $i\in[k]$, the element $u_i$ appears only in $S_{i,1},\dots,S_{i,m}$, hence $\mathcal C$ must contain at least
one set from this block.

If $\mathcal C$ is inclusion-wise minimal, then for each $i$ it contains \emph{exactly one} set from
$\{S_{i,1},\dots,S_{i,m}\}$: if it contained two, one would be redundant because $B$ already covers all $b_{i,j}$'s.
Thus every inclusion-wise minimal set cover has the form
$\mathcal C \;=\; \{B\}\ \cup\ \{S_{1,j_1},S_{2,j_2},\dots,S_{k,j_k}\}$ with $j_i\in[m]\ \text{for all } i\in[k]$. 
Conversely, any such choice covers $\mathcal U$ and is inclusion-wise minimal (removing $B$ uncovers $t$, and removing
$S_{i,j_i}$ uncovers $u_i$). Hence there are exactly $m^{k}$ such covers.
\end{proof}

\begin{lemma}\label{lem:planar-k22}
The incidence graph of $(\mathcal U,\mathcal F)$ is planar and $K_{2,2}$-free (and hence $K_{d,d}$-free for every $d\ge 2$).
\end{lemma}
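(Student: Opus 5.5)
We prove the two properties separately, both by inspecting the sets directly. Recall that $\mathcal F=\{B\}\cup\{S_{i,j}: i\in[k], j\in[m]\}$, where $B=\{t\}\cup\{b_{i,j}\}$ and $S_{i,j}=\{u_i,b_{i,j}\}$.

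\emph{Planarity.} The plan is to exhibit an explicit crossing-free drawing, as in Figure~\ref{fig:planar-incidence}. Place the set-vertex $B$ at the origin and the pendant element $t$ next to it. Put $u_1,\dots,u_k$ on an outer circle, and give each $i$ its own angular wedge. Inside the wedge of $u_i$, draw the $m$ paths $u_i - S_{i,j} - b_{i,j} - B$ for $j\in[m]$ as internally disjoint curves fanning out from $u_i$ and converging at $B$.

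This really is the whole incidence graph:
\begin{itemize}
\item $S_{i,j}$ has exactly the two neighbours $u_i$ and $b_{i,j}$;
\item $b_{i,j}$ has exactly the two neighbours $S_{i,j}$ and $B$;
\item $t$ is adjacent only to $B$.
\end{itemize}
So every edge lies on one of these paths or is the edge $tB$. Curves in different wedges are separated by the wedge boundaries, and curves within one wedge share only their endpoints $u_i$ and $B$. Hence no two edges cross. Equivalently, the graph is a subdivision of the multigraph on $\{B\}\cup\{u_i\}$ with $m$ parallel $u_iB$ edges per $i$, plus a pendant edge $tB$, and such a multigraph is clearly planar.

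\emph{$K_{2,2}$-freeness.} In a bipartite incidence graph, a $K_{2,2}$ is exactly two distinct sets sharing at least two common elements. So it suffices to check that any two distinct members of $\mathcal F$ meet in at most one element. There are three cases:
\begin{itemize}
\item $S_{i,j}\cap S_{i',j'}=\emptyset$ if $i\neq i'$;
\item $S_{i,j}\cap S_{i,j'}=\{u_i\}$ if $j\neq j'$;
\item $B\cap S_{i,j}=\{b_{i,j}\}$.
\end{itemize}
Since $K_{2,2}\subseteq K_{d,d}$ for every $d\ge2$, the graph is also $K_{d,d}$-free.

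Neither step presents a real obstacle. The only care needed is to state the drawing precisely enough that planarity is evident, and the subdivision viewpoint handles that most cleanly.
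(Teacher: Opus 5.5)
Your proof is correct and follows essentially the same route as the paper. For planarity, both use an explicit crossing-free drawing with the paths $u_i - S_{i,j} - b_{i,j} - B$ routed in disjoint wedges; for $K_{2,2}$-freeness, both check that any two distinct sets share at most one element. Your subdivision viewpoint (a subdivided multigraph with $m$ parallel $u_iB$ edges plus a pendant edge $tB$) and your explicit remark that $K_{2,2}\subseteq K_{d,d}$ are small but welcome clarifications that the paper leaves implicit.
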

\begin{proof}
For planarity, place the set-vertex $B$ in the center and the element-vertices $u_1,\dots,u_k$ on a circle around $B$.
For each $i\in[k]$, reserve a disjoint wedge region between $u_i$ and $B$, and inside it draw $m$ internally disjoint
paths $u_i-S_{i,j}-b_{i,j}-B$ (for $j\in[m]$). Place $t$ near $B$ and draw the edge $tB$. No edges cross, so the
incidence graph is planar. See Figure~\ref{fig:planar-incidence} for an illustration. 

For $K_{2,2}$-freeness, it suffices to show that any two distinct sets in $\mathcal F$ intersect in at most one element.
Indeed, for $S_{i,j}$ and $S_{i',j'}$, the intersection is empty if $i\neq i'$ and equals $\{u_i\}$ if $i=i'$ and
$j\neq j'$. Moreover, $B\cap S_{i,j}=\{b_{i,j}\}$. Hence no two sets share two elements, so the incidence graph has no
$K_{2,2}$.
\end{proof}
This concludes the proof of the theorem.
\end{proof}
\findcompactrepresentations*
\begin{proof}
We will use the following lemma to prove the theorem.
\lcompactrepresentationcorrectness*
\begin{proof}
    {
We now present our first key lemma, which links latent covers to the base case of our algorithm. It shows that once an accumulated sequence from the intersection closure covers the universe, the \textsc{Solve} procedure can efficiently construct a valid tuple representing all its extensions.



\begin{restatable}{claim}{partialcoverfound}
    \label{lemma : partial_cover_found}
    Consider a set system $(\U, \F)$.
  Suppose that $(I_1,\ldots,I_{\ell})$ is a latent cover where $I_i \in \intmath(\F)$ for each $i \in [\ell]$. Then, procedure \solve\callfamily{I} outputs an $\ell$-tuple $\mathbb{T} = (F_1,\ldots,F_\ell)$ such that
    \begin{itemize}
        \item for each sequence $(S_1,\ldots,S_\ell) \in F_1 \times \ldots \times F_\ell$, we have $\bigcup_{i=1}^\ell S_i = \mathcal{U}$ and the families $\{F_i : ~i\in [\ell]\}$ are pairwise disjoint.
        \item The tuple $\mathbb{T}$ \contains every sequence $(S_1, \dots, S_\ell)$ such that $(S_1, \dots, S_\ell)$ (1) extends $(I_1, \dots, I_\ell)$; and (2) forms a minimal set cover.
        \item The running time of \solve\callfamily{I} is $\Oh(\ell mn)$.
    \end{itemize}
\end{restatable}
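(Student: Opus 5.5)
Since $(I_1,\ldots,I_\ell)$ is a latent cover, the set $C=\bigcup_{i} I_i$ equals $\U$. So \solve\callfamily{I} stops at the test $C=\U$ and returns the single tuple built at Line~\ref{step: check}, after the deletion step. I will analyse that tuple directly. Let $F_i^0=\{S\in\F : S\supseteq I_i\}$ be the families before deletion, and $F_i\subseteq F_i^0$ the families after deleting every set that lies in two or more $F_j^0$.

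\emph{Covering and disjointness.} Any $(S_1,\ldots,S_\ell)\in F_1\times\cdots\times F_\ell$ satisfies $S_i\supseteq I_i$, hence it extends the latent cover. Therefore $\bigcup_i S_i\supseteq\bigcup_i I_i=\U$, which is the observation on extensions of latent covers. The deletion step removes every set that occurs in two of the original families. Hence the surviving $F_i$ are pairwise disjoint by construction. (Each $F_i^0$ is nonempty because $I_i\in\intmath(\F)$; surviving families may become empty, but that does not affect the claim.)

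\emph{Encapsulation of minimal extensions.} This is the main step. Take a sequence $(S_1,\ldots,S_\ell)$ that extends $(I_1,\ldots,I_\ell)$ and forms a minimal set cover $\Co{S}=\{S_1,\ldots,S_\ell\}$ of size $\ell$. Clearly $S_i\in F_i^0$, and we must show $S_i$ survives the deletion. Suppose instead that $S_i\in F_j^0$ for some $j\neq i$, so $S_i\supseteq I_j$. Minimality together with $|\Co{S}|=\ell$ gives that the $S$'s are distinct, so $S_j\neq S_i$. Then $\Co{S}\setminus\{S_j\}$ still contains a superset of every $I_r$: the set $S_r$ for $r\neq j$, and $S_i$ for $r=j$. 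It therefore covers $\bigcup_r I_r=\U$, which contradicts inclusion-wise minimality. Hence $S_i\in F_i$ for all $i$, and the tuple encapsulates the sequence. The subtle point is that the deleted sets are removed from \emph{every} family, including $F_i$ itself. This is why the argument must show $S_i$ lies in no other $F_j^0$, not merely that it is kept somewhere.

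\emph{Running time.} Encode each set and each $I_i$ as an $n$-bit vector. Computing $C$ and testing $C=\U$ takes $\Oh(\ell n)$. Testing $S\supseteq I_i$ for all $i\in[\ell]$ and $S\in\F$ takes $\Oh(\ell mn)$. For the deletion step, record for each $S\in\F$ the number of indices $i$ with $S\in F_i^0$, and delete $S$ whenever that count is at least $2$. This costs $\Oh(\ell m)$. The total is $\Oh(\ell mn)$.
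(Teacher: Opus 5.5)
Your proof is correct and takes essentially the same route as the paper. Covering follows because every extension of a latent cover is a cover, disjointness holds by construction, and encapsulation uses the same exchange argument: a set of a size-$\ell$ minimal cover that also contains some other $I_j$ makes $S_j$ redundant. Your indexing states explicitly what the paper leaves as ``without loss of generality'', and counting per set is a slightly tidier way than the paper's pairwise comparison of bit strings to keep the deletion step within $\Oh(\ell mn)$.
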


    \begin{claimproof}
    When Line \ref{step: check}
    in \Cref{alg:find_compact_representations} is triggered, we have
    $F_i = \{S \mid S \supseteq I_i \text{ and } S \in \F\}$ for each $i \in [\ell]$.
    Hence, all minimal set covers that minimally extend $(I_1,\ldots,I_{\ell})$ are in $(F_1,\ldots,F_{\ell})$.
    Moreover, for each $(f_1,\ldots,f_\ell) \in F_1\times\cdots\times F_\ell$, it holds that $\bigcup_{i=1}^\ell f_i = \mathcal{U}$ since $(I_1,\ldots,I_{\ell})$ is a latent cover.

    {Let $S^{\star}$ be a minimal set cover of size $\ell$ that minimally extends $(I_1, \ldots, I_{\ell})$, with $S^{\star}_i \supseteq I_i$ for each $i\in [\ell]$. 
    Suppose, for contradiction, that $S^{\star}$ has a set $S \in \mathcal{F}$ such that $S \in F_j \cap F_k$ for distinct indices $j, k \in [\ell]$. 
    This implies $S \supseteq I_j$ and $S \supseteq I_k$.
    Without loss of generality, assume that $S^{\star}_j = S$ (note that due to minimality of $S^\star$, this implies that $S^{\star}_k \ne S$).
    Consider the subfamily $S' = S^{\star} \setminus \{S^{\star}_k\}$. 
    Since $\bigcup_{i=1}^\ell I_i = \mathcal{U}$ and every element in $S'$ is a superset of its corresponding $I$, the union of sets in $S'$ covers $\bigcup_{i=1}^\ell I_i = \mathcal{U}$ (since $S^{\star}_j = S \supseteq I_k$).
    Thus, $S^{\star} \setminus \{S^{\star}_k\}$ is a valid set cover of size $\ell-1$, which contradicts the assumption that $S^{\star}$ is an inclusion-wise minimal set cover of size $\ell$.
    Therefore, no minimal set cover of size $\ell$ can use a set that belongs to multiple $F_i$'s, and such sets can be safely removed to ensure the families are disjoint.}
    
  \textit{Running time analysis.} Every set $S\subseteq \U$ can be represented by an $n$-bit string where the $e$-th bit indicates whether an element $e$ belongs to the set $S$ or not. 
    This takes $\Oh(n)$ time per set.
    Now for each $i \in [\ell], S' \in \F$, we calculate bit strings for $I_i$ and $S'$ and include $S'$ in $F_i$ if all the elements that are present in $I_i$ are also present in $S'$. Now similarly any set family $F \subseteq \F$ can be represented by an $m$-bit string. For each $i \in [\ell]$, we compare $\ell-1$ many bit string and remove sets from $F_i$ that are present in any one of $\{F_1,\ldots,F_\ell\} \setminus \{F_i\}$.
    All of these operations take $\Oh(\ell mn)$ time.
\end{claimproof}
}
    We prove the statement by induction on the value of $\mu\fami{I}$.

    \noindent\textbf{Base Case:} $\mu\fami{I} = 0$.
    We first show that when $\mu\fami{I}=0$, either the tuple \family{I} already constitutes a latent cover, or no minimal set cover of size exactly $\ell$ exists that minimally extends \family{I}.

    If \family{I} is a latent cover, then by \Cref{lemma : partial_cover_found}, \solve\callfamily{I} correctly outputs a tuple \containing all minimal set covers that minimally extend \family{I} in time $\Oh(\ell mn)$.

    Now, suppose that \family{I} is not a latent cover. Since $I_i \in \intmath(\F)$ for all $i \in [\ell]$, by \Cref{prop : bounded_length_of_int_closure}, we know that $l(I_i) \le d$ for all $i \in [\ell]$. Given that $\mu\fami{I} = 0$, it follows that $l(I_i) = d$ for all $i \in [\ell]$. Suppose, for contradiction, that there exists a minimal set cover \family{S} that minimally extends \family{I}. Then, there must exist an element $e \in \U \setminus (I_1 \cup \ldots \cup I_{\ell})$. Since \family{S} covers $\U$, $e \in S_j$ for some $j$. This implies that $l(S_j) = d+1$ (since $S_j \supsetneq I_j$), contradicting the bound given in \Cref{prop : bounded_length_of_int_closure}. Therefore, no such \family{S} exists in this case, completing the base step.

    \noindent
    \textbf{Induction Hypothesis:}
    Assume that the lemma holds for all instances of \family{I} such that $\mu\fami{I} < t$. That is, for any tuple with a smaller measure, the algorithm correctly outputs all required $\ell$-tuples with the claimed properties and within the stated time bound.

    \smallskip
    \noindent
    \textbf{Inductive Step:}
    We now consider the case $\mu\fami{I} = t > 0$. There are two possibilities to analyze depending on whether \family{I} forms a latent cover or not.

    \medskip
    \noindent
    \textit{Case 1: \family{I} is a latent cover.}
    In this scenario, by \Cref{lemma : partial_cover_found}, \solve\callfamily{I} outputs a tuple satisfying all the desired properties, including the fact that it \contains~all minimal set covers that minimally extend \family{I}, and it does so in time $\Oh(\ell mn)$.

    \medskip
    \noindent
    \textit{Case 2: \family{I} is not a latent cover.}
    Let \family{S} be a minimal set cover that minimally extends \family{I}. Since \family{I} is not a latent cover, there must exist an element $e \in S_j$ for some $j$ that is uncovered by \family{I}, i.e., $e \in \U \setminus (I_1 \cup \ldots \cup I_{\ell})$. \Cref{alg:find_compact_representations} explicitly branches on which set $S_j$ covers the uncovered element $e$ in Line 8.

   Since \family{S} minimally extends \family{I}, we have $I_j \subseteq S_j$. Moreover, since $e \in S_j$, it follows that $I_j \cup \{e\} \subseteq S_j$. We define 
    \[
        I_{je} = \bigcap_{I_j\cup\{e\}\subseteq S \in F} S,
    \]
    that is, $I_{je}$ is the intersection of all sets in $\F$ that contain $I_j \cup \{e\}$. Let $F_{je}$ denote the family of all sets $S \in \F$ such that $I_j \cup \{e\} \subseteq S $.

    Clearly, $F_{je} \neq \emptyset$ since $S_j \in F_{je}$. 
    Moreover, $I_{je} \in \intmath(\F)$ and, by construction, $l(I_j) < l(I_{je})$ because $I_j \subsetneq I_{je}$ (since $e \in I_{je}\setminus I_j$).
    Recall that the measure is $\mu\fami{I} = \ell d - \sum^{\ell}_{i=1} l(I_i)$. Since the chain length $l(\cdot)$ strictly increases for $I_{je}$, the sum subtracted from $\ell d$ strictly increases. Hence, the overall measure strictly decreases: $\mu(I_1, \ldots, I_{je}, \ldots, I_{\ell}) < \mu\fami{I} = t$.
    This allows us to apply the induction hypothesis on the instance $(I_1, \ldots, I_{je}, \ldots, I_{\ell})$.

    By the induction hypothesis, there exists an $\ell$-tuple $\mathbb{T}$ in the family of $\ell$-tuples returned by \solve$(I_1,\ldots,I_{je},\ldots,I_{\ell},\U,\F)$ such that $\mathbb{T}$ \contains~\family{S}. Consequently, due to Line 14 of the algorithm, there must also exist a corresponding $\ell$-tuple $\mathbb{T}$ in the family of $\ell$-tuples returned by \solve$(I_1,\ldots,I_{j},\ldots,I_{\ell},\U,\F)$ that \contains ~\family{S} as desired.

    Note that we are in the case where \family{I} is not a latent cover. For each $j \in [\ell]$ and each $I_{je}$ constructed from a non-empty $F_{je}$, we have $l(I_{je}) > l(I_j)$, implying that $\mu(I_1, \ldots, I_{je}, \ldots, I_{\ell}) < t$. Thus, we may again apply the induction hypothesis to all such recursive instances.

    \begin{sloppypar}
        By the induction hypothesis, for every $\ell$-tuple $\mathbb{T}=(F_1,F_2,\ldots,F_{\ell})$ returned by \solve$(I_1,\ldots,I_{je},\ldots,I_{\ell},\U,\F)$, and for every sequence $(X_1,\ldots,X_\ell) \in F_1 \times \cdots \times F_\ell$, we have $\bigcup_{i=1}^{\ell} X_i = \mathcal{U}$. 
        Since in Line 14, the algorithm takes the union over all such recursive calls, it follows that in the final output of \solve\callfamily{I}, for every $\ell$-tuple $\mathbb{T} = (F_1,F_2,\ldots,F_{\ell})$ and every $(X_1,\ldots,X_\ell) \in F_1\times\cdots\times F_\ell$, the union $\bigcup_{i=1}^\ell X_i = \mathcal{U}$ also holds.

    \end{sloppypar}


{\it Bounding the number of $\ell$-tuples.}
  By the induction hypothesis, the number of $\ell$-tuples returned by each call to \solve$(I_1,\ldots,I_{je},\ldots,I_{\ell},\U,\F)$ is at most $\ell^{\mu(I_1,\ldots,I_{je},\ldots,I_{\ell})} \le \ell^{t-1}$, for each $j \in [\ell]$ with $F_{je} \neq \emptyset$. Since there are at most $\ell$ such branches, the total number of $\ell$-tuples returned by \solve\callfamily{I} is at most $\ell^t$.

  \begin{sloppypar}
 {\it Running-time analysis.} Each recursive instance of the algorithm \solve$(I_1,\ldots,I_{je},\ldots,I_{\ell},\U,\F)$ (for each valid $j$) takes time at most $\ell^{(t-1)+1} \cdot \Oh(nm)$, and each such instance outputs at most $\ell^{t-1}$ tuples, each of size at most $\ell \cdot n$. Therefore, the total running time across all recursive branches is bounded by $\ell^{t-1} \cdot \ell \cdot \ell \cdot \Oh(nm) = \ell^{t+1} \cdot \Oh(nm)$,
    which establishes the claimed time complexity.
\end{sloppypar}

    This completes the inductive argument, and hence the proof of the lemma. 
\end{proof}

        For each $i \in [k]$, we invoke \solve$(X_1,X_2,\ldots,X_i,\U,\F)$ where $X_j=\bot=\emptyset$ for every $j\in[i]$, and $\ell=i$. Every minimal set cover of size $i$ extends this bottom tuple. Hence, procedure \solve$(X_1,X_2,\ldots,X_i,\U,\F)$ returns a family of tuples $\mathscr{D}_i$ where $|\mathscr{D}_i| \le i^{id}$ that compactly represents all inclusion-wise minimal set covers of size exactly $i$.
    
        We compute $\mathscr{D}$ by taking the union of all $\mathscr{D}_i$ for $i \in [k]$ (that is, $\X{D} = \bigcup_{i\in [k]}\X{D}_i$).
        It compactly represents all inclusion-wise minimal set covers of size at most $k$ in $(\mathcal{U},\mathcal{F})$, satisfying the desired properties in \Cref{theorem:find_compact_representations}.
        Therefore, $|\D|\le k\cdot k^{kd} \le k^{kd+1}$. As every tuple output by the algorithm for all values of $i \in [k]$ satisfies the covering condition, the union also does.
        Moreover, for each $i \in [k]$, computing $\mathscr{D}_i$ requires at most $i^{id+1} \cdot \Oh(nm)$ time. Therefore, the total time to compute $\mathscr{D}$ is bounded by $k^{kd+2} \cdot \Oh(nm)$.
    \end{proof}
\imposscountcovers*
{
    Since there are $k$ choices for each of the $c$ columns, the total number of sets is $|\mathcal{F}| = k^c$. 
    Observe that since every set is of size exactly $c$, every minimal set cover of size at most $k$ is actually a minimum set cover of size exactly $k$.
    Moreover, every minimum set cover is also a partition of the universe: that is, each element of the universe is contained in exactly one $c$-sized set in the set cover.
    Thus, we have the following.
    
    \begin{claim}
        \label{clm:imposs-singletons-in-D}
        In any compact representation of all minimal set covers of size at most $k$ in $(\mathcal{U}, \mathcal{F})$, every sub-family of sets present in any $k$-tuple must be a singleton.
        That is, for any $k$-tuple $\mathbb{T} = (F_1, \dots, F_k)$ in the representation, $|F_i| = 1$ for all $i \in [k]$.
    \end{claim}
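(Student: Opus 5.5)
The plan is to use only the covering condition of a compact representation together with the rigid structure of $\mathcal{F}$: every set picks exactly one element from each of the $c$ columns, so $|S|=c$ for all $S\in\mathcal{F}$. This also gives $|\mathcal{F}|=k^c$ and $|\mathcal{U}|=kc$.

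First I will record the basic counting fact. A set cover needs at least $|\mathcal{U}|/c=k$ sets, so every cover of size at most $k$ has size exactly $k$. Its sets must then be pairwise disjoint, so it partitions $\mathcal{U}$. In particular, any tuple in a compact representation whose encapsulated sequences must cover $\mathcal{U}$ has arity exactly $k$; smaller arities cannot satisfy the covering condition with nonempty subfamilies.

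Next, I show that each $F_i$ in a tuple $\mathbb{T}=(F_1,\dots,F_k)$ is a singleton. I fix arbitrary $S_j\in F_j$ for $j\neq 1$. Their union has at most $(k-1)c$ elements, so the uncovered part $R$ has at least $c$ elements. By the covering condition, every $S_1\in F_1$ must contain $R$. Since $|S_1|=c$, this forces $|R|=c$ and $S_1=R$. Hence $F_1=\{R\}$, and by symmetry every $F_i$ is a singleton. Nonemptiness of the $F_i$ is implicit, since otherwise the tuple encapsulates nothing and can be discarded. Thus each tuple encapsulates exactly one set cover, up to ordering.

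Finally, I count minimal set covers, i.e.\ partitions of the grid into $k$ sets, each with one element per column. I label the blocks by their element in column $1$. For each column $j=2,\dots,c$ there is then a bijection from that column's $k$ elements to the $k$ blocks, giving $k!$ choices per column independently. Every such choice yields a valid cover, and distinct choices yield distinct unordered partitions, because the labelling by column $1$ is canonical. So there are $(k!)^{c-1}$ minimal covers. Completeness requires each of them to be encapsulated by some tuple, and each tuple encapsulates only one, so $|\mathscr{D}|\ge (k!)^{c-1}$.

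The main subtlety I expect is the exact count: the bijections must not be overcounted by permutations of the blocks, and fixing the labels via column $1$ handles this. Everything else is direct.
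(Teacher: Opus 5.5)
Your argument is correct and is essentially the paper's: fix members of the other $k-1$ subfamilies, observe that at least $c$ elements remain uncovered, conclude via the covering condition that every member of $F_1$ equals that residual set, and finish by symmetry. Your version is slightly tighter in two respects: you derive $S_1=R$ directly from $|S_1|=c$, and you explicitly note that the subfamilies must be nonempty, a degenerate case the paper leaves implicit.
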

    \begin{claimproof}
        By the covering property of a compact representation, every combination of sets $(S_1, \dots, S_k) \in F_1 \times \dots \times F_k$ must cover $\mathcal{U}$. 
    
        Fix arbitrary sets $S_2 \in F_2, \dots, S_k \in F_k$. Since each set covers exactly $c$ elements, their union covers at most $(k-1)c$ elements. 
        This leaves at least $c$ elements of $\mathcal{U}$ uncovered (specifically, exactly one element in each column). 
    
        For the tuple to be valid, \emph{every} set $S_1 \in F_1$ must cover exactly these remaining $c$ elements. 
        However, by the definition of $\mathcal{F}$, there is exactly one set in $\mathcal{F}$ that contains those specific $c$ elements. 
        Therefore, $F_1$ is a singleton. 
        By symmetry, this holds for all $F_i$.
    \end{claimproof}
    
    Let $\mathscr{D}$ denote \emph{any} compact representation of all minimal set covers of size at most $k$ in $(\mathcal{U}, \mathcal{F})$.
    Consider any minimal set cover $(S_1, \dots, S_k)$ in the set system.
    Observe that such a set cover always exists; for example, one can choose $S_i$ to be exactly the elements in the $i$-th row of the grid.
    By definition, there is some $k$-tuple in $\mathscr{D}$ that \contains this set cover.
    Moreover, by \Cref{clm:imposs-singletons-in-D}, this is the \emph{unique} set cover that this tuple \contains.
    Thus, the number of $k$-tuples in $\mathscr{D}$ is tightly lower-bounded by the number of minimal set covers of size $k$ in $(\mathcal{U}, \mathcal{F})$.
    
    \begin{claim}
        $|\mathscr{D}| \ge \text{(number of minimal set covers of size $k$)}$.
    \end{claim}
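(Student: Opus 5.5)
The plan is to build an injection from the set of minimal set covers of size $k$ into $\mathscr{D}$. The map comes from the completeness condition of \Cref{def:compact_representations}, and injectivity comes from \Cref{clm:imposs-singletons-in-D}. Let $\mathcal{M}$ denote the family of inclusion-wise minimal set covers of $(\mathcal{U},\mathcal{F})$ of size exactly $k$. The statement is then $|\mathscr{D}|\ge|\mathcal{M}|$.

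First I recheck why every cover in $\mathcal{M}$ is a partition of $\mathcal{U}$. We have $|\mathcal{U}|=kc$ and every set has size exactly $c$. So any cover of size at most $k$ has exactly $k$ sets, and these sets are pairwise disjoint. In particular, each $\mathcal{S}\in\mathcal{M}$ has $|\mathcal{S}|=k$. By the completeness condition, we may choose for each $\mathcal{S}\in\mathcal{M}$ a tuple $\Phi(\mathcal{S})=(F_1,\ldots,F_k)\in\mathscr{D}$ of arity $|\mathcal{S}|=k$ together with a sequence $(S_1,\ldots,S_k)$ of $\mathcal{S}$ satisfying $S_i\in F_i$ for all $i$. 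In other words, $\Phi(\mathcal{S})$ \contains some ordering of $\mathcal{S}$. This defines a map $\Phi:\mathcal{M}\to\mathscr{D}$.

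Next I show that $\Phi$ is injective. Suppose $\Phi(\mathcal{S})=\Phi(\mathcal{S}')=\mathbb{T}=(F_1,\ldots,F_k)$. Then $\mathbb{T}$ is a $k$-tuple in $\mathscr{D}$, so \Cref{clm:imposs-singletons-in-D} gives $|F_i|=1$ for every $i$. (The families are nonempty because $\mathbb{T}$ \contains a sequence of $\mathcal{S}$.) Hence $F_1\times\cdots\times F_k$ consists of a single sequence $(T_1,\ldots,T_k)$. Both the chosen ordering of $\mathcal{S}$ and that of $\mathcal{S}'$ must equal this sequence. Therefore $\mathcal{S}=\{T_1,\ldots,T_k\}=\mathcal{S}'$.

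It follows that $|\mathscr{D}|\ge|\Phi(\mathcal{M})|=|\mathcal{M}|$. Tuples of other arities in $\mathscr{D}$ only make $\mathscr{D}$ larger and are simply ignored.

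The only delicate point is applying \Cref{clm:imposs-singletons-in-D} correctly. That claim relies on the covering condition for every product sequence. It also implicitly requires that every $F_i$ is nonempty, since otherwise the argument "fix $S_2,\ldots,S_k$" is vacuous. So I will state explicitly that the tuple $\Phi(\mathcal{S})$ has all families nonempty, because it \contains a sequence. For such a tuple, each $F_i$ is exactly the singleton containing the set complementary to the union of the other $k-1$ chosen sets.
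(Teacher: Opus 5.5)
Your proof is correct and follows the same route as the paper: completeness assigns each minimal set cover of size $k$ a $k$-tuple in $\mathscr{D}$ that encapsulates it, and the singleton claim shows that this tuple encapsulates no other cover, so the assignment is injective. Your explicit check that every family in $\Phi(\mathcal{S})$ is nonempty is a worthwhile tightening, since the singleton claim as literally stated fails for tuples containing an empty family (the covering condition then holds vacuously), and you only ever apply it to tuples that encapsulate a sequence.
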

    
    To count the number of minimal set covers of size $k$, we count the number of ways to partition the grid into $k$ disjoint sets of size $c$. 
    We process the universe column by column:
    \begin{itemize}
        \item Column 1: There are $k$ elements in the first column to be distributed among $k$ \emph{unlabelled} sets. 
        There is exactly $1$ way to initialize this partition.
        \item Columns 2 to $c$: For each subsequent column, we must assign its $k$ elements to the $k$ established sets. 
        There are exactly $k!$ ways to distinctly map the elements of the column to the $k$ sets.
    \end{itemize}
    Since the choices for columns $2$ to $c$ are completely independent, we have the following.
    \begin{claim}
        \label{clm:imposs-count-covers}
        The number of minimal set covers of size $k$ in $(\mathcal{U}, \mathcal{F})$ is $(k!)^{c-1}$.
        Consequently, $|\mathscr{D}| \ge (k!)^{c-1}$.
    \end{claim}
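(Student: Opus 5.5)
The statement to prove is \Cref{clm:imposs-count-covers_1}: any compact representation of the minimal set covers of size at most $k$ in the grid system has size at least $(k!)^{c-1}$. The plan has three steps: understand the minimal covers, show every tuple is forced to consist of singletons, and then count the covers.

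\textbf{Step 1: minimal covers are exactly the partitions into $k$ sets.} Every set in $\mathcal{F}$ has exactly one element per column, so it has size $c$. Each column has $k$ elements, and a set meets that column in only one of them. Hence any cover needs at least $k$ sets. So every minimal set cover of size at most $k$ has size exactly $k$, and its sets are pairwise disjoint, since they must cover $kc$ elements with $k$ sets of size $c$. Conversely, any partition of $\mathcal{U}$ into $k$ members of $\mathcal{F}$ is a minimal cover, because removing any of its sets uncovers elements.

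\textbf{Step 2: the only tuples that matter have singleton families.} By the completeness condition, every minimal cover $\mathcal{S}$ is encapsulated by some tuple of arity $|\mathcal{S}|=k$. Take such a tuple $(F_1,\dots,F_k)$, fix any $i$, and fix any choice $S_j\in F_j$ for $j\neq i$. By the covering condition, every $S_i\in F_i$ must cover $\mathcal{U}\setminus\bigcup_{j\ne i}S_j$. This leftover set has at least one element in each column: each column has $k$ elements, and $k-1$ sets remove at most $k-1$ of them.
\begin{itemize}
\item $S_i$ contains exactly one element per column, so in each column the leftover must be exactly one element, and $S_i$ must equal the leftover set.
\item A set of $\mathcal{F}$ is determined by its element in each column, so $S_i$ is uniquely determined.
\end{itemize}
Therefore $|F_i|=1$ for every $i$. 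Consequently each relevant tuple encapsulates exactly one set cover, up to ordering. Since every minimal cover must be encapsulated by some tuple, $|\mathscr{D}|$ is at least the number of minimal covers of size $k$.

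\textbf{Step 3: counting the partitions.} It remains to count the unordered partitions of the grid into $k$ members of $\mathcal{F}$. Label the $k$ blocks by their element in column $1$, which identifies a block with a row index. For each column $j=2,\dots,c$, the blocks meet column $j$ in distinct elements, so the assignment is a bijection from blocks to the column's $k$ elements. That gives $k!$ choices per column, made independently. Every such choice yields a valid partition, and distinct choices give distinct partitions, because the labelling by column $1$ is canonical. This gives exactly $(k!)^{c-1}$ partitions and hence the bound.

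The main obstacle is Step 2, specifically the observation that the leftover set contains at least one element per column; this is what forces $S_i$ to be unique. I would also note the case where $F_j$ is empty for some $j$: then the tuple encapsulates nothing and can be ignored for the counting.
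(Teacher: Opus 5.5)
Your proposal is correct and follows the paper's argument essentially step for step. Minimal covers of size at most $k$ are exactly the partitions of the grid into $k$ members of $\mathcal{F}$, and the covering condition forces every family in an encapsulating $k$-tuple to be a singleton, so distinct covers map to distinct tuples; the count $(k!)^{c-1}$ then comes from labelling blocks by their column-$1$ element and choosing a bijection for each remaining column. Your justification that the leftover contains exactly one element per column (at least one because $k-1$ sets remove at most $k-1$ elements per column, at most one because $S_i$ must contain the leftover) is slightly more careful than the paper's, which simply asserts it.
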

}
\tightnessdslfree*
\begin{proof}
    When $c=d-2$, all sets in $\mathcal{F}$ are of size $d-2$.
    By definition, a $d$-semi-ladder requires $d$ sets $S_1, \dots, S_d$ and $d$ elements $x_1, \dots, x_d$ such that $x_i \in S_j$ for $j > i$, and $x_i \notin S_i$. 
    Specifically, the set $S_d$ must contain the $d-1$ elements $x_1, \dots, x_{d-1}$. 
    Since no set in $\mathcal{F}$ can contain two elements from the same column, these $d-1$ elements must belong to distinct columns. 
    However, the grid only has $(d-2)$ columns. Therefore, such a set $S_d$ cannot exist.
    Thus, there is no $d$-semi-ladder in $\mathcal{F}$.
    
\end{proof}
\insertlabelhere*
\begin{proof}
    We choose $k$ to be any integer sufficiently large such that $\log_k(k!) > q$.
    Consider our construction $(\mathcal{U}, \mathcal{F})$ with $k$ rows and $c$ columns, where $c$ is an integer we will choose later.
    
    Recall that $|\mathcal{U}| = kc$ and $|\mathcal{F}| = k^c$.
    For any positive $c$, we have $kc \le k^c$, and thus $|\mathcal{U}| + |\mathcal{F}| \le 2k^c$. 
    
    Let $\mathscr{D}$ denote an arbitrary compact representation of all minimal set covers of size at most $k$.
    By \Cref{clm:imposs-count-covers}, we have $|\mathscr{D}| \ge (k!)^{c-1}$.
    Therefore, for the size to exceed the stated \fpt bound, it suffices to choose $c$ such that:
    \[
        |\mathscr{D}| \ge (k!)^{c-1} > f(k) \cdot (2k^c)^q \ge f(k) \cdot (|\mathcal{U}| + |\mathcal{F}|)^q.
    \]
    Taking the logarithm base $k$ yields:
    \[
        (c-1) \log_k(k!) > \log_k(f(k)) + q\log_k(2) + c \cdot q.
    \]
    Rearranging the terms gives:
    \[
        c \cdot \left( \log_k(k!) - q \right) > \log_k(f(k)) + q\log_k(2) + \log_k(k!).
    \]
    Since we chose $k$ to be sufficiently large, the coefficient $(\log_k(k!) - q)$ is a positive constant. 
    Moreover, since $k$ and $q$ are now fixed, the entire right-hand side is a constant.
    Finally, we set $c$ sufficiently large so that the inequality holds, and thus we have the theorem.
\end{proof}
\cscalgo*
\begin{proof}
        Let $\mathscr{D}$ be the family of tuples returned by \Cref{theorem:find_compact_representations} in $k^{kd+2}\cdot \Oh(nm)$ time.
        We would like to invoke the \gst algorithm given by \Cref{prop : group steiner tree algo} to find a solution.
        For each tuple $\mathbb{T}=\{F_1,F_2,\ldots,F_\ell\}$, we define the groups to be $F_i$ for each $i \in [\ell]$. Now we will run the \gst algorithm given by \Cref{prop : group steiner tree algo} on the graph $\hat{G}$ to find a solution. We will call such an instance the \gst instance on tuple $\mathbb{T}$.
        Any solution $T=\{S_1,\ldots,S_\ell\}$ to \gst instance on any tuple is also a solution for the \Csc instance as at least one set of each family must be taken in $T$, which ensures that all elements are covered by condition~\ref{compact_rep:covering_condition}.
        Also, as $T$ is a solution of the \gst instance, $T$ is connected.
        What remains to be shown is that, if the \Csc instance parameterized by solution size is a \yes instance, then there exists a tuple such that \Cref{prop : group steiner tree algo} returns a solution of size at most $k$.
        Let $D^\star$ be an optimal solution of the \Csc instance of size at most $k$. Let $D'\subseteq D^\star$ be a minimal solution. We know by \Cref{def:compact_representations} that there exists a tuple $\mathbb{T}$ in $\mathscr{D}$ that \contains $D'$ as $|D'| \le k$. Since $D^\star$ is a solution for the \gst instance on the tuple $\mathbb{T}$, there exists a solution returned by \Cref{prop : group steiner tree algo} of size at most $k$.
        
        \noindent
        \textit{Running time analysis.} The family of tuples $\mathscr{D}$ of size at most $k^{kd+1}$ can be computed in $k^{kd+2}\cdot \Oh(nm)$ time (\Cref{theorem:find_compact_representations}).
        Now, for each tuple, the \gst algorithm (\Cref{prop : group steiner tree algo}) takes $2^\ell \cdot n^{\Oh(1)}$ time, the entire algorithm runs in $k^{kd+2}\cdot 2^k \cdot n^{\Oh(1)}$ time.
        \Ma{\textit{Space complexity analysis.}
        We note that if we first store all the tuples in $\mathscr{D}$, then the space complexity of the algorithm is $k^{kd+1}\cdot n^{\Oh(1)}$. However, we can avoid storing all the tuples in $\mathscr{D}$ at once. Instead, we can generate each tuple on-the-fly and run the \gst algorithm for that tuple before moving on to the next one. This way, we only need to store one tuple at a time, which requires $k\cdot n^{\Oh(1)}$ space. Therefore, the overall space complexity of the algorithm is $k\cdot n^{\Oh(1)}$.}
    \end{proof}
\csccds*
\begin{proof}
        Given an instance $(\calG,k)$ of \cds, we model it as an instance $(({\cal U, \cal F}),\hat{G})$ of \Csc as follows. We set $\cU = V(\calG)$. The family $\cF$ consists of the closed neighborhoods of all the vertices in $\calG$.
    Two sets $A,B \in \cF$ have an edge between them in $\hat{G}$ if the corresponding vertices share an edge in $\calG$. It is easy to show that the instance $(\calG,k)$ of \cds has a solution of size at most $k$ if and only if the $(({\cal U, \cal F}),\hat{G})$ instance of \Csc has a solution of size at most $k$.
\end{proof}
\subsection{Details of \Cref{rem:compact-size-1}}

\begin{lemma}\label{lem:compact-size-1}
The family $\mathscr{D}=\{\T^{\star}\}$ is a compact representation (in the sense of Definition~\ref{def:compact_representations})
of all inclusion-wise minimal set covers of size at most $k+1$ for the instance $(\cU,\cF,k+1)$.
In particular, $|\mathscr{D}|=1$.
\end{lemma}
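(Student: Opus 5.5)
The plan is to check the three conditions of \Cref{def:compact_representations} directly for the single tuple $\T^{\star}=(F_0,F_1,\ldots,F_k)$, where $F_0=\{B\}$ and $F_i=\{S_{i,1},\ldots,S_{i,m}\}$. The arity is $k+1$, which is admissible for the instance $(\cU,\cF,k+1)$.

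\emph{Covering condition.} Take any choice $(f_0,f_1,\ldots,f_k)\in F_0\times\cdots\times F_k$. Then $f_0=B$, so $f_0$ covers $t$ and every element $b_{i,j}$. For each $i\in[k]$, $f_i=S_{i,j_i}$ for some $j_i\in[m]$, and this set contains $u_i$. Hence the union of the chosen sets is all of $\cU$.

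\emph{Disjointness condition.} $F_0$ contains only $B$, which differs from every $S_{i,j}$ (only $B$ contains $t$). For $i\neq i'$, every set in $F_i$ contains $u_i$ and no set in $F_{i'}$ does. So the families $F_0,F_1,\ldots,F_k$ are pairwise disjoint.

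\emph{Completeness condition.} This is the only step that needs an argument, and it reuses the characterization from \Cref{lem:many-minimal}. Let $\cS$ be an inclusion-wise minimal set cover.
\begin{itemize}
\item $\cS$ must contain $B$, because $t$ lies only in $B$.
\item For each $i\in[k]$, $\cS$ must contain some $S_{i,j}$, because $u_i$ lies only in the sets $S_{i,1},\ldots,S_{i,m}$.
\item By minimality, $\cS$ contains exactly one $S_{i,j}$ for each $i$. Two such sets would make one redundant, since $B$ already covers every $b_{i,j}$.
\end{itemize}
So $\cS=\{B,S_{1,j_1},\ldots,S_{k,j_k}\}$ and $|\cS|=k+1$. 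Order it as the sequence $(B,S_{1,j_1},\ldots,S_{k,j_k})$. Then $B\in F_0$ and $S_{i,j_i}\in F_i$ for every $i\in[k]$, so $\T^{\star}$ \contains this sequence.

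Since all three conditions hold, $\mathscr{D}=\{\T^{\star}\}$ is a compact representation, and $|\mathscr{D}|=1$. No step should be a real obstacle; the only care needed is ensuring the tuple arity matches $|\cS|=k+1$, as required by the completeness condition.
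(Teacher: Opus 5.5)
Your proposal is correct and follows essentially the same route as the paper: you verify the covering, disjointness, and completeness conditions directly for the single tuple $\T^{\star}$. As in the paper, completeness rests on the characterization of minimal set covers from \Cref{lem:many-minimal} together with the ordering $(B,S_{1,j_1},\ldots,S_{k,j_k})$, and your disjointness argument via membership of $t$ and $u_i$ is just a slightly more explicit version of the paper's.
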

\begin{proof}
We verify the three conditions in Definition~\ref{def:compact_representations} for the single tuple
$\T^{\star}=(F_0,F_1,\ldots,F_k)$.

\smallskip
\noindent\emph{Covering condition.}
Fix any choice $(f_0,f_1,\ldots,f_k)\in F_0\times F_1\times\cdots\times F_k$.
Then $f_0=B$. Moreover, for each $i\in[k]$, we have $f_i=S_{i,j_i}$ for some $j_i\in[m]$, hence $u_i\in f_i$.
Therefore $B$ covers $t$ and all elements $b_{i,j}$, while $\{f_i:i\in[k]\}$ covers all $u_1,\ldots,u_k$.
Thus $\bigcup_{i=0}^k f_i=\cU$.

\smallskip
\noindent\emph{Structural condition.}
The subfamilies $F_0,F_1,\ldots,F_k$ are pairwise disjoint as families of sets: $F_0$ contains only $B$, and for
each $i\in[k]$ the family $F_i$ consists only of sets $S_{i,j}$ with fixed first index $i$.

\smallskip
\noindent\emph{Completeness condition.}
Let $\cS\subseteq \cF$ be an inclusion-wise minimal set cover with $|\cS|\le k+1$.
By Lemma~\ref{lem:many-minimal} (or by the same argument), every inclusion-wise minimal set cover in this instance has
the form $\cS=\{B\}\cup\{S_{1,j_1},\ldots,S_{k,j_k}\}$,
for some $(j_1,\ldots,j_k)\in[m]^k$, and hence $|\cS|=k+1$.
Order $\cS$ as $(B,S_{1,j_1},\ldots,S_{k,j_k})$. Then $B\in F_0$ and $S_{i,j_i}\in F_i$ for each $i\in[k]$,
so $\cS$ is captured by $\T^{\star}$.

This establishes all three conditions. Hence $\mathscr{D}$ is a compact representation and has size $1$.
\end{proof}

\section{Missing proofs and details from \Cref{sec:cores}}
\label{sec:app-cores}
\existencecores*
\begin{proof}
The following result will be used to argue the base case in the proof of \Cref{thm:existence-domination-dominator-core}.
As noted in \Cref{sec:prelims}, we handle graphs with a dominating set of size at most one separately.
Thus, we shall assume that the size of a minimum connected dominating set of $G$ is strictly greater than $1$.

\begin{proposition}[Theorem 2.4 of \cite{DBLP:journals/toct/Guillemot25}]
    \label{prop : bounding number of neighbourhoods}
    Let $\mathcal{F}$ be a \dslfree family of sets over a universe of size $n$ such that no element of the universe appears in all the sets of $\mathcal{F}$. Then, the number of distinct sets in $\mathcal{F}$ is bounded by $\Oh(n^d)$. 
\end{proposition}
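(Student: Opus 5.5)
The plan is to encode every set of $\mathcal{F}$ by a subset of the universe of size at most $d$. That gives at most $\sum_{j\le d}\binom{n}{j}=\Oh(n^d)$ distinct sets. The encoding uses the chain-length bound of \Cref{prop : bounded_length_of_int_closure}.

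For $X\subseteq\U$ define the closure $\mathrm{cl}(X)$ as the intersection of all sets of $\mathcal F$ containing $X$, with $\mathrm{cl}(X)=\U$ if no set contains $X$. Every $\mathrm{cl}(X)$ lies in $\intmath(\mathcal F)$. The hypothesis that no element lies in all sets of $\mathcal F$ gives $\mathrm{cl}(\emptyset)=\bigcap_{A\in\mathcal F}A=\emptyset$. Two basic facts follow directly from the definition:
\begin{itemize}
\item monotonicity: if $X\subseteq Y$ then $\mathrm{cl}(X)\subseteq\mathrm{cl}(Y)$;
\item upper bound: if $I\in\intmath(\mathcal F)$ and $X\subseteq I$, then $\mathrm{cl}(X)\subseteq I$. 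This holds because $I$ is either $\U$ or an intersection of members of $\mathcal F$ each containing $X$.
\end{itemize}

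Fix a set $S\in\mathcal F$, so $S\in\intmath(\mathcal F)$. Build $X$ greedily. Start with $X_0=\emptyset$. While $\mathrm{cl}(X_i)\ne S$, pick an element $e\in S\setminus\mathrm{cl}(X_i)$ and set $X_{i+1}=X_i\cup\{e\}$. By the upper-bound fact each $\mathrm{cl}(X_i)\subseteq S$. Since $e\in\mathrm{cl}(X_{i+1})\setminus\mathrm{cl}(X_i)$, the closures strictly increase. We therefore obtain a strict chain
\[
\emptyset=\mathrm{cl}(X_0)\subsetneq\mathrm{cl}(X_1)\subsetneq\cdots\subsetneq\mathrm{cl}(X_r)=S
\]
inside $\intmath(\mathcal F)$. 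This is an $S$-chain of length $r$, so \Cref{prop : bounded_length_of_int_closure} gives $r\le\ell(S)\le d$. Hence $S=\mathrm{cl}(X_r)$ with $|X_r|\le d$.

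The map $X\mapsto\mathrm{cl}(X)$ is therefore surjective from subsets of size at most $d$ onto the distinct sets of $\mathcal F$. This gives the bound $\sum_{j=0}^{d}\binom nj=\Oh(n^d)$ for fixed $d$.

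The only delicate step is making the chain start at $\emptyset$, so that its length is really bounded by $\ell(S)$ rather than $\ell(S)+1$. This is exactly where the hypothesis that no element lies in every set is used, via $\mathrm{cl}(\emptyset)=\emptyset$. One also has to check that $\emptyset$ counts as a member of $\intmath(\mathcal F)$, which \Cref{def:int_closure} provides explicitly.
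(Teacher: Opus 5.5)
The paper gives no proof of this statement. It imports it as Theorem~2.4 of Guillemot and uses it as a black box. Your argument is a correct, self-contained derivation from \Cref{prop : bounded_length_of_int_closure}. The greedy generators $X_r\subseteq S$ yield a strict $S$-chain $\mathrm{cl}(X_0)\subsetneq\cdots\subsetneq\mathrm{cl}(X_r)=S$ in $\intmath(\mathcal F)$ of length $r=|X_r|$. So every set of $\mathcal F$ is the closure of at most $d$ of its elements, and there are at most $\sum_{j\le d}\binom nj$ distinct sets. Compared with the citation, this makes the counting bound visibly a corollary of the chain-length bound, with an explicit constant.

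Your closing paragraph, however, misplaces the role of the hypothesis. The chain above has length $r$ whatever its bottom element is. Moreover, $\mathrm{cl}(\emptyset)=\bigcap\mathcal F$ lies in $\intmath(\mathcal F)$ in every case, so $r\le\ell(S)$ holds automatically. There is no danger of getting $\ell(S)+1$, because $\ell(S)$ is by definition the maximum length of an $S$-chain. If $\bigcap\mathcal F\neq\emptyset$, you can even prepend the artificial bottom $\emptyset$ and get $r\le\ell(S)-1$. You also apply the upper-bound fact only with $I=S\in\mathcal F$. So the hypothesis is never actually needed in your proof.

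In fact you can bypass \Cref{prop : bounded_length_of_int_closure} altogether. Write $X_j=X_{j-1}\cup\{e_j\}$. Since $e_j\notin\mathrm{cl}(X_{j-1})$ and $S\supseteq X_{j-1}$, some $A_j\in\mathcal F$ contains $X_{j-1}$ but not $e_j$. Then $e_i\in X_{j-1}\subseteq A_j$ for all $i<j$. Hence $e_1,\ldots,e_r$ together with $A_1,\ldots,A_r$ form an $r$-semi-ladder, which forces $r\le d-1$. This gives the slightly sharper count $\sum_{j<d}\binom nj=\Oh(n^{d-1})$, with no assumption on $\bigcap\mathcal F$.
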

The \emph{image set} of a function $h: A \to B$ denotes the subset of elements of $B$ that have a pre-image in $A$, and is denoted by $f(A)$.

Using the reasoning of \cite{DBLP:journals/toct/Guillemot25}, we can also show the following observation, which will be used in the proof of base case of \Cref{lemma:inductive-core}, the main technical result of this section. For the sake of completeness, we provide the proof of this observation here.

\begin{observation}[\cite{DBLP:journals/toct/Guillemot25}]
\label{obs:existance of grouping}
    For the \dslfree set family $\Co{F}$, let $\intmath(\mathcal{F})$ denote the intersection closure of $\mathcal{F}$. If there exists a set $S' \in \intmath(\mathcal{F})$ with $k^{\ell(S')} < |S'|$, then Line~\ref{base case of induction algo: findcore} of \textsc{FindCore}$( \mathcal{U}, \mathcal{F}, f, g, S)$ is never executed.
\end{observation}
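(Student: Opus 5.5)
The statement to prove is the observation: if some $S'\in\intmath(\mathcal F)$ satisfies $k^{\ell(S')}<|S'|$, then the ``no such set'' branch (Line~\ref{base case of induction algo: findcore}) of \textsc{FindCore} is not executed. The plan is a well-ordering argument on chain length. The selection step of \textsc{FindCore} only requires exhibiting a set $S\in\intmath(\mathcal F)$ with $k^{\ell(S)}<|S|$ such that no $X\in\intmath(\mathcal F)$ with $k^{\ell(X)}<|X|$ has $\ell(X)<\ell(S)$. So it suffices to show that, whenever the witness family is nonempty, such a minimal witness exists.

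Concretely, define
\[
\hat{\mathcal F}=\{X\in\intmath(\mathcal F): k^{\ell(X)}<|X|\}.
\]
By hypothesis $S'\in\hat{\mathcal F}$, so $\hat{\mathcal F}\neq\emptyset$. Every $X\in\intmath(\mathcal F)$ has $\ell(X)\in\{0,1,\dots,d\}$ by \Cref{prop : bounded_length_of_int_closure}; in fact any finite family has finite chain lengths, which is all we need. Hence the set $\{\ell(X):X\in\hat{\mathcal F}\}$ is a nonempty set of naturals and has a minimum $i$. Pick any $S\in\hat{\mathcal F}$ with $\ell(S)=i$.

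This $S$ meets both requirements of the selection line. It satisfies $k^{\ell(S)}<|S|$, and every $X$ with $k^{\ell(X)}<|X|$ lies in $\hat{\mathcal F}$, so $\ell(X)\ge i=\ell(S)$. Therefore the condition ``no such set $S$ exists'' is false, the algorithm takes the \textsc{Group} branch, and Line~\ref{base case of induction algo: findcore} is not executed. I would also note that $\intmath(\mathcal F)$ is finite, so the selection is effective in polynomial time for fixed $d$.

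The only mild obstacle is the well-definedness of $\ell$, including the seed $\bot$ and $\cU$ being in the closure; finiteness settles both. There is no real difficulty beyond making explicit that the algorithm's choice criterion is exactly ``minimum $\ell$ within $\hat{\mathcal F}$.''
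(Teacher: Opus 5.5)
Your proposal is correct and follows the same route as the paper. Both arguments take the family of witnesses, pick a member of minimum chain length, and observe that it meets the selection criterion, so the ``no such set'' branch cannot be reached. Your definition of $\hat{\mathcal F}$, which uses the per-set condition $k^{\ell(X)}<|X|$, is slightly cleaner than the paper's. The paper compares every set against the fixed threshold $k^{\ell(S')}$, so its family can miss a witness of smaller chain length, and its minimality claim does not follow directly from its own definition.
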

\ma{Deleted the proof here. As the reviewer points out}

The following lemma serves as the base case for the inductive proof of \Cref{thm:existence-domination-dominator-core}. It establishes that if the universe $\mathcal{U}$ is sufficiently small, then the image sets of $f$ and $g$ already form the required $k$-dominator and $k$-grouped domination cores, respectively.





\begin{lemma}\label{lemma:core base case}Consider a relevant tuple $(\U, \F, f,g)$ for a graph $G$.
    Suppose that the set family $\F$ is \dslfree.\hide{ with the corresponding universe $\U$.} Moreover, suppose that the image set of $f$ forms a $k$-dominator core of $G$, the image set of $g$ forms a $k$-grouped domination core of $G$, and $|\mathcal{U}| \le k^d$. Then, \textsc{FindCore}$(\mathcal{U}, \mathcal{F}, f, g)$ outputs a $k$-grouped domination core of size at most $k^d$ and a $k$-dominator core of size at most $\Oh(k^{d^2})$.
    \hide{\begin{itemize}
        \item \textsc{FindCore}$(\mathcal{U},\mathcal{F},  f, g)$ outputs a $k$-grouped domination core of size at most $k^d$; and
        \item \textsc{FindCore}$(\mathcal{U}, \mathcal{F}, f, g)$ outputs a $k$-dominator core of size at most $\Oh(k^{d^2})$.
    \end{itemize}}
\end{lemma}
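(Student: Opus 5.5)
Since $|\mathcal{U}|\le k^d$, \textsc{FindCore}$(\mathcal{U},\mathcal{F},f,g)$ takes the first branch and returns $(f(\F),g(\U))$ immediately, with no recursion. The assumptions already say that the image set of $f$ is a $k$-dominator core of $G$ and the image set of $g$ is a $k$-grouped domination core of $G$. So correctness of the output is free, and the whole task is to bound the two sizes.

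For the grouped domination core the bound is immediate. The batches are the sets $g(u)$ for $u\in\U$, so their number is at most $|\U|\le k^d$. Batches coming from distinct elements are disjoint; if some $g(u)$ is empty, or two coincide, discarding duplicates only shrinks the family.

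For the dominator core, the classes are the sets $f(A)$ with $A\in\F$. Since \textsc{Group} identifies duplicate images and merges their $f$-values, $\F$ consists of pairwise distinct subsets of $\U$, so $|f(\F)|\le|\F|$. I would apply \Cref{prop : bounding number of neighbourhoods} to the \dslfree family $\F$ over a universe of size $n=|\U|\le k^d$. This gives $|\F|=\Oh((k^d)^d)=\Oh(k^{d^2})$.

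The hypothesis of that proposition, that no element of $\U$ lies in every set of $\F$, is the main obstacle. I would derive it from the standing assumption that $G$ has no dominating set of size one. Suppose some $u\in\U$ lies in all sets of $\F$. By the preservation property of the $k$-grouped domination core $g(\U)$, combined with the invariant (condition 4 of \Cref{lemma:inductive-core}) relating the red-blue instance to $\hat G$, any single red vertex in $f(A)$ for $A\in\F$ dominates the batch $g(u)$. If $\U=\{u\}$, such a vertex alone would dominate $G$, a contradiction. In general, the cleaner route is to remove a universal element $u$ before counting: deleting it from every set leaves the distinct sets distinct and keeps the family \dslfree (a semi-ladder in the reduced family, with elements $x_i\neq u$, is also one in $\F$). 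This does not change the count, and the proposition then applies to $\U\setminus\{u\}$. Repeating this for all universal elements yields $|\F|=\Oh(k^{d^2})$, so the dominator core has size $\Oh(k^{d^2})$.
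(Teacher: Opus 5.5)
Your proof is correct and follows the paper's argument. The call returns $(f(\F),g(\U))$ in the first branch, correctness is inherited from the hypotheses, there are at most $|\U|\le k^d$ batches, and the dominator core is bounded by Guillemot's $\Oh(n^d)$ count of distinct sets with $n\le k^d$. Your removal of universal elements to meet that proposition's hypothesis is valid, and slightly more careful than the paper, which handles this point only implicitly through its standing assumption that no single vertex dominates $G$.
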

\begin{proof}
    The statement of \Cref{lemma:core base case} follows directly from \Cref{prop : bounding number of neighbourhoods}, together with the assumption that the image of $f$ forms a $k$-dominator core and the image of $g$ forms a
    $k$-grouped domination core.
\end{proof}

We begin by defining the problem of \rbds, which will be used as a tool in the proof of the main lemma of this section. In \rbds, we are given a bipartite graph $G = (R \uplus B, E)$ and an integer $k$, and the goal is to determine whether there exists a subset $S \subseteq R$ of size at most $k$ such that every vertex in $B$ is adjacent to at least one vertex in $S$. We will call $R$ as the red set and $B$ as the blue set of $G$.


Let $\hat{G}= (R\uplus B, E)$ be the graph constructed as follows. We create two copies of the vertex set $V(G)$, denoted by $R$ and $B$. We add edge $(x,y)$ to $E$ if the corresponding vertex of $x$ in $R$ is adjacent to the corresponding vertex of $y$ in $B$ in the original graph $G$ or if $x$ and $y$ correspond to the same vertex in $G$.

Now, for a relevant tuple $( \U, \F, f, g, S)$, we define the \emph{relevant red-blue dominating set} instance as follows.
$V(G') = R \uplus B $ where $R$ and $B$ are copies of vertices in $V(G)$. Moreover, the edges $E(G')$ are defined as follows: for each subset $S'\in \mathcal{F}$, we consider each element $v \in f(S')$. The corresponding vertex $v\in R$ is adjacent to the corresponding vertices of $B$ in the set $\cup_{s\in S'}g(s) $.
Both of these constructions will be used to design an invariant which will be useful to prove the existence of a bounded $k$-dominator core and $k$-grouped domination core and thus help in proving the correctness using strong induction. One will give an handle to the original graph $G$ and the other will give a handle to the properties of the relevant tuple $( \U, \F, f, g)$.




\begin{observation}\label{obs:rbds-ds}
    A set $S \subseteq V(G)$ is a dominating set of $G$ if and only if the corresponding vertices of $S$ in $R$ form a red-blue dominating set of $\hat{G}$.
\end{observation}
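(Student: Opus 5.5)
The statement is \Cref{obs:rbds-ds}. I will prove it by unwinding the definitions of $\hat{G}$ and of \rbds. In $\hat G$, a red copy $r_x$ of $x\in V(G)$ is adjacent to a blue copy $b_y$ of $y\in V(G)$ exactly when $xy\in E(G)$ or $x=y$, that is, exactly when $y\in N_G[x]$. So the blue neighbourhood of $r_x$ in $\hat G$ is the set of blue copies of $N_G[x]$. This identity is the only thing I need.

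\emph{Forward direction.} Let $S\subseteq V(G)$ be a dominating set, and let $R_S=\{r_x: x\in S\}$. Take any blue vertex $b_y$. Since $N_G[S]=V(G)$, there is some $x\in S$ with $y\in N_G[x]$. Then $r_x b_y\in E(\hat G)$, so $b_y$ has a neighbour in $R_S$. Hence $R_S\subseteq R$ is a red-blue dominating set.

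\emph{Reverse direction.} Suppose $R_S$ red-blue dominates $B$. For any $y\in V(G)$, the vertex $b_y$ has a neighbour $r_x\in R_S$. By the identity above, $y\in N_G[x]$ with $x\in S$. So $N_G[S]=V(G)$.

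Because the correspondence $x\mapsto r_x$ is a bijection $V(G)\to R$, the equivalence holds for every $S$, and sizes are preserved. Consequently, inclusion-wise minimal dominating sets correspond to minimal red-blue dominating sets, and the bound $|S|\le k$ carries over. This is the form in which the observation is used in \Cref{lemma:inductive-core}.

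I expect no real obstacle here. The only step needing care is the diagonal edges $r_xb_x$, which encode self-domination, i.e.\ the use of closed rather than open neighbourhoods. Without them the equivalence would fail for vertices that dominate only themselves.
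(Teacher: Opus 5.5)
Your proof is correct and matches the paper, which states this observation without proof and relies on exactly the identity you isolate: the blue neighbourhood of the red copy $r_x$ in $\hat{G}$ is the blue copy of $N_G[x]$. Your remark that the bijection $x\mapsto r_x$ preserves inclusion, and hence size and minimality, is also what the paper uses implicitly when it says the construction preserves all minimal dominating sets of $G$.
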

This preserves the dominating sets of $G$. In particular, it preserves all the minimal dominating sets of $G$. This is an observation that will be used in the following pivotal lemma.


\begin{restatable}[\app]{lemma}{inductivecore}
\label{lemma:inductive-core}
Let $(\mathcal{U},\mathcal{F},f,g)$ be a relevant tuple of a graph $G$. Suppose that:
\begin{enumerate}
    \item $\mathcal F$ is $d$-semi-ladder-free;
    \item the image of $f$ is a $k$-dominator core of $G$;
    \item the image of $g$ is a $k$-grouped domination core of $G$; and
    \item for every $D\subseteq V(G)$ with $|D|\le k$, $D$ dominates $G$ if and only if the corresponding red vertices dominate the current red-blue instance $G'(\mathcal F,\mathcal U,f,g)$.
\end{enumerate}
Then \textsc{FindCore} outputs a $k$-grouped domination core of size at most $k^d$ and a $k$-dominator core of size at most $\Oh(k^{d^2})$.
\end{restatable}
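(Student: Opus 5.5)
I will prove the lemma by strong induction on $|\mathcal U|$, following the recursion of \textsc{FindCore}. The conditions 1--4 are the invariant carried along the recursion.

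\emph{Terminating calls.} If $|\mathcal U|\le k^d$, the output is $(f(\mathcal F),g(\mathcal U))$. By conditions 2 and 3 these are a $k$-dominator core and a $k$-grouped domination core. The grouped core has at most $|\mathcal U|\le k^d$ batches. For the dominator core I use \Cref{prop : bounding number of neighbourhoods}: since no universe element lies in all sets (we excluded dominating sets of size one), the number of distinct sets in $\mathcal F$ is $\Oh(|\mathcal U|^d)=\Oh(k^{d^2})$. Because \textsc{Group} merges duplicate images, $|f(\mathcal F)|$ is at most this number.

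If instead no $S\in\intmath(\mathcal F)$ with $|S|>k^{\ell(S)}$ exists, then in particular $\mathcal U\in\intmath(\mathcal F)$. By \Cref{prop : bounded_length_of_int_closure}, $\ell(\mathcal U)\le d$, so $|\mathcal U|\le k^d$ and we are in the previous case. This is exactly \Cref{obs:existance of grouping}.

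\emph{Recursive step.} Here $|\mathcal U|>k^d$ and $S$ is chosen with $|S|>k^{\ell(S)}$ and $\ell(S)$ minimal. Since $|S|\ge 2$, we have $|\mathcal U'|<|\mathcal U|$, so it suffices to verify conditions 1--4 for $(\mathcal U',\mathcal F',f',g')$ and apply the induction hypothesis. Condition 1 follows from \Cref{prop:semiladder closed grouping}.

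The key claim is that every minimal dominating set $D$ with $|D|\le k$ contains a vertex whose red copy is adjacent to all of $\bigcup_{s\in S}g(s)$ in $G'$. To prove it, suppose no vertex of $D$ covers all of $S$. Each $A\in\mathcal F$ with $S\not\subseteq A$ meets $S$ in $A\cap S\in\intmath(\mathcal F)$. This is a proper subset of $S$, so $\ell(A\cap S)<\ell(S)$. By minimality of $\ell(S)$, we get $|A\cap S|\le k^{\ell(A\cap S)}\le k^{\ell(S)-1}$. Then $k$ such sets cover at most $k^{\ell(S)}<|S|$ elements of $S$.

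By condition 3, each batch $g(s)$ must be dominated wholly by one vertex of $D$. By condition 4, that vertex's red copy is adjacent to the whole batch only via sets $A\ni s$. So $D$ would leave some $s\in S$ uncovered, which is a contradiction.

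\emph{Verifying conditions 2--4.} Given the claim, the new batch $g'(v_S)=\bigcup_{s\in S}g(s)$ satisfies collective domination. Preservation is inherited, since dominating every new batch implies dominating every old batch in $S$. This gives condition 3, and disjointness is kept.

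For condition 4, the new red-blue graph only deletes edges from red vertices whose set $A$ does not contain $S$ to blue vertices of $\bigcup_{s\in S}g(s)$. I then argue two directions. A $k$-set dominating $G$ contains, by the claim applied to a minimal subset, a full dominator of the group, so it still dominates the new instance. Conversely, the new graph is a subgraph of the old one, so domination transfers back.

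For condition 2, sets of $\mathcal F$ identified under $\sigma$ have identical images. Their $f$-classes therefore have identical red neighbourhoods, which gives the replacement property. Unique contribution holds because two such vertices in a minimal $D$ would make one redundant. Enclosure is unchanged because no red vertex is deleted.

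The main obstacle is the claim. Specifically, I must check that adjacency of a red vertex to an entire batch $g(s)$ in $G'$ is governed by membership $s\in A$ alone. This needs the invariant that no red vertex partially sees a batch, which I would add explicitly to the induction as a fifth maintained property.
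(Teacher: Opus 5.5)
\textbf{Overall approach.} Your skeleton is the paper's: induction on $|\mathcal U|$; the base case via Guillemot's counting bound; the observation that $\mathcal U\in\intmath(\mathcal F)$ forces $|\mathcal U|\le k^d$; the choice of $S$ with $\ell(S)$ minimal; the estimate $|A\cap S|\le k^{\ell(S)-1}$ for $S\not\subseteq A$; and re-verifying conditions 1--4 after \textsc{Group}.

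\textbf{The gap.} It lies in the last step of your key claim. From condition 3 you obtain some $x\in D$ that dominates all of $g(s)$ \emph{in $G$}. You then assert ``by condition 4'' that this forces $s\in A$ for the set $A$ with $x\in f(A)$. Condition 4 only compares domination by whole sets of size at most $k$. It gives no vertex-by-vertex link between domination in $G$ and adjacency in $G'(\mathcal F,\mathcal U,f,g)$, and nothing in hypotheses 1--4 excludes a vertex that dominates $g(s)$ in $G$ while $s\notin A$. Your proposed fifth invariant does not close this. In $G'$, a red vertex of $f(A)$ is by definition adjacent to exactly $\bigcup_{u\in A}g(u)$, so ``no red vertex partially sees a batch'' holds there automatically and says nothing about $G$. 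Read in $G$, the invariant fails in general after the first grouping: any vertex whose set $A$ satisfies $\emptyset\neq A\cap S\subsetneq S$ dominates only part of $g'(v_S)$.

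\textbf{The paper's fix.} The paper avoids this by not using condition 3 here at all. Since $D$ dominates $G$ and $|D|\le k$, condition 4 says the red copy of $D$ dominates $G'$. Suppose no vertex of $D$ lies in some $f(A)$ with $S\subseteq A$. Your count then yields an $s\in S$ contained in none of the corresponding sets $A$. Because batches are pairwise disjoint and nonempty, no red vertex of $D$ is adjacent to any vertex of $g(s)$, a contradiction. This works for every such $D$, minimal or not.

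\textbf{Alternative fix for your route.} The invariant you actually need is: for $x\in f(A)$ and $u\in\mathcal U$, $x$ dominates all of $g(u)$ in $G$ if and only if $u\in A$. It holds initially. It survives \textsc{Group}, since for $x\in f(A)$ we have $v_S\in\sigma(A)$ iff $S\subseteq A$ iff $x$ dominates every $g(s)$ with $s\in S$ iff $x$ dominates $g'(v_S)$; elements outside $S$ are unaffected.

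\textbf{A second missing link.} Passing from your claim (adjacency in $G'$) to collective domination (domination in $G$) also requires red-blue edges to be genuine domination relations in $G$, i.e.\ that $G'$ is a subgraph of the red-blue graph $\hat{G}$. The main-text version of the lemma carries this as part of condition 4, the appendix proof uses it only implicitly, and the invariant above supplies it as well.
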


\begin{proof}
We prove the lemma by induction on $|\mathcal U|$.
If $|\mathcal U|\le k^d$, then the image of $g$ already consists of at most $k^d$ batches. Moreover, by the set-counting bound of Guillemot (\Cref{prop : bounding number of neighbourhoods}), the number of distinct sets in the image of $f$ is $\Oh(k^{d^2})$. Hence the algorithm outputs a $k$-grouped domination core of size at most $k^d$ and a $k$-dominator core of size at most $\Oh(k^{d^2})$.
Assume that the statement holds for every relevant tuple
$(\mathcal U,\mathcal F,f,g)$ satisfying the four assumptions with
$|\mathcal U|<t$. A vertex $v$ is said to \emph{partially dominate} a set of vertices $B$ if
$v$ does not dominate all
vertices of $B$.

\Ma{Consider an instance with $|\mathcal U|=t>k^d$.
If there is no set $S\in\intmath(\mathcal F)$ satisfying
$|S|>k^{\ell(S)}$,
then $|\mathcal U| \le k^d$ as $U$ is also in $\intmath(\mathcal F)$. Hence, we arrive at a contradiction. Therefore, there exists a set $S\in\intmath(\mathcal F)$ satisfying $|S|>k^{\ell(S)}$.} 
Let
$S=\{v_{i_1},\ldots,v_{i_q}\}\in\intmath(\mathcal F)$ be one with
minimum possible value $\ell(S)=i$.
We first show that every dominating set of the current red-blue
instance of size at most $k$ contains a vertex that dominates the
entire batch
$g(S)=\bigcup_{v\in S}g(v)$.

Indeed, by the minimality of $i$, every member of $\mathcal F$
intersects $S$ in at most $k^{i-1}$ elements unless it contains all of
$S$. Consequently, every red vertex that does not dominate the entire
batch $g(S)$ dominates vertices from at most $k^{i-1}$ batches indexed
by elements of $S$. Since $|S|>k^i$, at most $k$ such vertices cannot
collectively dominate every batch represented by $S$. Hence every
dominating set of size at most $k$ must contain a vertex adjacent to
every vertex of $g(S)$. \Ma{This proof idea is similar to the one used in Lemma 3.4 of \cite{DBLP:journals/toct/Guillemot25}.}

The algorithm now applies the \textsc{Group} operation and obtains a
new relevant tuple
$(\mathcal U',\mathcal F',f',g')$.
We verify that the recursive instance again satisfies the assumptions
of the lemma.
First, $\mathcal F'$ is $d$-semi-ladder-free by
\Cref{prop:semiladder closed grouping}.




Next, we show that the image of $f'$ is a $k$-dominator core. We verify the three properties of \Cref{def : Dominator Core}.
For the enclosure property, observe that
$
\bigcup_{A\in\mathcal F'}f'(A)
=
\bigcup_{A\in\mathcal F}f(A)$,
since each image set of $f'$ is obtained by merging one or more image
sets of $f$. Hence no vertex belonging to the original dominator core
is removed during the grouping operation. As the image of $f$ is a
$k$-dominator core by assumption, every inclusion-wise minimal
dominating set of size at most $k$ remains contained in the union of
the image of $f'$.
To prove the unique contribution property, let $D$ be an inclusion-wise
minimal dominating set of size at most $k$. Suppose that $D$ contains
two vertices $u$ and $v$ from the same image set of $f'$. By
construction of $f'$, the vertices $u$ and $v$ have identical
neighborhoods in the grouped red-blue instance, since the only edges
deleted during the grouping operation correspond to partial domination
of the newly created batch. Consequently, $u$ and $v$ dominate exactly
the same blue vertices. Removing either $u$ or $v$ therefore preserves
domination of the grouped red-blue instance. By Assumption~4, the
corresponding set is still a dominating set of $G$, contradicting the
minimality of $D$. Hence every minimal dominating set contains at most
one vertex from each image set of $f'$.
Finally, let $u\in D\cap f'(A)$ and let $v\in f'(A)$ be arbitrary.
Again, $u$ and $v$ have identical neighborhoods in the grouped
red-blue instance. Therefore replacing $u$ by $v$ preserves domination
of every blue vertex. By Assumption~4, the corresponding set also
dominates $G$. Thus
$(D\setminus\{u\})\cup\{v\}$
is a dominating set of $G$, establishing the replacement property.
Hence the image of $f'$ is a $k$-dominator core.

We next show that the image of $g'$ is a $k$-grouped domination core.
Let $\mathcal G'$ denote the image of $g'$.
We first verify the collective domination property. Let $D$ be an
inclusion-wise minimal dominating set of $G$ of size at most $k$. By
Assumption~4, the corresponding red vertices form a dominating set of
the grouped red-blue instance.
Consider any batch $B\in\mathcal G'$. If $B$ is one of the batches that
was not modified by the grouping operation, then the property follows
directly from the assumption that the image of $g$ is a
$k$-grouped domination core. Otherwise, $B=g'(v_S)=\bigcup_{s\in S}g(s)$.
By the counting argument proved above, every dominating set of size at
most $k$ contains a vertex that dominates every vertex of $B$. Hence
every batch of $\mathcal G'$ satisfies the collective domination
property.
It remains to prove the preservation property. Let $D$ be any subset of
vertices that dominates every batch of $\mathcal G'$. Every unchanged
batch of the original image of $g$ is also a batch of $\mathcal G'$ and
is therefore dominated by $D$. Moreover, if
$B=g'(v_S)=\bigcup_{s\in S}g(s)$, then dominating $B$ implies dominating
each original batch $g(s)$ for every $s\in S$, since each of them is a
subset of $B$. Therefore $D$ dominates every batch of the image of $g$.
As the image of $g$ is a $k$-grouped domination core, it follows that
$D$ is a dominating set of $G$.
Hence the image of $g'$ is a $k$-grouped domination core.

Finally, let $D$ be any subset of red vertices of size at most $k$.
Every feasible dominating set already contains a vertex adjacent to the
entire batch $g(S)$, so deleting edges corresponding to partial
domination cannot affect whether $D$ dominates the grouped red-blue
instance. Hence $D$ dominates the original red-blue instance if and
only if it dominates the grouped one. Together with Assumption~4, this
shows that the fourth assumption also holds for the recursive
instance.

Thus all four assumptions are satisfied by
$(\mathcal U',\mathcal F',f',g')$. Since
$|\mathcal U'|<|\mathcal U|$, the induction hypothesis applies.
Therefore the recursive call returns a $k$-grouped domination core of
size at most $k^d$ and a $k$-dominator core of size at most
$\Oh(k^{d^2})$. The same conclusion therefore holds for the original
instance, completing the proof.
\end{proof}

Armed with \Cref{lemma:inductive-core}, we now invoke \textsc{FindCore} on the original graph.
    We initialize the relevant tuple $(\mathcal{U},\mathcal{F},f,g)$ as follows: 
    \[ \U=V(G), \quad \F = \{F_v=N_G[v] \mid v\in V(G)\}, \quad f(F_v)=\{v\}, \textrm{ and } g(v)=\{v\}.\]
    
    
    We verify that this initialization satisfies all four conditions of \Cref{lemma:inductive-core}:
    \begin{itemize}
        \item By \Cref{def:graph-semiladder}, the closed-neighborhood family $\mathcal F$ is \dslfree.  
        \item The image set of $f$, $f(\F)$, forms a $k$-dominator core of $G$ since it consists of singletons $\{v\}$ for each $v \in V(G)$; hence, every minimal dominating set is trivially contained in this family and intersects each set in at most one vertex.
        \item The image set of $g$ forms a $k$-grouped domination core of $G$ for the same reason, as any minimal dominating set dominates every singleton $\{v\}$.
        \item The auxiliary graph $G'(\F,\U,f,g)$ coincides with $\hat{G}$, since for every $S = N_G[v] \in \mathcal{F}$, all vertices in $f(S) = \{v\}$ are adjacent to all vertices in $g(v) = \{v\}$ by construction. Therefore, the red-blue invariant holds for all sets of size at most $k$.
    \end{itemize}
    
    Hence, all the preconditions of \Cref{lemma:inductive-core} hold for this instance.
    Applying \textsc{FindCore}$( \mathcal{U}, \mathcal{F}, f, g)$ yields the desired output. 
    This completes the proof of \Cref{thm:existence-domination-dominator-core}.
\end{proof}
\clearpage
\subsection{Pseudo-code of the \textsc{Group} operation}
\begin{algorithm}[H]
    \caption{\textsc{Group}$(\U, \F,  f, g, S)$}
    \label{alg:group}
    \begin{algorithmic}[1]
        \Require $\F \subseteq 2^{\U}$, functions $f : \F \to 2^{V(G)}$, $g : \U \to 2^{V(G)}$, and a subset $S \subseteq \U$.
        \State Create a new representative $v_S\notin \U$ and set $\U'\gets (\U\setminus S)\cup \{v_S\}$.
        \For{each $A\in \F$}
            \State $\sigma(A)\gets (A\setminus S)\cup\{v_S\}$ if $S\subseteq A$, and $\sigma(A)\gets A\setminus S$ otherwise.
        \EndFor
        \State $\F'\gets \{\sigma(A):A\in\F\}$, with duplicate images identified.
        \For{each $B\in\F'$}
            \State $f'(B)\gets \bigcup_{A\in\F:\sigma(A)=B} f(A)$.
        \EndFor
        \State $g'(u)\gets g(u)$ for every $u\in \U\setminus S$, and $g'(v_S)\gets \bigcup_{s\in S}g(s)$.
        \State \Return $(\U', \F',f',g')$.
    \end{algorithmic}
\end{algorithm}

\section{Missing proofs and details of \Cref{sec:lossy-kernel}}
\label{sec:app-lossy-kernel}
\additiveone*
{
    \begin{proof}
        Let $S$ be a set of vertices of a graph $H$.
        Let us denote $\bigcap_{v \in S}N_H[v]$ as $N_H[S]$ and $\bigcap_{v' \in \hg(v)}N_H[v']$ as $N_H[\hg(v)]$. Also, let us denote $\bigcap_{v \in S}N_H[\hg(v)]$ as $N_H[\hg(S)]$.
        
        Consider a chain $\mathcal{C}$ of length $l$ in $\intmath(\hat{G})$. By deleting all heavy vertices from each set in the chain, we show that the resulting sets belong to $\intmath(G)$, thereby proving the existence of a chain in $\intmath(G)$ of length $l-2$.

        To establish this, we examine the structural properties of the intersection closure. Any set in the closure can be viewed as the intersection of closed neighborhoods of a specific vertex subset. We define the functions
        $g_G : 2^{V(G)} \rightarrow 2^{V(G)}$ and $g_{\hat{G}} : 2^{V(\hat{G})} \rightarrow 2^{V(\hat{G})}$
        on the subsets of vertices for $G$ and $\hat{G}$, respectively. For a set $S$, let $g_G(S)$ denote an arbitrary set of vertices whose closed neighborhoods intersect to yield exactly $S$ in the graph $G$, and let $g_{\hat{G}}(S)$ denote an arbitrary set of vertices whose closed neighborhoods intersect to yield exactly $S$ in the graph $\hat{G}$. Under this formulation, we can verify the claim by leveraging the property that the heavy vertices form an independent set in ${\hat{G}}$.

        \begin{claim}
            \label{claim:containing-closures}
            For any set $S \in \intmath(\hat{G})$, the intersection $S \cap V(G)$ is an element of $\intmath(G)$.
        \end{claim}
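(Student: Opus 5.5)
Write $\intmath(G)$ for the intersection closure of the closed-neighborhood family $\{N_G[v] : v\in V(G)\}$ on universe $V(G)$, and similarly $\intmath(\hat G)$. Let $S\in\intmath(\hat G)$ and fix a nonempty set $X\subseteq V(\hat G)$ with $S=\bigcap_{x\in X}N_{\hat G}[x]$; the cases $S=V(\hat G)$ and $S=\emptyset$ are trivial, since $V(G)$ and $\emptyset$ lie in $\intmath(G)$. The plan is to exhibit $S\cap V(G)$ explicitly as an intersection of closed neighborhoods in $G$. We only need the vertices of $G$, so $S\cap V(G)=\bigcap_{x\in X}\bigl(N_{\hat G}[x]\cap V(G)\bigr)$, and we compute each factor $N_{\hat G}[x]\cap V(G)$ separately.

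First step: if $x\in V(G)$ is non-heavy, then by construction of $\hat G$ only edges incident to heavy vertices were added, so $N_{\hat G}[x]\cap V(G)=N_G[x]$. Second step: if $x=v_B$ is heavy, then its neighbors in $V(G)$ are exactly the vertices adjacent to all of $B$. Here we must check the closed versus open issue. A vertex $y\in V(G)$ lies in $\bigcap_{b\in B}N_G[b]$ iff $y$ is adjacent to or equal to each $b$; the construction adds $v_B y$ when $y$ is ``adjacent to all vertices in $S$''. I would read adjacency here in the closed sense (equivalently, $y$ dominates $B$), which matches the domination motivation of the heavy vertices. Under that reading $N_{\hat G}[v_B]\cap V(G)=\bigcap_{b\in B}N_G[b]=N_G[\hat g(v_B)]$. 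Combining, $S\cap V(G)=\bigcap_{x\in X}\bigcap_{u\in\hat g(x)}N_G[u]$, a finite intersection of members of the neighborhood family of $G$, hence in $\intmath(G)$.

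The main obstacle is this closed/open convention in the definition of the heavy vertices. If the construction literally uses open adjacency, then a vertex $b\in B$ would not be a neighbor of $v_B$ unless $b$ is adjacent to all of $B\setminus\{b\}$, and the factor becomes $\bigcap_{b}N_G(b)$, which need not be in $\intmath(G)$. I would first try to fix the closed-sense reading, justified by the lifting argument of \Cref{lemma:soln-lifting}, where $b$ replaces $v_B$ and must dominate what $v_B$ dominated. Failing that, I would observe that the claim is only used to transfer chains, and that replacing open by closed intersections changes sets only on $B$, which costs an extra chain step. This is plausibly the source of the $d+2$ rather than $d+1$ in \Cref{lemma:additiveone}.
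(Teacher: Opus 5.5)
Your argument is correct and matches the paper's: writing $S=\bigcap_{x\in X}N_{\hat G}[x]$ and replacing each heavy vertex by its batch gives $S\cap V(G)=\bigcap_{u\in\hat g(X)}N_G[u]\in\intmath(G)$. The paper's one-line proof asserts exactly this equality, which presupposes the closed (\emph{dominates}) reading of the heavy-vertex edges that you adopt, so your fallback is unnecessary (as you note, the open reading does not work in general); also, the $d+2$ in \Cref{lemma:additiveone} comes from the paper's observation that at most two consecutive pairs of a chain can collapse when heavy vertices are deleted, not from this convention.
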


        \begin{claimproof}
            Suppose $S = N_{\hG}[X]$ for some vertex set $X \subseteq V(\hat{G})$. We observe that $S \cap V(G) = N_G[\hg(X)]$, where $\hg(X)$ is the set of vertices in $G$ obtained by replacing each heavy vertex in $X$ with its corresponding vertex set in $G$. The intersection of neighborhoods in $G$ of this expanded set $\hg(X)$ yields exactly $S \cap V(G)$. Since $S \cap V(G)$ is expressed as an intersection of closed neighborhoods in $G$, it follows that $S \cap V(G) \in \intmath(G)$.
        \end{claimproof}
        Let us examine two consecutive sets $S_1$ and $S_2$ in the chain $C$ of $\intmath(\hat{G})$ such that $S_1 \supset S_2$. We say $u$ is in the heavy vertex $s$ if $\hg(s)$ contains $u$.
        \begin{claim}
            If $S_2$ contains at least two heavy vertices, then a heavy vertex $s$ is an element of $S_2$ if and only if $\hg(s) \subseteq S_2$.
        \end{claim}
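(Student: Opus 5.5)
The plan is to unfold $S_2$ as an intersection of closed neighborhoods in $\hat G$, show that such a representation can be chosen using only original vertices, and then read off the condition on heavy vertices directly from how $\hat G$ is built.

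\emph{Step 1 (fix a representation).} Since $S_2\in\intmath(\hat G)$, write $S_2=\bigcap_{x\in X}N_{\hat G}[x]$ with $X=g_{\hat G}(S_2)$. If $S_2$ is the artificial top set $V(\hat G)$, take $X=\emptyset$; then every heavy $s$ lies in $S_2$ and $\hat g(s)\subseteq V(G)\subseteq S_2$, so the claim is trivial. Assume from now on that $X\neq\emptyset$.

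\emph{Step 2 (no heavy vertex in $X$).} This is where the hypothesis of two heavy vertices is used. Heavy vertices form an independent set in $\hat G$. Hence for a heavy $h$, the set $N_{\hat G}[h]$ contains exactly one heavy vertex, namely $h$. If some heavy $h$ were in $X$, then $S_2\subseteq N_{\hat G}[h]$ would contain at most one heavy vertex, a contradiction. So $X\subseteq V(G)$.

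\emph{Step 3 (the equivalence).} First record that $N_{\hat G}[x]\cap V(G)=N_G[x]$ for every $x\in V(G)$, since building $\hat G$ adds no edges among original vertices. By construction, a heavy $s=v_B$ is adjacent to an original vertex $x$ exactly when $x$ dominates $B=\hat g(s)$, that is, $\hat g(s)\subseteq N_G[x]$. We read "adjacent to all vertices in $S$" in this dominating sense, which is the convention consistent with batches being dominated as a whole. Then
\[
s\in S_2 \iff \forall x\in X:\ \hat g(s)\subseteq N_G[x] \iff \hat g(s)\subseteq \bigcap_{x\in X}N_G[x]=S_2\cap V(G).
\]
Because $\hat g(s)\subseteq V(G)$, the last condition is the same as $\hat g(s)\subseteq S_2$, which proves the claim.

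The step I expect to be the main obstacle is the adjacency convention in Step 3. If "adjacent to all vertices of $S$" is read with open neighborhoods, then a vertex $x\in B$ is never adjacent to $v_B$. The equivalence would then fail exactly when $X$ meets $\hat g(s)$, and I would have to handle that case separately. My first move is to fix the dominating-sense reading explicitly, which matches the definition of batch domination used for grouped domination cores. The claim that heavy vertices are pairwise non-adjacent is immediate, since edges are only added between a heavy vertex and original vertices.
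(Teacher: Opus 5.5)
Your proposal is correct and follows the paper's route. The hypothesis that $S_2$ has at least two heavy vertices, together with heavy vertices being independent, forces the representing set $g_{\hat G}(S_2)$ into $V(G)$; membership of $s$ in $S_2$ then unfolds through the construction of $\hat G$ into $\hat g(s)\subseteq \bigcap_{x} N_G[x] = S_2\cap V(G)$. Your explicit closed-neighborhood (``dominates'') reading of the construction is what the paper's final equivalence relies on without saying so, and stating it (together with the trivial case $X=\emptyset$) makes the argument cleaner without changing it.
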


        \begin{claimproof}
            Since $S_2$ contains at least two heavy vertices, we first observe that $g_{\hat{G}}(S_2)$ cannot contain any heavy vertices. This is because heavy vertices form an independent set in $\hat{G}$, and a heavy vertex $s$ is only contained in its either its own closed neighborhood or in that of a non-heavy vertex. Therefore, we must have $g_{\hat{G}}(S_2) \subseteq V(G)$.

            Recall that $S$ is defined as the intersection of the closed neighborhoods of vertices in $g_{\hat{G}}(S_2)$. A heavy vertex $s$ belongs to $S_2$ if and only if $s$ is adjacent to every vertex in $g_{\hat{G}}(S_2)$. By the construction of $\hat{G}$, a non-heavy vertex $v$ is adjacent to a heavy vertex $s$ if and only if $v$ is adjacent to every vertex in the set $\hg(s)$. Consequently, $s \in S_2$ if and only if every $v \in g_{\hat{G}}(S_2)$ is adjacent to every $u \in \hg(s)$, which is equivalent to the condition $\hg(s) \subseteq S_2$.
        \end{claimproof}

            
        The above claim establishes that if there are atleast two heavy vertices in $S_2$ and if a heavy vertex belongs to $S_1$ but
        not to $S_2$, then at least one vertex of $\hg(s)$ belongs to
        $S_1\setminus S_2$. Consequently,
        $S_1\cap V(G)\neq S_2\cap V(G)$ whenever $S_2$ contains at least two
        heavy vertices. 
        
        By \Cref{claim:containing-closures}, both
        $S_1\cap V(G)$ and $S_2\cap V(G)$ belong to $\intmath(G)$.
        If neither $S_1$ nor $S_2$ contains a heavy vertex, then
        $S_1\cap V(G)=S_1$ and $S_2\cap V(G)=S_2$, so they are again distinct
        members of $\intmath(G)$.
        If $S_1\cap V(G) \neq S_2\cap V(G)$, then they form two different sets in the $\intmath(G)$ and $S_1\cap V(G) \supset S_2\cap V(G)$ as $S_1 \supset S_2$ and none of the heavy vertices are present in $V(G)$.

        It therefore remains to consider the case where $S_2$ contains at most one heavy vertex and $S_1 \cap V(G) = S_2 \cap V(G)$.
        Let us first consider pairs of consecutive sets in the chain where $S_2$ contains exactly one heavy vertex, and $S_1 \cap V(G) = S_2 \cap V(G)$.
        Observe that the number of heavy vertices in $S_1$ is strictly more than that in $S_2$.
        Thus, such a pair can occur only once in the entire chain. 
        Next, let us consider pairs of consecutive sets in the chain where $S_2$ contains no heavy vertex, and $S_1 \cap V(G) = S_2 \cap V(G)$.
        Again, such a pair can occur only once in the entire chain. 
        
        Once all heavy vertices have been removed from the chain, every set is in $\intmath(G)$, and by our above analysis, at least $l-1$ of them are pairwise distinct. 
        Therefore, deleting the heavy vertices from every set in the chain
        produces a chain in $\intmath(G)$ whose length is at least $l-2$. Since
        $\intmath(G)$ is $d$-semi-ladder-free, every chain in
        $\intmath(G)$ has length at most $d$. Hence
        $
        l-2\le d
        $,
        or equivalently,
        $
        l\le d+2
        $.
        By \Cref{prop : bounded_length_of_int_closure}, every semi-ladder
        has length at most the maximum chain length. Hence $\hat G$ has
        semi-ladder index at most $d+2$, proving the lemma.

    \end{proof}
}
\lossycompression*
\begin{proof}
    {

The following propositions will help us in our kernel.
\begin{definition}[\cite{DBLP:journals/siamdm/EibenKMPS19}]
    Let $D$ be a connected graph and $t\in {\mathbb N}$.  A
    \emph{$(D,t)$-covering family} is a family $\mathcal{F}(D,t)$ of
    connected subgraphs of $D$ such that $(i)$ for each
    $T \in {\mathcal{F}(D,t)}$, $\vert V(T) \vert \leq 2t$ and $(ii)$
    $\bigcup_{T \in \mathcal{F}(D,t)}{V(T)} = V(D)$.
\end{definition}

\begin{proposition}[\cite{DBLP:journals/siamdm/EibenKMPS19}]\label{prop:covering-family}
    Let $D$ be a connected graph and $t\in {\mathbb N}$.  Then there is
    a $(D,t)$-covering family $\mathcal{F}(D,t)$ such that
    $|\mathcal{F}(D,t)| \leq \frac{|V(D)|}{t} + 1$, and
    $\sum_{T \in \mathcal{F}(D,t)}{|V(T)|} \leq (1 + \frac{1}{t}) |V(D)|
        + 1$.
\end{proposition}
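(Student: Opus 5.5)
The proposition to prove is the covering-family statement: for a connected graph $D$ and $t\in\mathbb N$, there is a $(D,t)$-covering family with at most $|V(D)|/t+1$ members and total size at most $(1+1/t)|V(D)|+1$.

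I will first reduce to trees. Take a spanning tree $T$ of $D$, root it at an arbitrary vertex $r$, and build the family by peeling subtrees from the bottom. Connected subgraphs of $T$ are connected subgraphs of $D$, so this reduction loses nothing. Throughout, let $n=|V(D)|$. If $n\le 2t$, the single subgraph $D$ suffices, so assume $n>2t$.

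\textbf{Peeling step.} Pick a deepest vertex $v$ whose subtree $T_v$ has at least $t$ vertices. By the choice of $v$, each child $c$ of $v$ has $|T_c|\le t-1$. Scan the children of $v$ in order and greedily group consecutive child subtrees until their total size first reaches at least $t-1$. Each group, together with $v$, is connected and has between $t$ and $2t-1$ vertices.

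\textbf{Output and recursion.} Output each group plus $v$ as a member of the family, and delete the grouped child subtrees. Any leftover children, of total size less than $t-1$, stay attached to $v$, so $v$ remains in the tree and the tree stays connected. Each emitted piece contains at least $t-1$ \emph{new} vertices and exactly one repeated vertex, namely $v$. Iterate on the remaining tree. When the remaining tree has fewer than $2t$ vertices, output it as the final member.

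\textbf{Counting.} Charge each emitted piece to its at least $t$ vertices, counting $v$ the first time $v$ is removed. A cleaner way is to aim for pieces with at least $t$ distinct new vertices each, and duplicates bounded by one per piece. That gives at most $n/t+1$ pieces, and total size at most $n+(\#\text{pieces})\le n+n/t+1$.

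\textbf{Main obstacle.} The delicate part is making "at least $t$ new vertices per piece" hold exactly. The greedy grouping as described only guarantees $t-1$ new vertices, and off-by-one errors there spoil both bounds. I would fix this by letting the groups reach size at least $t$ from child subtrees alone; since each child subtree has fewer than $t$ vertices, this still caps a piece at $2t-1$ vertices including $v$. When the leftover children plus $v$ form fewer than $t$ vertices, $v$ is not deleted. When only $v$'s own subtree of size less than $2t$ remains, I emit it whole with $v$ as a new vertex. Then every piece except possibly the last has $t$ new vertices, so the number of pieces is at most $\lfloor n/t\rfloor+1$, and the only duplicates are the copies of the roots $v$, at most one per piece. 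This yields the stated bounds.
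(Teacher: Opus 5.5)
Your final version is correct and is essentially the standard spanning-tree peeling argument behind this proposition (the paper itself imports it from Eiben et al.\ without proof): every piece except the last permanently deletes at least $t$ vertices and contains at most one vertex, its attachment vertex $v$, that stays in the tree, and this yields both bounds. The one case you should state explicitly is $|T_v|=t$ with the children summing to exactly $t-1$, where $T_v$ must be emitted whole; this happens at every step when $t=1$. The usual formulation avoids this case by taking $v$ deepest with $|T_v|\ge t+1$: then each child subtree has at most $t$ vertices and together they have at least $t$, so a single group plus $v$ has at most $2t$ vertices.
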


By leveraging the structural properties of $\hat{G}$ and the preceding propositions, we establish that \cds for \dslfree graphs admits a polynomial $(1+\epsilon)$-lossy compression to \cds for \deslfree graphs.
}
\solnlifting*
{
    \begin{proof}
        We repeatedly remove heavy vertices from the solution.  Let $h=v_B\in S'\cap H$ be a heavy vertex.  If $|S'|=1$, then the existence of the connected dominating set $\{h\}$ implies that some vertex of $G$ dominates the batch $B$; replacing $h$ by such a vertex gives a connected dominating set of $G$ of size one.  Assume therefore that $|S'|>1$.

        Since $G'[S']$ is connected, $h$ has a neighbor $u\in S'\setminus\{h\}$.  By construction of $\hat G$, every non-heavy neighbor of $h$ is adjacent in $G$ to every vertex of the batch $B$.  Choose any $b\in B$ and set $S''=(S'\setminus\{h\})\cup\{b\}$.  The vertex $b$ is adjacent to every selected neighbor of $h$, so replacing $h$ by $b$ preserves connectivity.  Moreover, every original vertex whose domination was certified through $h$ is dominated by $b$.  Thus $S''$ is a connected dominating set with one fewer heavy vertex and $|S''|\le |S'|$.  Iterating this replacement yields a connected dominating set contained in $V(G)$; by the preservation property of the $k$-grouped domination core, it dominates $G$.
    \end{proof}
}
\lossycompsizeapprox*
{
    \begin{proof}
        The size bound follows from the core bounds.  We have $|H\cup NH|\le |\G|\le k^d$.  The algorithm marks only trees for subsets of the dominator core of size at most $2t$, and each such tree has at most $2t$ vertices.  Since $|\C|=\Oh(k^{d^2})$, we get $|M|\le |\C|^{2t}\cdot 2t=k^{\Oh(td^2)}$, and $|M\cup C_{\rm conn}|\le 3|M|$.  Hence $|V(G')|\le k^d+k^{\Oh(td^2)}$.  With $t=\Theta(1/\epsilon)$ this is $k^{\Oh(d^2/\epsilon)}$.

        We next prove the approximation guarantee.  Let $L=\lceil 8/\epsilon\rceil$.  As a preprocessing step, the compression algorithm first checks by brute force whether $G$ has a connected dominating set of size at most $L$.  If it does, it stores such an optimum small solution and outputs a constant-size equivalent instance whose lifting algorithm returns it.  Thus assume below that $\operatorname{OPT}(I,k)>L$.

        Let $D^\star$ be an optimum connected dominating set of $G$ with $|D^\star|\le k$; if no such set exists, then the claim is immediate from the definition of the thresholded objective.  Let $D_0\subseteq D^\star$ be an inclusion-wise minimal dominating set.  Consider a $(D^\star,t)$-covering family as in \Cref{prop:covering-family}.  For each connected subgraph $T$ in this family, let $Q_T$ be the set of dominator-core classes that contain vertices of $T\cap D_0$.  The unique-contribution property of the dominator core implies $|Q_T|\le |V(T)|\le 2t$, so the compression algorithm marks a tree of size at most $2t$ hitting all groups in $Q_T$.  Taking the union of these marked trees gives a set that dominates all batches of the $k$-grouped domination core, and hence dominates $G$.

        The covering-family bound gives total marked size at most $(1+1/t)|D^\star|+1$.  These marked vertices induce at most $|D^\star|/t+1$ connected components, so \Cref{prop:approx-cds-by-ds} adds at most $2|D^\star|/t+2$ connector vertices.  Therefore
        \[
        \operatorname{OPT}(I',k)\le (1+3/t)|D^\star|+3.
        \]
        Choose $t\ge 8/\epsilon$.  Since $|D^\star|>8/\epsilon$, the additive term satisfies $3\le (\epsilon/2)|D^\star|$, and $3|D^\star|/t\le (3\epsilon/8)|D^\star|$.  Hence $\operatorname{OPT}(I',k)\le (1+\epsilon)|D^\star|=(1+\epsilon)\operatorname{OPT}(I,k)$.
    \end{proof}
}

    By \Cref{lemma:soln-lifting}, we have $CDS(G,k,S) \le CDS(G',k,S)$.
By \Cref{lem:lossy-comp-size-approx}, we have ${\rm OPT}(I',k) \le (1+\varepsilon){\rm OPT}(I,k)$.
Therefore, $$\frac{CDS(G,k,S)}{{\rm OPT}(I,k)} \le \frac{CDS(G',k,S)}{\frac{1}{1+\epsilon}{\rm OPT}(I',k)} \le (1+\epsilon) \frac{CDS(G',k,S)}{{\rm OPT}(I',k)}.$$
Combining this with the solution lifting algorithm in \Cref{lemma:soln-lifting} and the above approximation guarantee completes the proof of \Cref{thm: lossy_compression}.
\end{proof}

\end{document}